\def\Box{\vcenter{\vbox{\hrule\hbox{\vrule
				\vbox to 8.8pt{\hbox to 10pt{}\vfill}\vrule}\hrule}}}
\newtheorem{thm}{Theorem}[section]
\newtheorem{lemma}[thm]{Lemma}
\newtheorem{corollary}[thm]{Corollary}
\newtheorem{prop}[thm]{Proposition}
\newtheorem{example}[thm]{Example}
\numberwithin{equation}{section}
\newtheorem{remark}[thm]{Remark}
\definecolor{Purple}{rgb}{0.5,0,0.5}
\def\a{\alpha} \def\b{\beta}
\begin{document}
	\newcommand{\stopthm}{\begin{flushright}
			\(\box \;\;\;\;\;\;\;\;\;\; \)
	\end{flushright}}
	\newcommand{\symfont}{\fam \mathfam}
	
	\title{On the equivalence of NMDS codes}
	
	\date{}

	\author[add1]{Jianbing Lu\corref{cor1}}\ead{jianbinglu@nudt.edu.cn}\cortext[cor1]{Corresponding author}
	\author[add1]{Yue Zhou}\ead{yue.zhou.ovgu@gmail.com}
	\address[add1]{Department of Mathematics, National University of Defense Technology, Changsha 410073, China}
	
	\begin{abstract}
		An $[n,k,d]$ linear code is said to be maximum distance separable (MDS) or almost maximum distance separable (AMDS) if $d=n-k+1$ or $d=n-k$, respectively. If a code and its dual code are both AMDS, then the code is  said to be near maximum distance separable (NMDS). For $k=3$ and $k=4$, there are many constructions of NMDS codes by adding some suitable projective points to arcs in $\mathrm{PG}(k-1,q)$. In this paper, we consider the monomial equivalence problem for some NMDS codes with the same  weight distributions  and present  new constructions of NMDS codes.
		\newline
		
		\noindent\text{Keywords:} NMDS code; Arc; Hyperoval; Monomial equivalence
		
		\noindent\text{Mathematics Subject Classification (2020)}: 94B05 51E21 
	\end{abstract}	
	
	\maketitle
	
	\section{Introduction}\label{introduction}
	
	An  $[n,k]_q$ \emph{linear code} $\mathcal{C}$ is a $k$-dimensional subspace of $\mathbb{F}^{n}_{q}$. The  parameter $n$ is called the  \emph{length} of  $\mathcal{C}$. The vectors in  $\mathcal{C}$ are called \emph{codewords}. Let $c\in\mathcal{C}$ be a codeword.   The \emph{support} of  $c=(c_1,c_2,\dots, c_n)$ is defined  as $supp(c)=\{i\in[n]:c_i\neq0\}$, where $[n]$  denotes the set $\{1,2,\dots,n\}$.  The \emph{weight} of $c$ is the cardinality of its support, i.e.,
	$wt(c)=|supp(c)|$. An $[n,k,d]_q$ linear code $\mathcal{C}$  is  an  $[n,k]_q$ linear code with \emph{minimum distance} $d$, defined as $d=\text{min}_{0\neq c\in\mathcal{C}} wt(c)$. Usually, if the context is clear, the subscript $q$ is omitted  by convention in the sequel. A \emph{generator matrix} of $\mathcal{C}$ is a $k\times n$ matrix $G$ whose rows form a basis of $\mathcal{C}$ as a $\mathbb{F}_q$-vector space. The \emph{dual code} of $\mathcal{C}$ is  its orthogonal subspace $\mathcal{C}^{\perp}$ with respect to the Euclidean inner product, defined as $\mathcal{C}^{\perp}=\{x\in\mathbb{F}^{n}_{q}: x\cdot c=0, \forall c\in\mathcal{C}\}$.  A \emph{monomial matrix} over $\mathbb{F}_q$ is a square matrix with exactly one nonzero element from $\mathbb{F}_q$ in each row and column. Such a  matrix $M$ can be expressed as $DP$ or  $PD_1$, where $D$ and $D_1$ are diagonal matrices and $P$ is a permutation matrix. Two $[n,k,d]$ linear codes $C_1$ and $C_2$, with generator matrices $G_1$ and $G_2$ respectively, are \emph{monomially equivalent} if  there exist  a matrix $L\in \mathrm{GL}(k,q)$ and an $n\times n$ monomial matrix $M$ such that $LG_1M=G_2$. 
	
	The Singleton bound \cite{Singleton1964} states a relationship among $n,k$ and $d$: $d\leq n-k+1$. The codes meeting this bound are called \emph{maximum distance separable} (MDS) codes. The famous MDS conjecture states that for a nontrivial MDS code, the maximum length $n$ satisfies:  $n\leq q+2$ if $q$ is even and $k=3$ or $k=q-1$, and  $n\leq  q+1$ otherwise \cite{Segre1955}.  This conjecture was extensively investigated  by Ball in \cite{Ball2012,Ball}. 	A code  with parameters of the form $[n,k,n-k]$ is said to be \emph{almost maximum distance separable} (AMDS). If both a linear code and its dual code are AMDS,   the   code is 
	said to be \emph{near maximum distance separable} (NMDS), which  was introduced by Dodunekov and Landjev \cite{Dodunekov1995}. Let  $m(k,q)$ denote the maximum possible length  for a  nontrivial $q$-ary NMDS code. Similarly to the MDS conjecture, the following conjecture for NMDS codes was proposed in \cite{Dodunekov1995,Landjev2015}: $m(k,q)\leq 2q+2$. Specially, for $k=3$ and $q\geq8$, it is conjecdtured that $m(3,q)\leq2q-1$, and it is known $m(3,q)=2q-1$ for $q=8,9$.
	
	Both MDS codes and NMDS codes can be investigated in the language of finite geometries. Let $\mathrm{PG}(r,q)$ denote the  $r$-dimensional projective space derived from $\mathbb{F}^{r+1}_q$, where its points correspond to the one-dimensional subspaces of $\mathbb{F}^{r+1}_q$, and its hyperplanes  correspond to the $r$-dimensional subspaces of $\mathbb{F}^{r+1}_q$. An \emph{$n$-arc} in $\mathrm{PG}(k-1,q)$ is a set of $n$ points such that  every hyperplane intersects  the arc in at most $k-1$ points. An arc is said to be \emph{maximal} if it has the largest possible number of points among all arcs. For an  $[n,k,d]$ linear code $\mathcal{C}$ with  generator matrix $G$, let $\mathcal{S}_{G}$ be the set of points in $\mathrm{PG}(k-1,q)$ corresponding to the  columns of $G$. For any nonzero vector $u\in\mathbb{F}^{k}_q$, we can define the  hyperplane $\mathcal{H}_u=\{x\in\mathbb{F}^{k}_q: x\cdot u=0\}$ in $\mathrm{PG}(k-1,q)$. The following lemma establishes a relationship between the weights of codewords in $\mathcal{C}$ and the hyperplanes of $\mathrm{PG}(k-1,q)$.		
	\begin{lemma}\label{relation}
		\cite[Lemma 5.16]{Ball2015} Let $u$ be a nonzero vector of $\mathbb{F}^{k}_q$. The codeword $uG$ has weight $\omega$ if and only if the hyperplane $\mathcal{H}_u$ in $\mathrm{PG}(k-1,q)$ contains $ |\mathcal{S}_{G}|-\omega$ points of $\mathcal{S}_{G}$, i.e., $|\mathcal{H}_u\cap \mathcal{S}_{G}|=n-wt(uG)$. Especially, any hyperplane in $\mathrm{PG}(k-1,q)$ intersects $\mathcal{S}_{G}$ in at most $n-d$ points. 
	\end{lemma}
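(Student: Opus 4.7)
The plan is to unravel the definitions and observe that the $i$-th coordinate of the codeword $uG$ vanishes precisely when the projective point corresponding to the $i$-th column of $G$ lies on the hyperplane $\mathcal{H}_u$. So the lemma is essentially a counting identity dressed in geometric language.

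Concretely, let $g_1,\dots,g_n \in \mathbb{F}_q^k$ denote the columns of $G$, so that $\mathcal{S}_G = \{[g_1],\dots,[g_n]\}$ in $\mathrm{PG}(k-1,q)$. The first step is to write $uG = (u\cdot g_1, u\cdot g_2,\dots,u\cdot g_n)$, so that by definition $\mathrm{wt}(uG) = |\{i\in[n]: u\cdot g_i \neq 0\}|$, and therefore $n-\mathrm{wt}(uG) = |\{i\in[n]: u\cdot g_i=0\}|$. The second step is to identify this last set with the intersection $\mathcal{H}_u \cap \mathcal{S}_G$: by the definition of $\mathcal{H}_u$, the relation $u\cdot g_i = 0$ is equivalent to $[g_i]\in \mathcal{H}_u$. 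Combining these two observations yields $|\mathcal{H}_u\cap \mathcal{S}_G| = n-\mathrm{wt}(uG)$, and the ``if and only if'' statement is then a trivial rewriting: $\mathrm{wt}(uG)=\omega$ holds exactly when $|\mathcal{H}_u\cap \mathcal{S}_G|=|\mathcal{S}_G|-\omega = n-\omega$.

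For the final assertion, I would invoke the minimum distance: since $u\neq 0$ implies $\mathrm{wt}(uG)\geq d$, the identity just proved gives $|\mathcal{H}_u\cap \mathcal{S}_G| = n-\mathrm{wt}(uG) \leq n-d$ for every nonzero $u$, and every hyperplane of $\mathrm{PG}(k-1,q)$ is of the form $\mathcal{H}_u$ for some nonzero $u$.

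The only mild subtlety is ensuring $|\mathcal{S}_G|=n$, i.e., that the columns of $G$ give pairwise distinct projective points; this must be noted to equate $|\mathcal{S}_G|-\omega$ with $n-\omega$. This is automatic once $d\geq 2$ (which is the setting of interest, as MDS and NMDS codes certainly satisfy this), since two proportional columns would force a weight-$\leq 2$ codeword. There is no genuine obstacle in the proof; it is a direct translation between the linear-algebraic and projective-geometric viewpoints.
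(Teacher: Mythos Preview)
Your main argument is correct and is exactly the standard proof: the $i$-th coordinate of $uG$ vanishes iff $[g_i]\in\mathcal{H}_u$, so counting zero coordinates gives $|\mathcal{H}_u\cap\mathcal{S}_G|=n-\mathrm{wt}(uG)$, and the bound by $n-d$ follows from $\mathrm{wt}(uG)\geq d$. The paper does not supply its own proof of this lemma (it is quoted from Ball's textbook), so there is nothing to compare against.

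One small correction to your closing remark. Two proportional columns in the \emph{generator} matrix $G$ produce a weight-$2$ codeword in the dual code $\mathcal{C}^\perp$ (for which $G$ serves as a parity-check matrix), not in $\mathcal{C}$ itself; so the hypothesis $d\geq 2$ alone does not force the columns of $G$ to be projectively distinct. The correct condition is $d^\perp\geq 3$, equivalently that any two columns of $G$ are linearly independent. For NMDS codes with $k\geq 3$ this holds because $d^\perp=k\geq 3$, and indeed the paper invokes precisely this (``Since $k\geq 3$, the columns of $G$ can be identified with distinct points in $\mathrm{PG}(k-1,q)$'') at the start of its Section on NMDS codes. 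This does not affect the validity of your proof of the lemma proper.
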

	From Lemma \ref{relation}, we see that $[n,k,n-k+1]$ MDS codes and $n$-arcs in $\mathrm{PG}(k-1,q)$  are indeed equivalent objects. Similarly, an $[n,k,n-k]$ NMDS code can be viewed as a point-set $\mathcal{S}$ of size $n$ in $\mathrm{PG}(k-1,q)$  satisfying the following conditions \cite{Dodunekov1995}: 
	\begin{enumerate}
		\item[(N1)] any $k-1$ points of $\mathcal{S}$ generate a hyperplane in $\mathrm{PG}(k-1,q)$;
		\item[(N2)] there exist $k$ points in $\mathcal{S}$  that lie on a hyperplane in $\mathrm{PG}(k-1,q)$;
		\item[(N3)] every $k+1$ points of $\mathcal{S}$ generate $\mathrm{PG}(k-1,q)$.
	\end{enumerate}
	By adding suitable projective points to the $n$-arcs in $\mathrm{PG}(k-1,q)$, many  NMDS codes of dimension $k$ have been constructed. For the case $k=3$, see \cite{Wang2021,Li20,Li2023,Xu2024,Fan2024}. For the cases $k=4$ and $k=5$, see \cite{Xu2023,Heng,Ding2024,Ceria}. Moreover, there are many NMDS codes with the same parameters derived from these constructions. In this paper, we investigate the monomial equivalence problem  for such NMDS codes  and provide some new constructions. Specifically, by  utilizing hyperovals in $\mathrm{PG}(2,q)$, we propose an infinite family of NMDS codes with length $q+3$ and dimension $3$ for even $q$ (see Theorem \ref{main}). Through detailed analysis of the action of  the homography stabilizers of distinct hyperovals   on  the points not in hyperovals, we fully determine the equivalence classes of these NMDS codes. By analyzing the orbits of points and planes  of $\mathrm{PG}(3,q)$ under the  stabilizers of arcs,  we provide a unified geometric perspective for constructing several equivalence classes of NMDS codes with length $q+2$ and dimension $4$, derived from arcs in $\mathrm{PG}(3,q)$  with additional points (see Theorems \ref{4_even} and \ref{4_odd} ). 
	
	This paper is organized as follows. In Section \ref{s2}, we present some preliminary results on NMDS codes and the maximal arcs in $\mathrm{PG}(2,q)$ and $\mathrm{PG}(3,q)$.  The monomial equivalence problem for  NMDS codes of dimensions $3$ and $4$ is considered  in Sections \ref{s3} and \ref{s4}, respectively. In Section \ref{s5}, we conclude this paper.
	
	\section{Preliminaries}\label{s2}
	In this section, we present  some preliminary results which will be used later.
	\subsection{NMDS codes}
	Let $\mathcal{C}$ be an  $[n,k,n-k]$ NMDS code with generator matrix $G=(\mathbf{g}_1,\mathbf{g}_2,\ldots,\mathbf{g}_{n})$, where $\mathbf{g}_i\in \mathbb{F}^k_q$ and $k\geq3$. Since $k\geq3$,  the columns of $G$   can be  identified with distinct points in $\mathrm{PG}(k-1,q)$. Define  $\mathcal{S}_G=\{\mathbf{g}_1,\mathbf{g}_2,\ldots,\mathbf{g}_{n}\}$. Then $\mathcal{S}_G$ is a point-set in $\mathrm{PG}(k-1,q)$ satisfying the conditions (N1), (N2) and (N3) stated in Section \ref{introduction}.
	
	Let $A_{i}$ be the number of codewords with weight $i$ in $\mathcal{C}$, where $0\leq i\leq n$. The polynomial $A(z)=1+A_1z+A_2z^2+\cdots+A_nz^n$ is called the \emph{weight enumerator}  of $\mathcal{C}$. The sequence $(1,A_1,A_2,\dots,A_n)$ is termed the \emph{weight distribution} of $\mathcal{C}$. The  following lemma gives the weight distribution formulas for NMDS codes.
	\begin{lemma}\label{enumerators}
		\cite{Dodunekov1995}
		Let $\mathcal{C}$ be an $[n, k, n-k]$ NMDS code. Then the weight distribution of $\mathcal{C}$ is given by
		$$
		A_{n-k+s}=\binom{n}{k-s} \sum_{j=0}^{s-1}(-1)^j\binom{n-k+s}{j}\left(q^{s-j}-1\right)+(-1)^s\binom{k}{s} A_{n-k}
		$$	for $s \in\{1,2, \ldots, k\}$.
	\end{lemma}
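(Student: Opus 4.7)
My plan is to translate the lemma into a counting problem on hyperplanes in $\mathrm{PG}(k-1,q)$, exploit the geometric conditions (N1)--(N3) to obtain a triangular linear system for the $A_i$'s, and then invert that system in closed form.

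First, I would set $c_s$ to be the number of hyperplanes $H$ of $\mathrm{PG}(k-1,q)$ with $|H \cap \cS_G| = k - s$. By Lemma \ref{relation}, the $q-1$ nonzero scalar multiples of any functional $u$ defining a fixed hyperplane produce $q-1$ codewords $uG$ of the same weight $n - (k-s) = n-k+s$, so $A_{n-k+s} = (q-1)c_s$. The minimum distance $n-k$ of $\mathcal{C}$ already gives $|H \cap \cS_G| \leq k$ for every hyperplane $H$, so only $s \in \{0,1,\ldots,k\}$ need be considered.

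Next, I would strengthen (N1): any $j$ points of $\cS_G$ with $j \leq k-1$ must be in general position (span a $(j-1)$-flat), since a dependent $j$-subset could be extended to a $(k-1)$-subset spanning a flat of dimension less than $k-2$, contradicting (N1). Consequently, for $s \geq 1$ each $(k-s)$-subset $T \subset \cS_G$ spans a projective $(k-s-1)$-flat that lies in exactly $(q^s - 1)/(q-1)$ hyperplanes. Double-counting the pairs $(T, H)$ with $|T| = k-s$ and $T \subseteq H \cap \cS_G$ in two ways (by $T$ first and by $H$ first) produces, for each $s \in \{1,\ldots,k\}$, the triangular recurrence
\[
\binom{n}{k-s}(q^s - 1) \;=\; \sum_{i=0}^{s} \binom{k-i}{k-s}\, A_{n-k+i}.
\]

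Finally, I would invert this recurrence. Setting $b_0 := A_{n-k}$ and $b_s := \binom{n}{k-s}(q^s - 1)$ for $s \geq 1$, the identity $\binom{k-i}{k-s} = \binom{k-i}{s-i}$ converts the recurrence into the generating-function relation $\sum_{s\geq 0} b_s\, x^s = (1+x)^k \sum_{s\geq 0} A_{n-k+s}\,(x/(1+x))^s$; substituting $x = y/(1-y)$ then gives the explicit inverse $A_{n-k+s} = \sum_{i=0}^{s}(-1)^{s-i}\binom{k-i}{s-i}\, b_i$. The $i=0$ summand contributes exactly $(-1)^s \binom{k}{s} A_{n-k}$, while the substitution $j = s-i$ in the remaining sum, combined with the elementary identity $\binom{k-i}{s-i}\binom{n}{k-i} = \binom{n}{k-s}\binom{n-k+s}{s-i}$, rewrites the $i \geq 1$ part as $\binom{n}{k-s}\sum_{j=0}^{s-1}(-1)^j\binom{n-k+s}{j}(q^{s-j}-1)$, yielding the stated formula. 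The main obstacle is the binomial bookkeeping in this last rearrangement; the triangular recurrence itself is routine once the strengthened form of (N1) is in hand.
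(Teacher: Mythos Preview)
The paper does not prove this lemma: it is quoted from Dodunekov--Landjev \cite{Dodunekov1995} as a preliminary result, with no argument given. So there is no ``paper's own proof'' to compare against.

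Your argument is correct and is essentially the classical route to this formula. The key steps---(i) the strengthening of (N1) to ``any $j\le k-1$ points of $\cS_G$ are in general position'', (ii) the double count of pairs $(T,H)$ yielding the triangular system
\[
\binom{n}{k-s}(q^s-1)=\sum_{i=0}^{s}\binom{k-i}{k-s}A_{n-k+i}\qquad(s\ge 1),
\]
and (iii) the binomial inversion via the identity $\binom{k-i}{s-i}\binom{n}{k-i}=\binom{n}{k-s}\binom{n-k+s}{s-i}$---are all sound. Two minor remarks: in step (i) you implicitly use $n\ge k-1$ to extend a dependent $j$-subset to size $k-1$, which of course holds for any code of dimension $k$; and your derivation really only uses (N1) together with $d\ge n-k$ (i.e.\ $|H\cap\cS_G|\le k$), so the formula in fact holds for any code with $d\ge n-k$ and $d^\perp\ge k$, with the NMDS hypothesis entering only to guarantee $A_{n-k}>0$. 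This is exactly the level of generality at which the result is usually stated.
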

	It follows from Lemma \ref{enumerators} that the weight distribution  of an NMDS code can be determined by the number of its minimum weight codewords, i.e., $A_{n-k}$. Let $\mathcal{H}_u$ be the hyperplane in $\mathrm{PG}(k-1,q)$ determined by the nonzero vector $u\in\mathbb{F}^k_q$. Then by Lemma \ref{relation}, we have 	\begin{equation}\label{A_{n-k}}
		A_{n-k}=\#\left\{ u \in \mathbb{F}_q^k\setminus\{\mathbf{0}\}: |\mathcal{H}_{u} \cap \mathcal{S}_G|=k\right\}.
	\end{equation}
	Here the notation $\#A$ denotes the cardinality of set $A$.
	\subsection{Hyperovals}
	An $n$-arc in $\mathrm{PG}(2,q)$ is a set of $n$ points such that  no three  are collinear. It is known that $n\leq q+2$,  with equality holding if and only if $q$ is even. Let $q=2^m$ for a positive integer $m$.  A \emph{hyperoval} $\mathcal{HO}$ in $\mathrm{PG}(2,q)$ is  an arc with $q+2$ points, where every line in $\mathrm{PG}(2,q)$ intersects  $\mathcal{HO}$ in either $0$ or $2$ points. Two hyperovals are equivalent if one can be mapped to the other via  an automorphism of $\mathrm{PG}(2,q)$. 
	All hyperovals in $\mathrm{PG}(2,q)$ can be constructed using a special type of permutation polynomials over $\mathbb{F}_q$, as described in the following Theorem \cite{Hirschfeld1979}.
	
	\begin{thm}\label{hyperoval}
		Let $q>2$ be a power of $2$. Any hyperoval in $\mathrm{PG}(2,q)$ can be written in the following form
		\[
		\mathcal{HO}(f)=\{(1,c,f(c)):c\in\mathbb{F}_q\}\cup\{(0,0,1)\}\cup\{(0,1,0)\},
		\]
		where $f(x)\in\mathbb{F}_{q}[x]$ is a permutation polynomial of $\mathbb{F}_{q}$ satisfying:
		\begin{enumerate}
			\item[(1)] $\text{deg}(f)\leq q-2$ and $f(0)=0$, $f(1)=1$;
			\item[(2)] for each $a\in\mathbb{F}_{q}$, $f_a(x)=(f(x+a)+f(a))/x$ is also a permutation polynomial of $\mathbb{F}_{q}$, and $f_a(0)=0$. 
		\end{enumerate}
		Conversely, every   set $\mathcal{HO}(f)$ defined in this way is a hyperoval.
	\end{thm}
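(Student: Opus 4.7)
The plan is to exploit the key combinatorial fact about a hyperoval $\mathcal{O}$ in $\mathrm{PG}(2,q)$: through each of its $q+2$ points, all $q+1$ lines are $2$-secants (since $|\mathcal{O}|-1=q+1$ equals the number of lines through a point). Projection from a point of $\mathcal{O}$ onto a line in general position will therefore translate the geometric hyperoval condition directly into a statement about a function $f:\mathbb{F}_q\to\mathbb{F}_q$.

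For the forward direction, I would first pick any four points of $\mathcal{O}$ (possible since $q+2\geq 4$, with no three collinear) and apply a collineation in $\mathrm{PGL}(3,q)$ sending them to $(1,0,0)$, $(0,1,0)$, $(0,0,1)$, $(1,1,1)$. Projecting $\mathcal{O}$ from $(0,0,1)$ onto the line $x=0$ handles the two points $(0,0,1),(0,1,0)$ via the line $x=0$ itself, while each line $y=cx$ with $c\in\mathbb{F}_q$ must be a $2$-secant contributing exactly one further point, which I define as $(1,c,f(c))$. This yields $\mathcal{O}=\mathcal{HO}(f)$, and the choice of frame forces $f(0)=0$ (from $(1,0,0)\in\mathcal{O}$) and $f(1)=1$ (from $(1,1,1)\in\mathcal{O}$). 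Projecting instead from $(0,1,0)$ shows that $f(c)=a$ has a unique solution for each $a$, so $f$ is a permutation; the bound $\deg(f)\leq q-2$ then follows from the standard power-sum identity $\sum_{c\in\mathbb{F}_q}c^{q-1}=-1$ together with $\sum_c f(c)=\sum_c c=0$, which forces the coefficient of $x^{q-1}$ in the reduced representative of $f$ to vanish.

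For condition (2), I would project $\mathcal{O}$ from each point $(1,a,f(a))$ onto $x=0$. A direct computation in characteristic $2$ shows that $(1,c,f(c))$ maps to $(0,1,f_a(c+a))$ for $c\neq a$, while $(0,0,1)$ and $(0,1,0)$ map to themselves. The bijectivity of this projection onto the $q+1$ points of $x=0$ forces the assignment $d\mapsto f_a(d)$ to be a bijection from $\mathbb{F}_q^*$ onto $\mathbb{F}_q^*$. The step I expect to be most delicate is the equivalence between this geometric bijection and the polynomial statement ``$f_a$ is a permutation of $\mathbb{F}_q$ with $f_a(0)=0$'': one must observe that if $f_a$ restricted to $\mathbb{F}_q^*$ is a bijection onto $\mathbb{F}_q^*$, then $f_a(0)$ cannot lie in $\mathbb{F}_q^*$ without destroying injectivity, hence must equal $0$, after which $f_a$ becomes a permutation of all of $\mathbb{F}_q$.

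The converse is essentially the same picture read in reverse. Given $f$ satisfying (1) and (2), I would verify that every line through each point of $\mathcal{HO}(f)$ is a $2$-secant by a case analysis on the three point types $(0,0,1)$, $(0,1,0)$, $(1,a,f(a))$. The first two cases follow immediately from the definition of $\mathcal{HO}(f)$ together with the fact that $f$ is a permutation. For a point $(1,a,f(a))$, intersecting a line through it with $x=0$ reduces the incidence condition for a further point of $\mathcal{HO}(f)$ to the equation $f_a(c+a)=b$, where $b$ parametrizes the line; condition (2) then supplies exactly one solution $c\neq a$ when $b\in\mathbb{F}_q^*$ and none when $b=0$, matching the $2$-secant requirement in all cases.
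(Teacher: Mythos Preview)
The paper does not prove this theorem; it is quoted from Hirschfeld's book as a preliminary result, so there is no in-paper argument to compare against. Your outline follows the classical projection argument found in the standard references, and the overall structure is sound.

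There is, however, one genuine gap in your forward direction, precisely at the step you flagged as ``delicate''. From the projection out of $(1,a,f(a))$ you correctly obtain that $f_a$ restricts to a bijection $\mathbb{F}_q^{*}\to\mathbb{F}_q^{*}$. You then write that ``$f_a(0)$ cannot lie in $\mathbb{F}_q^{*}$ without destroying injectivity, hence must equal $0$''. But nothing you have proved gives injectivity of $f_a$ on all of $\mathbb{F}_q$; the geometry only controls $f_a$ on $\mathbb{F}_q^{*}$, and a priori $f_a(0)$ could coincide with some $f_a(d)$, $d\neq 0$. (Here $f_a$ is the honest polynomial quotient, so $f_a(0)=f'(a)$, and what you are really trying to show is $f'\equiv 0$.) Your inference is circular.

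The repair is to reuse the very power-sum trick you invoked for the degree bound, now applied to $f_a$. Since $\deg f\le q-2$ we have $\deg f_a\le q-3<q-1$, so the coefficient of $x^{q-1}$ in $f_a$ vanishes and hence $\sum_{x\in\mathbb{F}_q} f_a(x)=0$. On the other hand the bijection $f_a\colon\mathbb{F}_q^{*}\to\mathbb{F}_q^{*}$ gives $\sum_{x\in\mathbb{F}_q^{*}} f_a(x)=\sum_{y\in\mathbb{F}_q^{*}} y=0$ (here $q>2$ is used). Subtracting yields $f_a(0)=0$, after which $f_a$ is a permutation of $\mathbb{F}_q$ as required. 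With this correction your forward direction is complete. In the converse you should also explicitly record the line through $(1,a,f(a))$ and $(0,0,1)$, which your parametrisation by $b\in\mathbb{F}_q$ omits; that case is trivial but is not covered by the equation $f_a(c+a)=b$.
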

	
	By \cite{Bose1947}, for  even $q$, a conic together with its nucleus (the intersection point of its tangents) forms a hyperoval. Such a hyperoval is said to be \emph{regular}. As shown by \cite{Segre1957}, for $q=2,4,8$, every hyperoval is regular. However, for $q=2^m$ with $m\geq4$, irregular hyperovals exist. 	Any polynomial satisfying the two conditions in Theorem \ref{hyperoval} is called an \emph{oval polynomial} (or \emph{o-polynomial} for short).
	Table \ref{hyperovals} gives a list of all known  o-polynomials over $\mathbb{F}_q$ with $q=2^m$, also see \cite[Table 3]{Hirschfeld2001}. Here,
	$$
	F_1(x)=\left(\delta^2\left(x^4+x\right)+\delta^2\left(1+\delta+\delta^2\right)\left(x^3+x^2\right)\right)\left(x^4+\delta^2 x^2+1\right)^{-1}+x^{1/2};
	$$
	\[F_3(x)=x^4+x^{16}+x^{28}+\omega^{11}(x^6+x^{10}+x^{14}+x^{18}+x^{22}+x^{26})+\omega^{20}(x^8+x^{20})+\omega^6(x^{12}+x^{24});\]
	$$
	F_2(x)=\frac{T\left(\beta^m\right)(x+1)}{T(\beta)}+\frac{T\left(\left(\beta x+\beta^q\right)^m\right)}{T(\beta)\left(x+T(\beta) x^{2^{m-1}}+1\right)^{m-1}}+x^{2^{m-1}}
	$$ and condition$(*)$ is that $m \geq 4$ is even, $\beta \in \mathbb{F}_{q^2} \backslash\{1\}$ with $\beta^{q+1}=1, m \equiv \pm(q-1) / 3 \pmod{q+1}$, and $T(x)=x+x^q$.

	\begin{table}
		\begin{center}
			\caption{Hyperovals in $\mathrm{PG}(2,q)$, $q$ even}\label{hyperovals}
			\resizebox{\textwidth}{!}{
				\begin{tabular}{ccccc}
					\hline Name & $f(x)$ & Conditions  & Reference\\
					\hline
					Regular & $x^{2}$ & & \cite{Bose1947} \\
					Irregular translation & $x^{2^h}$ & $\gcd(h, m)=1$, $h>1$ & \cite{Segre1957} \\
					Segre & $x^{6}$ & $m$  odd & \cite{Segre1962} \\ 
					Glynn I & $x^{3 \times 2^{(m+1) / 2}+4}$ & $m$  odd & \cite{Glynn1983} \\ 
					Glynn II & $x^{2^{(m+1) / 2}+2^{(m+1) / 4}}$ & $m \equiv 3 \pmod{4}$ & \cite{Glynn1983} \\
					& $x^{2^{(m+1) / 2}+2^{(3 m+1) / 4}}$ & $m \equiv 1 \pmod{4}$ & \cite{Glynn1983} \\
					Cherowitzo & $x^{2^e}+x^{2^e+2}+x^{3 \times 2^e+4}$ & $e=(m+1) / 2$, $m$   odd & \cite{Cherowitzo1998} \\
					Payne & $x^{1/6}+x^{3/6}+x^{5/6}$ &   $m$   odd & \cite{Payne1985} \\
					Subiaco & $F_1(x)$ &   $\operatorname{Tr}_{q / 2}(1 /\delta)=1$, $\delta \notin \mathbb{F}_4$ if $m \equiv 2 \bmod 4$ & \cite{Payne1995} \\
					Adelaide & $F_2(x)$ &   $(*)$ & \cite{Cherowitzo2003} \\	
					O'Keefe-Penttila & $F_3(x)$ &   $\omega^5=\omega^2+1$, $m=5$ & \cite{O'Keefe1992} \\
					\hline
			\end{tabular}}
		\end{center}
	\end{table}

	Some useful properties on hyperovals and o-polynomials are presented below.
	
	\begin{lemma}\label{unique}
		\cite[Lemma 4.6.6]{Payne}  If hyperovals $K$ and $K'$ in $\mathrm{PG}(2,q)$  have $\frac{1}{2}(q+2)+n$ points in common,	with $n>0$, then $K=K'$.
	\end{lemma}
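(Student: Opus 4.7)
The plan is to prove the contrapositive via a short counting argument: assume $K \neq K'$, pick a point in the symmetric difference, and bound $|K \cap K'|$ by the number of secants to $K'$ through that point.

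First I would recall two standard facts about a hyperoval $\mathcal{H}$ in $\mathrm{PG}(2,q)$ with $q$ even. Through every point $P$ on $\mathcal{H}$, all $q+1$ lines are secants (each containing exactly one of the remaining $q+1$ points of $\mathcal{H}$). Through every point $P$ not on $\mathcal{H}$, the $q+2$ points of $\mathcal{H}$ are partitioned by the lines through $P$ into pairs, so there are exactly $(q+2)/2$ secants of $\mathcal{H}$ through $P$.

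Next, since both $K$ and $K'$ have $q+2$ points, if $K \neq K'$ then $K \not\subseteq K'$, so I can select $P \in K \setminus K'$. Writing $A := K \cap K'$, I claim that the map $Q \mapsto \text{line }PQ$ from $A$ to the set of secants of $K'$ through $P$ is well-defined and injective. The line $PQ$ is a secant of $K'$ because it contains $Q \in K'$ while $P \notin K'$, and any line meets $K'$ in $0$ or $2$ points, forcing $|PQ \cap K'|=2$. Injectivity follows from $P \in K$: if two distinct points $Q_1,Q_2 \in A \subseteq K$ were collinear with $P$, the line $PQ_1Q_2$ would contain three points of $K$, contradicting that every line meets $K$ in $0$ or $2$ points.

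Combining these, I obtain $|A| \leq (q+2)/2$, which contradicts the hypothesis $|A| = (q+2)/2 + n$ with $n > 0$. Hence $K = K'$. I do not anticipate a serious obstacle: the argument is a direct consequence of the definition of a hyperoval and the pairing of points induced on lines through an external point. The only thing to be careful about is the injectivity step, which crucially uses that $P$ itself lies on $K$, not merely on $K \setminus K'$.
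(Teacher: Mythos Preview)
Your argument is correct and is essentially the standard proof of this fact. Note, however, that the paper does not supply its own proof of this lemma: it is quoted without proof from \cite[Lemma 4.6.6]{Payne}, so there is no in-paper argument to compare against. Your counting argument---choosing $P\in K\setminus K'$, mapping each $Q\in K\cap K'$ to the secant $PQ$ of $K'$, and using that $P\in K$ to force injectivity---is exactly the expected line of reasoning and matches the classical proof. The only cosmetic point is that the first recalled fact (that every line through a point of a hyperoval is a secant) is never used; you may drop it.
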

	
	\begin{lemma}\label{two}
		\cite{Maschietti1998} A polynomial $f(x)$ over $\mathbb{F}_{q}$ with $f(0)=0$ is an o-polynomial if and only if  $f(x)+ux$ is 2-to-1 for every $u\in\mathbb{F}^{*}_{q}$.
	\end{lemma}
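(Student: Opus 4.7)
The plan is to reduce the lemma to a direct geometric statement about $\mathcal{HO}(f)$ using Theorem \ref{hyperoval}. By that theorem, $f$ with $f(0)=0$ is an o-polynomial if and only if the point set $\mathcal{HO}(f)=\{(1,c,f(c)):c\in\mathbb{F}_q\}\cup\{(0,0,1),(0,1,0)\}$ is a hyperoval in $\mathrm{PG}(2,q)$, i.e.\ every line meets $\mathcal{HO}(f)$ in exactly $0$ or $2$ points. So the whole proof reduces to translating the hyperoval property on lines into the 2-to-1 condition on $f(x)+ux$.

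First I would classify the lines of $\mathrm{PG}(2,q)$ according to their incidence with the distinguished points $(0,0,1)$, $(0,1,0)$ and with the line at infinity $X=0$. The line $X=0$ meets $\mathcal{HO}(f)$ exactly in the two infinite points, automatically giving $2$. The ``vertical'' lines $Y=kX$ pass through $(0,0,1)$ and meet the affine part in the single point $(1,k,f(k))$, again $2$ total. The ``horizontal'' lines $Z=kX$ pass through $(0,1,0)$ and meet the affine part in the points $(1,c,f(c))$ with $f(c)=k$; the hyperoval condition here forces $|f^{-1}(k)|=1$ for every $k$, i.e.\ $f$ is a permutation. Finally, the ``oblique'' lines $Z=uY+kX$ with $u\neq0$ miss both infinity points of $\mathcal{HO}(f)$, and their intersection with $\mathcal{HO}(f)$ is the solution set of $f(c)+uc=k$ (recall $q$ is even); the hyperoval condition now reads $|\{c:f(c)+uc=k\}|\in\{0,2\}$ for every $k$, which is exactly the 2-to-1 property of $f(x)+ux$.

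Combining the four cases, $\mathcal{HO}(f)$ is a hyperoval iff $f$ is a permutation \emph{and} $f(x)+ux$ is 2-to-1 for every $u\in\mathbb{F}_q^*$. To finish, I would show that the 2-to-1 condition by itself already forces $f$ to be a permutation, so the extra assumption is redundant and the stated equivalence follows. The argument is a counting one. Consider the $q(q-1)$ ordered pairs $(c_1,c_2)$ with $c_1\neq c_2$, and assign to each the slope $m(c_1,c_2)=(f(c_1)+f(c_2))/(c_1+c_2)\in\mathbb{F}_q$. For each fixed $u\neq0$, the 2-to-1 property of $f(x)+ux$ gives $q/2$ fibres of size $2$, hence exactly $q$ ordered pairs with slope $u$. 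Summing over $u\in\mathbb{F}_q^*$ yields $q(q-1)$ pairs, which already accounts for all of them. Therefore no pair has slope $0$, meaning $f(c_1)\neq f(c_2)$ whenever $c_1\neq c_2$, so $f$ is injective and hence a permutation.

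The main obstacle is this last counting step: it is the only point where one has to notice that ``2-to-1 for every nonzero $u$'' is rigid enough on its own to force injectivity of $f$, so that the lemma can be stated without any extra permutation hypothesis. Everything else is a routine case analysis of lines in $\mathrm{PG}(2,q)$, driven entirely by Theorem \ref{hyperoval} and the explicit description of $\mathcal{HO}(f)$.
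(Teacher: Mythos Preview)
The paper itself offers no proof of this lemma; it is quoted from \cite{Maschietti1998} as a known result, so there is nothing in the paper to compare your argument against. Your argument is correct and self-contained. The case analysis over the lines of $\mathrm{PG}(2,q)$ cleanly translates the hyperoval property of $\mathcal{HO}(f)$ into conditions on $f$, and your counting argument---that the $2$-to-$1$ hypothesis for every nonzero $u$ already accounts for all $q(q-1)$ ordered pairs and hence leaves no pair with slope $0$---is exactly the right way to show that the permutation condition is redundant.

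One cosmetic point worth flagging: the paper's Theorem~\ref{hyperoval} builds the normalizations $f(1)=1$ and $\deg f\le q-2$ into its definition of an o-polynomial, while your geometric argument (and the $2$-to-$1$ criterion) only sees the function $c\mapsto f(c)$ on $\mathbb{F}_q$ and is insensitive to such normalizations. Strictly speaking, then, the ``if'' direction of the lemma requires either reading ``o-polynomial'' in the looser sense of \cite{Maschietti1998} or noting that the normalizations can be imposed afterwards by scaling and reducing modulo $x^q-x$. This does not affect how the lemma is used in the paper (only the forward direction is invoked, in the proof of Theorem~\ref{main}), so your proof fully serves the intended purpose.
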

	
	\begin{lemma}\label{number}
		Let $f(x)\in\mathbb{F}_{q}[x]$ be an  o-polynomial, $u_1,u_2\in\mathbb{F}^{*}_{q}$, $a,b\in\mathbb{F}_{q}$ with $(a,b)\neq(0,0)$. Define
		\[A(u_1,u_2)=\#\{x\in\mathbb{F}_q:u_1f(x)+u_2x+u_1a+u_2b=0\}.\]
		Then $A(u_1,u_2)\in\{0,2\}$ and $\#\{(u_1,u_2)\in(\mathbb{F}_q^*)^2:A(u_1,u_2)=2\}=(q-1)(q-2)/2.$
	\end{lemma}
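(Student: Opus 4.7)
The plan is to prove both parts in sequence: first $A(u_1,u_2)\in\{0,2\}$ via a reduction to Lemma~\ref{two}, and then the cardinality formula by a double-counting argument.

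For the first claim, I would divide $u_1 f(x) + u_2 x + u_1 a + u_2 b = 0$ through by $u_1\neq 0$ and set $u = u_2/u_1 \in \mathbb{F}_{q}^{*}$. In characteristic $2$ this is equivalent to
\[
f(x) + u x \;=\; a + u b,
\]
and Lemma~\ref{two} says $x\mapsto f(x) + ux$ is $2$-to-$1$, so the right-hand side has either $0$ or $2$ preimages; hence $A(u_1,u_2)\in\{0,2\}$.

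For the cardinality formula I would evaluate
\[
S \;:=\; \sum_{(u_1,u_2)\in(\mathbb{F}_{q}^{*})^2} A(u_1,u_2) \;=\; \sum_{x\in\mathbb{F}_q} N(x),
\]
where $N(x) := \#\{(u_1,u_2)\in(\mathbb{F}_{q}^{*})^2 : u_1(f(x)+a) + u_2(x+b)=0\}$, and then deduce $\#\{A=2\}=S/2$ from the first claim. For fixed $x$, writing $\alpha=f(x)+a$ and $\beta=x+b$, one checks directly that $N(x) = (q-1)^2$ when $\alpha=\beta=0$, $N(x) = 0$ when exactly one of $\alpha,\beta$ vanishes, and $N(x) = q-1$ when both are nonzero. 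Since $f$ is a bijection, $\alpha=0$ has the unique solution $x=f^{-1}(a)$ and $\beta=0$ has the unique solution $x=b$, and these are distinct precisely when $a\neq f(b)$, the geometric condition that $(1,b,a)$ is external to the hyperoval $\mathcal{HO}(f)$ (the implicit external-point hypothesis under which the statement is applied). In that case exactly two values of $x$ contribute $0$ to $S$ and the remaining $q-2$ values each contribute $q-1$, yielding $S=(q-1)(q-2)$ and therefore $\#\{A=2\}=(q-1)(q-2)/2$.

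The main subtlety to watch is the degenerate case $a=f(b)$: there $x=b$ is a simultaneous zero of $\alpha$ and $\beta$ and hence a solution of the original equation for every $(u_1,u_2)$, so $A(u_1,u_2)=2$ for all pairs and the count would instead become $(q-1)^2$. Provided the lemma is invoked with $(1,b,a)\notin \mathcal{HO}(f)$ (which is how it will be used in the sequel), the algebraic reduction and the case analysis of $N(x)$ are both routine, and the double count gives the claimed value.
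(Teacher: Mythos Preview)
Your proof is correct. The paper itself omits the argument, referring instead to \cite[Lemma~3.1]{Xu2024}; your reduction to Lemma~\ref{two} for the first claim and the double count $S=\sum_{(u_1,u_2)}A(u_1,u_2)=\sum_x N(x)$ for the second are the natural route and almost certainly what that reference does.

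More importantly, you have spotted a genuine omission in the \emph{statement} of the lemma: as written it assumes only $(a,b)\neq(0,0)$, but your analysis shows that when $a=f(b)$ (equivalently, when the point $(1,b,a)$ lies on $\mathcal{HO}(f)$) one has $A(u_1,u_2)=2$ for every pair and the count is $(q-1)^2$, not $(q-1)(q-2)/2$. The hypothesis $a\neq f(b)$ is tacitly in force each time the lemma is invoked in the proof of Theorem~\ref{main} (Cases~2--4 all have $\mathbf{g}_{q+3}\notin\mathcal{HO}$), so the applications are sound, but the lemma as stated is literally false. Your write-up handles this cleanly by flagging the external-point hypothesis; nothing else needs fixing.
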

	\begin{proof}
		The proof is similar to  that of \cite[Lemma 3.1]{Xu2024}, and so we omit the details.
	\end{proof}
	
	\begin{lemma}\label{root}
		\cite{Wan2003}	Let $r$ be a positive integer, and $f(x)=x^r$ be a polynomial over 	$\mathbb{F}_{q}$. 
		\begin{enumerate}
			\item[(1)] If $\gcd(r,q-1)=1$, then $f(x)$ is a permutation polynomial over $\mathbb{F}_{q}$. 
			\item[(2)] If $\gcd(r,q-1)=r$, then $f(x)$ is $r$-to-$1$ over $\mathbb{F}^{*}_{q}$.
		\end{enumerate}
	\end{lemma}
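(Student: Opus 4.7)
The plan is to work in the multiplicative group $\mathbb{F}_q^*$, which is cyclic of order $q-1$, and exploit the fact that the map $\varphi\colon \mathbb{F}_q^* \to \mathbb{F}_q^*$ defined by $\varphi(x)=x^r$ is a group homomorphism. The kernel of $\varphi$ is precisely the set of $r$-th roots of unity in $\mathbb{F}_q^*$, and in a cyclic group of order $q-1$ this kernel has size $\gcd(r,q-1)$. Both parts of the lemma will follow from analyzing this kernel in the two cases.

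For part (1), under the hypothesis $\gcd(r,q-1)=1$, the kernel of $\varphi$ is trivial, so $\varphi$ is injective on the finite set $\mathbb{F}_q^*$, and hence bijective. Combined with $f(0)=0^r=0$, this shows $f$ permutes $\mathbb{F}_q$. (Alternatively, one can write $1=ar+b(q-1)$ with $a,b\in\mathbb{Z}$, so that $x^{ar}=x^{1-b(q-1)}=x$ for every $x\in\mathbb{F}_q^*$, giving an explicit inverse $y\mapsto y^a$.) For part (2), under $\gcd(r,q-1)=r$ (equivalently $r\mid q-1$), the kernel has size exactly $r$. Since $\varphi$ is a group homomorphism, every fiber $\varphi^{-1}(y)$ for $y\in\mathrm{Im}(\varphi)$ is either empty or a coset of the kernel, hence of size $r$. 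Thus $\varphi$ is $r$-to-$1$ onto its image, i.e.\ $f$ is $r$-to-$1$ on $\mathbb{F}_q^*$, as claimed.

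There is essentially no obstacle here: the result is a direct consequence of the cyclicity of $\mathbb{F}_q^*$ and the first isomorphism theorem applied to the power map. The only care needed is to separate the behavior at $0$ (which maps to $0$) from the behavior on $\mathbb{F}_q^*$, and to make clear in part (2) that the ``$r$-to-$1$'' claim is restricted to $\mathbb{F}_q^*$ (since $0$ is its own unique preimage). Because this is a standard fact with a one-line proof in most finite field textbooks, I would simply cite \cite{Wan2003} and give the kernel argument above in a single short paragraph.
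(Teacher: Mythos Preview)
Your argument is correct and is the standard textbook proof via the group homomorphism $x\mapsto x^r$ on the cyclic group $\mathbb{F}_q^*$. Note, however, that the paper does not give its own proof of this lemma at all; it simply cites \cite{Wan2003} and states the result, so there is nothing to compare your approach against beyond observing that your proof is exactly the kind of argument one finds in the cited reference.
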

	\subsection{Arcs in $\mathrm{PG}(3,q)$}
	An arc in $\mathrm{PG}(3,q)$, $q\geq4$ has at most $q+1$ points \cite[Theorems 21.2.4 and 21.3.8]{Hirschfeld1985}. If $q=2^m$, a $(q+1)$-arc in $\mathrm{PG}(3,q)$ is projectively equivalent to the point-set
	$$\mathcal{S}_h=\{P_x=(1,x,x^{2^h},x^{2^h+1})\mid x\in\mathbb{F}_q\}\cup\{P_{\infty}=(0,0,0,1)\},$$
	where $\gcd(m,h)=1$ \cite[Theorem 21.3.15]{Hirschfeld1985}. For $h=1$,   $\mathcal{S}_1$ is called a \emph{twisted cubic}. When  $q$ is odd, every $(q+1)$-arc in $\mathrm{PG}(3,q)$ is a twisted cubic. A line in $\mathrm{PG}(3,q)$ joining two distinct points of $\mathcal{S}_h$ is called a \emph{real chord}.  Note that no four points of $\mathcal{S}_h$ are coplanar. Therefore, each real chord contains exactly two points of $\mathcal{S}_h$. Let $R$ denote the set of points lying on real chords but not in $\mathcal{S}_h$. Since  there are $q(q+1)/2$ real chords, it follows that $|R|=(q^3-q)/2$. The planes $\pi(P_x)$ and $\pi(P_{\infty})$, defined by the equations $x^{2^h+1}X_1+x^{2^h}X_2+xX_3+X_4=0$ and $X_1=0$ are called the \emph{osculating planes} of $\mathcal{S}_h$  at $P_x$ and $P_{\infty}$, respectively. The osculating plane $\pi(P)$ intersects   $\mathcal{S}_h$ only at point $P$. We end this section with the following lemma.
	
	\begin{lemma}\label{unique2}
		\cite[Theorem A]{Storme1993} Let $K$ be a $k$-arc of $\mathrm{PG}(3,q)$, $q$ even and $q\neq2$. Assume that $k>q-\sqrt{q}/2+9/4$. Then  $K$ lies on a unique $(q+1)$-arc.
	\end{lemma}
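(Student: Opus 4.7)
The plan is to reduce the question to a planar extendability statement via point-projection and then combine the planar extensions coherently in $\mathrm{PG}(3,q)$.

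First, I would fix a point $P\in K$ and project from $P$ onto a plane $\pi$ not containing $P$, sending every other point $Q\in K\setminus\{P\}$ to the intersection of the line $PQ$ with $\pi$. The image $K_P$ is a set of $k-1$ points of $\pi\cong\mathrm{PG}(2,q)$, and it is an arc: three collinear image points would lift to three points of $K\setminus\{P\}$ coplanar with $P$, contradicting the $k$-arc condition. Moreover, the unisecants to $K$ through $P$ in $\mathrm{PG}(3,q)$ correspond bijectively to the points of $\pi\setminus K_P$, so an extension of $K$ by a point $Q\notin K$ forces the line $PQ$ to meet $\pi$ at a point extending $K_P$ to a $k$-arc in $\pi$.

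Second, I would invoke a quantitative planar extendability theorem in $\mathrm{PG}(2,q)$ for $q$ even. The classical results of Segre, Tallini, and Voloch show that a sufficiently large arc in $\mathrm{PG}(2,q)$ is contained in a unique hyperoval, with a sharp threshold of the form $k'>q-\sqrt{q}/2+c$ for a small constant obtained via Segre's lemma of tangents and an algebraic-curve argument on the dual tangent distribution. The hypothesis $k>q-\sqrt{q}/2+9/4$ is precisely calibrated so that the $(k-1)$-arc $K_P$ lies in the planar extendability range, and hence is contained in a unique hyperoval $\mathcal{HO}_P\subset\pi$. This yields exactly two extension points of $K_P$ in $\pi$, corresponding to two distinguished unisecants $\ell_1(P),\ell_2(P)$ to $K$ through $P$.

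Third, I would show that any extension of $K$ to a $(q+1)$-arc must occur at the intersection of these distinguished unisecants as $P$ varies. Indeed, if $Q$ extends $K$ to a $(q+1)$-arc then for every $P\in K$ the line $PQ$ is a unisecant of $K$ whose image in $\pi$ extends $K_P$ to a hyperoval, forcing $PQ\in\{\ell_1(P),\ell_2(P)\}$. Combining the information from two different projection centers $P,P'$ produces at most two global candidate points, and a tangent-counting argument shows that exactly one of them lifts $K$ to a $(q+1)$-arc; the uniqueness part of Lemma \ref{unique2} then follows because any two $(q+1)$-arcs containing $K$ would project to two hyperovals sharing the arc $K_P$, forcing them to coincide by the uniqueness clause in the planar result (cf. Lemma \ref{unique} for the planar analogue).

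The main obstacle will be the coherence step: verifying that the pairs of planar extension points produced at different projection centers $P$ glue together consistently into a single global candidate in $\mathrm{PG}(3,q)$, and ruling out the possibility that the two candidates from one plane correspond to genuinely different completions. This requires a delicate double counting of unisecants together with an algebraic-curve argument in the spirit of Segre's lemma of tangents, now carried out in the planes through a candidate extension point that meet $K$ in a controlled number of points. It is precisely this compatibility argument, rather than the projection step itself, that dictates the specific numerical threshold $q-\sqrt{q}/2+9/4$.
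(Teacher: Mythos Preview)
The paper does not prove this lemma at all: it is quoted verbatim as \cite[Theorem A]{Storme1993} and used as a black box, so there is no ``paper's own proof'' to compare against. Your proposal therefore goes well beyond what the present paper does.

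As for the substance of your sketch, the overall architecture --- project from a point of $K$ to obtain a planar $(k-1)$-arc, invoke a Segre/Voloch-type extendability theorem to embed that arc in a unique hyperoval, and then glue the planar completions as the projection center varies --- is indeed the strategy behind the Storme--Thas result you are citing. Two points deserve caution. First, the planar threshold you need is for a $(k-1)$-arc, so you must check that $k-1$ still exceeds the relevant bound in $\mathrm{PG}(2,q)$; this is exactly why the constant $9/4$ appears rather than the planar constant alone. Second, and more seriously, your ``coherence step'' is where essentially all the work lies: one does not simply intersect two unisecants from two centers and get a single point. The actual argument in \cite{Storme1993} uses a careful analysis of the tangent lines (in the $q$ even case every point of $K$ has a unique tangent in the osculating-plane sense) together with an algebraic-curve/Hasse--Weil bound to force the planar nuclei to be the images of a single global point. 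Your sketch acknowledges this obstacle but does not indicate how to overcome it; without that, the proposal is a plausible outline rather than a proof.
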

	\section{NMDS codes of dimension $3$}\label{s3}
	\subsection{NMDS codes from hyperovals}
	When $k=3$, a point-set $\mathcal{S}$ of size $n$ in  $\mathrm{PG}(2,q)$ satisfying the properties (N2) and (N3) in Section \ref{introduction}, i.e., $|\mathcal{H}_u\cap \mathcal{S}|\leq3$ for any nonzero vector $u\in\mathbb{F}^{3}_q$ and $|\mathcal{H}_v\cap \mathcal{S}|=3$ for some  nonzero vector $v\in\mathbb{F}^{3}_q$, is called an $(n,3)$-arc. Every $[n,3,n-3]$ NMDS code is equivalent to an $(n,3)$-arc in $\mathrm{PG}(2,q)$. In this section, let $q=2^m$ and $\mathbb{F}^{*}_{q}=\mathbb{F}_{q}\setminus\{0\}=\{\alpha_1,\dots,\alpha_{q-1}\}$, where $m$ is a positive integer.
	Define $f(x)\in\mathbb{F}_{q}[x]$ to be an  o-polynomial, and let $G$ be a $3\times(q+3)$ matrix defined as
	\begin{equation}\label{generator}
		\begin{aligned}G=(\mathbf{g}_1,\ldots,\mathbf{g}_{q+2},\mathbf{g}_{q+3})=\begin{pmatrix}1&1&\cdots&1&1&0&0&a\\\alpha_1&\alpha_2&\cdots&\alpha_{q-1}&0&0&1&b\\f(\alpha_1)&f(\alpha_2)&\cdots&f(\alpha_{q-1})&0&1&0&c\end{pmatrix},\end{aligned}
	\end{equation}
	where $\mathbf{g}_{q+3}=(a,b,c)^{\mathrm{T}}\neq\mathbf{0}$, and $\mathbf{g}_{q+3}\neq\lambda\mathbf{g}_{i}$ for some $\lambda\in\mathbb{F}^{*}_{q}$ and any $i\in\{1,\dots, q+2\}$. For convenience, we identify each vector $\mathbf{g}_i$ with the point in $\mathrm{PG}(2,q)$ spanned by $\mathbf{g}_i$. This ensures that the  points  $\mathbf{g}_i$ are distinct for $i=1,\dots,q+3$. Let $\mathcal{C}$ be the linear code over $\mathbb{F}_{q}$ with  generator matrix $G$.  For any $\mathbf{g}_{q+3}$, we first prove the following  theorem, which  generalizes  \cite[Theorem 8]{Wang2021} and \cite[Theorem 10]{Li2023}. 
	
	\begin{thm}\label{main}
		Let $q=2^m$ with $m\geq2$. Then $\mathcal{C}$ is a $[q+3,3,q]$ NMDS code over $\mathbb{F}_q$ with weight enumerator
		\[A(z)=1+\frac{(q-1)(q+2)}{2}z^q+\frac{q(q-1)(q+2)}{2}z^{q+1}+\frac{q(q-1)}{2}z^{q+2}+\frac{q(q-2)(q-1)}{2}z^{q+3}.\]
	\end{thm}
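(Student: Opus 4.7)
The plan is to interpret $\mathcal{S}_G$ geometrically as the union of the hyperoval $\mathcal{HO}(f) = \{\mathbf{g}_1, \ldots, \mathbf{g}_{q+2}\}$ (by Theorem \ref{hyperoval}) with the extra point $\mathbf{g}_{q+3}$, which by hypothesis represents a point of $\mathrm{PG}(2,q)$ not in $\mathcal{HO}(f)$. I will first verify the NMDS conditions (N1)--(N3) to conclude that $\mathcal{C}$ is a $[q+3,3,q]$ NMDS code, then count the minimum weight codewords directly to compute $A_q$, and finally feed $A_q$ into Lemma \ref{enumerators} to recover the remaining coefficients.

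Condition (N1) is immediate, since by assumption the $q+3$ columns of $G$ represent pairwise distinct points of $\mathrm{PG}(2,q)$. For (N3), I use the fact that $\mathcal{HO}(f)$ is an arc, so every line of $\mathrm{PG}(2,q)$ meets $\mathcal{HO}(f)$ in at most two points; adjoining the single extra point $\mathbf{g}_{q+3}$ can enlarge such an intersection by at most one, so no four points of $\mathcal{S}_G$ are collinear. For (N2), I exploit the defining property of a hyperoval: every line of $\mathrm{PG}(2,q)$ meets $\mathcal{HO}(f)$ in either $0$ or $2$ points. In particular, the $q+1$ lines through $\mathbf{g}_{q+3}$ partition the $q+2$ points of $\mathcal{HO}(f)$ into pairs, so exactly $(q+2)/2$ of them are secants. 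Each such secant, together with $\mathbf{g}_{q+3}$, yields three collinear points of $\mathcal{S}_G$, which establishes (N2) and shows $d = q$.

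It remains to compute the weight enumerator. By Lemma \ref{relation} and \eqref{A_{n-k}}, $A_q$ equals the number of nonzero $u \in \mathbb{F}_q^3$ with $|\mathcal{H}_u \cap \mathcal{S}_G| = 3$. Since $\mathcal{HO}(f)$ contains no three collinear points, any line meeting $\mathcal{S}_G$ in exactly three points must pass through $\mathbf{g}_{q+3}$; these are precisely the $(q+2)/2$ secants identified above, and each corresponds to $q-1$ nonzero scalar multiples $u$, giving $A_q = (q-1)(q+2)/2$. Substituting this together with $n = q+3$ and $k = 3$ into Lemma \ref{enumerators} for $s = 1, 2, 3$ and simplifying yields the claimed values of $A_{q+1}, A_{q+2}, A_{q+3}$. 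The argument is geometrically transparent; the one place that requires care is the pigeonhole count for (N2), which relies essentially on the hyperoval's evenness-of-intersection property (for an ordinary $(q+1)$-arc a tangent line through $\mathbf{g}_{q+3}$ could exist and break the parity of the secant count).
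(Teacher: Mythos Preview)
Your proof is correct and takes a genuinely different route from the paper's. The paper proceeds by a case analysis on the form of $\mathbf{g}_{q+3}$ (four cases according to which coordinates vanish), and in each case counts the lines $\mathcal{H}_u$ with $|\mathcal{H}_u\cap\mathcal{S}_G|=3$ by solving equations of the form $u_3f(x)+u_2x+c=0$ and invoking the $2$-to-$1$ property of o-polynomials (Lemmas~\ref{two} and~\ref{number}); in every case the count comes out to $(q-1)(q+2)/2$. You instead work uniformly: since every line of $\mathrm{PG}(2,q)$ meets the hyperoval in $0$ or $2$ points, the $q+1$ lines through $\mathbf{g}_{q+3}\notin\mathcal{HO}(f)$ must contain exactly $(q+2)/2$ secants, and these are precisely the lines contributing to $A_q$. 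Your argument is shorter and conceptually cleaner, making transparent that the value of $A_q$ is independent of the choice of $\mathbf{g}_{q+3}$ without any computation; the paper's approach, by contrast, exhibits the explicit equations and thereby ties the result more directly to the o-polynomial machinery used elsewhere in the paper.
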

	\begin{proof}
		Let $\mathcal{S}_{G}$ denote the set of  columns of $G$. Since the first $q+2$ columns of $\mathcal{S}_{G}$ form a hyperoval $\mathcal{HO}$, the condition    $|\mathcal{H}_u\cap \mathcal{S}_{G}|\leq3$ holds for any nonzero vector $u\in\mathbb{F}^{3}_q$.  To show that  $\mathcal{S}_{G}$ is an $(n,3)$-arc  and thus $\mathcal{C}$ is an NMDS code, we calculate $A_{q}$, i.e., the number of vectors $u=(u_1,u_2,u_3)\in\mathbb{F}^{3}_q$ satisfying $|\mathcal{H}_u\cap \mathcal{S}_{G}|=3$.  Note that if $|\mathcal{H}_u\cap \mathcal{S}_{G}|=3$, then we must have $\mathbf{g}_{q+3}\in\mathcal{H}_u$. We now consider each possible case for $\mathbf{g}_{q+3}$ in turn.
		
		\textbf{Case 1}.  $\mathbf{g}_{q+3}=(a,b,c)^{\mathrm{T}}=(0,1,c)^{\mathrm{T}}$ with $c\in\mathbb{F}^{*}_q$.
		
		Then $u_2+cu_3=0$. If $u_2=0$, then $u_3=0$ and thus $u=(u_1,0,0)$. In this case, we have $\mathcal{H}_u\cap \mathcal{S}_{G}=\{\mathbf{g}_{q+1},\mathbf{g}_{q+2},\mathbf{g}_{q+3}\}$. If $u_2\neq0$, then $\mathbf{g}_{q+1}$ and $\mathbf{g}_{q+2}$ are not in $\mathcal{H}_u$. Suppose that $(1,x,f(x))^{\mathrm{T}}\in\mathcal{H}_u$ for some $x\in\mathbb{F}_q$. Then $u_1+cu_3x+u_3f(x)=0$, i.e., $f(x)+cx+u_1u_3^{-1}=0$. By Lemma \ref{two}, this equation has exactly two solutions  and  there are $q(q-1)/2$ choices for $(u_1,u_3)$ satisfying $u_1u_3^{-1}\in\{f(x)+cx:x\in\mathbb{F}_q\}$. Thus, in this case, $A_q=(q-1)+q(q-1)/2=(q-1)(q+2)/2$.
		
		\textbf{Case 2}.  $\mathbf{g}_{q+3}=(a,b,c)^{\mathrm{T}}=(1,0,c)^{\mathrm{T}}$ with $c\in\mathbb{F}^{*}_q$.
		
		Then $u_1+cu_3=0$. If $u_1=0$, then $u_3=0$ and thus $u=(0,u_2,0)$.  In this case, we have $\mathcal{H}_u\cap \mathcal{S}_{G}=\{\mathbf{g}_{q},\mathbf{g}_{q+1},\mathbf{g}_{q+3}\}$. If $u_1\neq0$, then $\mathbf{g}_{q}$ and $\mathbf{g}_{q+1}$ are not in $\mathcal{H}_u$. Moreover, if $u_2=0$, then $\mathcal{H}_u\cap \mathcal{S}_{G}=\{\mathbf{g}_{j},\mathbf{g}_{q+2},\mathbf{g}_{q+3}\}$, where $f(\alpha_j)=c$. Finally, we consider the case where $u_1,u_2\neq0$. Suppose that $(1,x,f(x))^{\mathrm{T}}\in\mathcal{H}_u$ for some $x\in\mathbb{F}^{*}_q$. Then $cu_3+u_2x+u_3f(x)=0$. By Lemma \ref{number}, this equation has exactly two solutions   and   there are $(q-1)(q-2)/2$ choices for $(u_1,u_2)$. Thus, in this case, $A_q=2(q-1)+(q-1)(q-2)/2=(q-1)(q+2)/2$.
		
		\textbf{Case 3}.  $\mathbf{g}_{q+3}=(a,b,c)^{\mathrm{T}}=(1,b,0)^{\mathrm{T}}$ with $b\in\mathbb{F}^{*}_q$.
		
		This case is similar to the Case 2 and we omit the details.
		
		\textbf{Case 4}.  $\mathbf{g}_{q+3}=(a,b,c)^{\mathrm{T}}=(1,b,c)^{\mathrm{T}}$ with $b,c\in\mathbb{F}^{*}_q$ and $c\neq f(b)$. 
		
		Then $u_1+bu_2+cu_3=0$. If $u_2=0$ and $u_3\neq0$ , then  $\mathcal{H}_u\cap \mathcal{S}_{G}=\{\mathbf{g}_{i},\mathbf{g}_{q+2},\mathbf{g}_{q+3}\}$, where $f(\alpha_i)=c$. If $u_3=0$ and $u_2\neq0$ , then  $\mathcal{H}_u\cap \mathcal{S}_{G}=\{\mathbf{g}_{j},\mathbf{g}_{q+2},\mathbf{g}_{q+3}\}$, where $\alpha_j=b$. Finally, if $u_2\neq0$ and $u_3\neq0$, and  $(1,x,f(x))^{\mathrm{T}}\in\mathcal{H}_u$ for some $x\in\mathbb{F}^{*}_q$, then
		$u_1+u_2x+u_3f(x)=0$, which can be rewritten as $u_3f(x)+u_2x+u_3c+u_2b=0$.  By Lemma \ref{number}, this equation has exactly two solutions and   there are $(q-1)(q-2)/2$ choices for $(u_1,u_2)$. Thus, in this case, $A_q=2(q-1)+(q-1)(q-2)/2=(q-1)(q+2)/2$.
		
		From Case 1-Case 4,  we see that the number of codewords of weight $q$ in $\mathcal{C}$ is $(q-1)(q+2)/2$ for any possible   $\mathbf{g}_{q+3}$. The desired weight enumerator of  $\mathcal{C}$ then follows from Lemma \ref{enumerators}. This completes the proof.
	\end{proof}
	
	A \emph{collineation} $\theta(A,\sigma)$ of $\mathrm{PG}(2,q)$ is a projective semilinear map defined by:
	$(x,y,z)^{\mathrm{T}}\mapsto A(x^{\sigma},y^{\sigma},z^{\sigma})^{\mathrm{T}}$,
	where $A$ is a nonsingular $3\times3$ matrix   over $\mathbb{F}_q$ and $\sigma\in\textup{Aut}(\mathbb{F}_q)$ (for convenience, we represent points of $\mathrm{PG}(2,q)$ as column vectors).  The full collineation group of $\mathrm{PG}(2,q)$ is $\mathrm{P}\Gamma\mathrm{L}(3,q)$. If $\sigma=1$, the collineation  is called a \emph{homography}. All homographies form the group $\mathrm{PGL}(3,q)$. The stabilizer of a hyperoval $\mathcal{HO}$ in $\mathrm{P}\Gamma\mathrm{L}(3,q)$ is called  the \emph{collineation stabilizer} of  $\mathcal{HO}$, while its stabilizer  in $\mathrm{PGL}(3,q)$ is called the \emph{homography stabilizer}.

	\begin{prop}\label{equivalent}
		Let $q=2^m$ with $m\geq2$. Suppose that $\mathcal{C}_{1}$ and  $\mathcal{C}_{2}$ are $[n,3,n-3]$ NMDS codes with generator matrix $G_1$ and $G_2$ respectively, where $q+2<n<5(q+2)/4$. If  \[G_1=(\mathbf{g}_1,\ldots,\mathbf{g}_{q+2},\mathbf{g}_{q+3},\dots,\mathbf{g}_{n}) \; \text{and} \; G_2=(\mathbf{g}'_1,\ldots,\mathbf{g}'_{q+2},\mathbf{g}'_{q+3},\dots,\mathbf{g}'_{n}),\] where  the point-sets $\{\mathbf{g}_1,\ldots,\mathbf{g}_{q+2}\}$  and $\{\mathbf{g}'_1,\ldots,\mathbf{g}'_{q+2}\}$ are   hyperovals $\mathcal{HO}_1$ and $\mathcal{HO}_2$   in $\mathrm{PG}(2,q)$ respectively, then $\mathcal{C}_{1}$ and  $\mathcal{C}_{2}$ are monomially equivalent if and only if  there exists a homography  $\theta$ such that $\mathcal{HO}^{\theta}_1=\mathcal{HO}_2$ and the sets of points $\{\mathbf{g}_{q+3}^{\theta},\dots,\mathbf{g}_{n}^{\theta}\}$ and $\{\mathbf{g}'_{q+3},\dots,\mathbf{g}'_{n}\}$ are equal. 
	\end{prop}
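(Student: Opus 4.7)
\medskip
\noindent\textbf{Proof plan.} The backward direction is essentially immediate: given a homography $\theta$ with $\mathcal{HO}_1^{\theta}=\mathcal{HO}_2$ and $\{\mathbf{g}_{q+3}^{\theta},\dots,\mathbf{g}_{n}^{\theta}\}=\{\mathbf{g}'_{q+3},\dots,\mathbf{g}'_{n}\}$ as sets of projective points, pick any matrix $A\in\mathrm{GL}(3,q)$ representing $\theta$. For every $i\in\{1,\dots,n\}$ the point $A\mathbf{g}_i$ coincides (as an element of $\mathrm{PG}(2,q)$) with some $\mathbf{g}'_{\pi(i)}$ for a permutation $\pi\in S_n$, so there exist scalars $\lambda_i\in\mathbb{F}_q^*$ with $A\mathbf{g}_i=\lambda_i\mathbf{g}'_{\pi(i)}$. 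Writing this in matrix form produces $AG_1=G_2N$ for a monomial matrix $N$, so $AG_1N^{-1}=G_2$ exhibits the monomial equivalence of $\mathcal{C}_1$ and $\mathcal{C}_2$.

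For the forward direction, the plan is to extract the homography from a monomial equivalence and then use Lemma~\ref{unique} to force the two hyperovals to correspond. Suppose there exist $L\in\mathrm{GL}(3,q)$ and a monomial matrix $M$ with $LG_1M=G_2$. Let $\theta$ be the homography of $\mathrm{PG}(2,q)$ induced by $L$. Since $M$ just permutes the columns of $LG_1$ up to nonzero scalars, we obtain the set-theoretic equality
\[
\theta\bigl(\mathcal{S}_{G_1}\bigr)=\mathcal{S}_{G_2}
\]
in $\mathrm{PG}(2,q)$.

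The decisive step is to show that $\theta$ sends the hyperoval part of $\mathcal{S}_{G_1}$ to the hyperoval part of $\mathcal{S}_{G_2}$. Since collineations preserve arcs, $\theta(\mathcal{HO}_1)$ is a hyperoval of $\mathrm{PG}(2,q)$ contained in $\theta(\mathcal{S}_{G_1})=\mathcal{S}_{G_2}$, and $\mathcal{HO}_2\subseteq\mathcal{S}_{G_2}$ is another hyperoval. Both lie inside an $n$-point set, so by inclusion--exclusion
\[
\bigl|\theta(\mathcal{HO}_1)\cap\mathcal{HO}_2\bigr|\;\geq\; 2(q+2)-n.
\]
The hypothesis $n<5(q+2)/4$ gives $2(q+2)-n>3(q+2)/4>(q+2)/2$, so the two hyperovals share more than $(q+2)/2$ points. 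Lemma~\ref{unique} then forces $\theta(\mathcal{HO}_1)=\mathcal{HO}_2$. Removing these $q+2$ common points from both sides of $\theta(\mathcal{S}_{G_1})=\mathcal{S}_{G_2}$ yields $\{\mathbf{g}_{q+3}^{\theta},\dots,\mathbf{g}_{n}^{\theta}\}=\{\mathbf{g}'_{q+3},\dots,\mathbf{g}'_{n}\}$, which is exactly the required conclusion.

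The only nontrivial point in this plan is the application of Lemma~\ref{unique}, and this is precisely where the length restriction $q+2<n<5(q+2)/4$ enters: it guarantees that the two hyperovals $\theta(\mathcal{HO}_1)$ and $\mathcal{HO}_2$ sitting inside the common $n$-point set $\mathcal{S}_{G_2}$ must overlap in strictly more than half of their points, so the uniqueness statement applies. Everything else consists of transporting a monomial equivalence to a homography of the ambient projective plane and back, which is mechanical given the identification of columns with projective points.
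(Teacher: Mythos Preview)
Your proof is correct and follows essentially the same approach as the paper: translate monomial equivalence into a homography $\theta$ with $\theta(\mathcal{S}_{G_1})=\mathcal{S}_{G_2}$, then use Lemma~\ref{unique} to force $\theta(\mathcal{HO}_1)=\mathcal{HO}_2$. The only cosmetic difference is that you bound the intersection by $|\theta(\mathcal{HO}_1)\cap\mathcal{HO}_2|\geq 2(q+2)-n$ via inclusion--exclusion inside the $n$-point set $\mathcal{S}_{G_2}$, whereas the paper writes the (weaker but still sufficient) bound $q+2-2(n-q-2)$; both yield strictly more than $(q+2)/2$ under the hypothesis $n<5(q+2)/4$, so Lemma~\ref{unique} applies in either case.
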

	\begin{proof}
		Recall that $\mathcal{C}_{1}$ and  $\mathcal{C}_{2}$  are monomially equivalent if and only if there exist  a matrix $L\in \mathrm{GL}(k,q)$ and an $n\times n$ monomial matrix $M$ such that $LG_1M=G_2$. This is equivalent to  the sets of projective points  \[\{L\mathbf{g}_1,\ldots,L\mathbf{g}_{q+2},L\mathbf{g}_{q+3},\dots,L\mathbf{g}_{n}\} \; \text{and} \; \{\mathbf{g}'_1,\ldots,\mathbf{g}'_{q+2},\mathbf{g}'_{q+3},\dots,\mathbf{g}'_{n}\} \] are equal. We can write $L\mathbf{g}_i=\mathbf{g}_{i}^{\theta}$ for some homography $\theta$, where $i=1,\dots,q+2$. Since  $\{\mathbf{g}_1,\ldots,\mathbf{g}_{q+2}\}$ is a hyperoval $\mathcal{HO}$ in $\mathrm{PG}(2,q)$, the set $\mathcal{HO}^{\theta}_1:= \{\mathbf{g}^{\theta}_1,\ldots, \mathbf{g}^{\theta}_{q+2}\}$ is also a hyperoval. Given $n<5(q+2)/4$, it follows that \[|\mathcal{HO}_2\cap\mathcal{HO}^{\theta}_1|\geq q+2-2(n-q-2)>(q+2)/2.\] By Lemma \ref{unique}, we conclude that $\mathcal{HO}_2=\mathcal{HO}^{\theta}_1$, and thus $\{\mathbf{g}_{q+3}^{\theta},\dots,\mathbf{g}_{n}^{\theta}\}=\{\mathbf{g}'_{q+3},\dots,\mathbf{g}'_{n}\}$. This completes the proof.
	\end{proof}
	\begin{corollary}\label{equivalent1}
		Let $\mathscr{C}$ be the set of $[n,3,n-3]$ NMDS  codes  over $\mathbb{F}_{q}$ with $q+2<n<5(q+2)/4$, such that the first $q+2$ columns of their generator matrices form a hyperoval $\mathcal{HO}$ in $\mathrm{PG}(2,q)$. Then   the number of  monomial equivalence classes of NMDS codes  in  $\mathscr{C}$ is   equal to the number of orbits of the homography stabilizer of $\mathcal{HO}$  acting on the set formed by the remaining 
		$n-q-2$ columns of the generator matrices.
	\end{corollary}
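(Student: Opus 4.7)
The plan is to obtain this as a direct specialization of Proposition \ref{equivalent} to the situation where the two hyperovals coincide. First I would observe that for any $\mathcal{C}_1,\mathcal{C}_2\in\mathscr{C}$, the hypotheses of Proposition \ref{equivalent} are met with $\mathcal{HO}_1=\mathcal{HO}_2=\mathcal{HO}$, since by assumption $q+2<n<5(q+2)/4$ and the first $q+2$ columns of both generator matrices form the same fixed hyperoval $\mathcal{HO}$.

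Next I would invoke Proposition \ref{equivalent} to conclude that $\mathcal{C}_1$ and $\mathcal{C}_2$ are monomially equivalent if and only if there exists a homography $\theta\in\mathrm{PGL}(3,q)$ with $\mathcal{HO}^{\theta}=\mathcal{HO}$ carrying the remaining $n-q-2$ columns of $G_1$ onto those of $G_2$ as sets of projective points. The condition $\mathcal{HO}^{\theta}=\mathcal{HO}$ is precisely the statement that $\theta$ belongs to the homography stabilizer $H$ of $\mathcal{HO}$ in $\mathrm{PGL}(3,q)$. Hence monomial equivalence of $\mathcal{C}_1$ and $\mathcal{C}_2$ inside $\mathscr{C}$ is equivalent to $H$-equivalence of the corresponding $(n-q-2)$-subsets of additional columns.

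Finally, associating to each code in $\mathscr{C}$ the set of its $n-q-2$ extra columns, viewed as points of $\mathrm{PG}(2,q)$ (which absorbs the diagonal part of any monomial matrix, while the permutation part absorbs reorderings of the columns), yields a well-defined map from $\mathscr{C}$ onto the collection $X$ of admissible point-sets. The argument above shows that this map descends to a bijection between monomial equivalence classes in $\mathscr{C}$ and $H$-orbits on $X$, giving the claimed count. There is essentially no separate obstacle to overcome: Proposition \ref{equivalent}, underpinned by Lemma \ref{unique} through the length hypothesis $n<5(q+2)/4$, already does the geometric work of forcing the isomorphism to respect the hyperoval, and the corollary is then a clean reformulation in the language of group actions.
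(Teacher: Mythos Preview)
Your proposal is correct and matches the paper's approach exactly: the paper states this result as an immediate corollary of Proposition \ref{equivalent} without a separate proof, and your argument is precisely the intended specialization to $\mathcal{HO}_1=\mathcal{HO}_2=\mathcal{HO}$, rephrased in the language of orbits of the homography stabilizer.
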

	\subsection{Monomial equivalence classes of NMDS codes}
	
	Recall that a $[q+3,3,q]$ NMDS code was constructed in Theorem \ref{main}. In this subsection, for the  cases  where $f(x)$ in \eqref{generator} corresponds to different o-polynomials, we determine the number of   monomial equivalence classes for the corresponding  NMDS codes generated by \eqref{generator}. Let $\mathcal{HO}$ denote  the hyperoval corresponding to $\{\mathbf{g}_1,\ldots,\mathbf{g}_{q+2}\}$  in \eqref{generator} and $M$ be the homography stabilizer of $\mathcal{HO}$. 
	\begin{lemma}\label{regular}
		\cite[Corollary 6]{Hirschfeld1979}\cite[Theorem 4.11.4]{Payne} A regular hyperoval in $\mathrm{PG}(2,q)$, $q>4$ even, has homography stabilizer of order $q(q^2-1)$ isomorphic to $\mathrm{PGL}(2,q)$. This stabilizer consists of the maps:  $(x,y,z)^{\mathrm{T}}\mapsto A(x, y,z )^{\mathrm{T}}$, where 
		$$\left.A=\left(\begin{array}{ccc}a&0&c\\\sqrt{ab}&1&\sqrt{cd}\\b&0&d\end{array}\right.\right),\mathrm{~and~}ad+bc=1.$$
	\end{lemma}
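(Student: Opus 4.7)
The plan is to fix a canonical model of the regular hyperoval and compute its stabilizer in $\mathrm{PGL}(3,q)$ directly. Since all regular hyperovals are $\mathrm{PGL}(3,q)$-equivalent, their stabilizers are conjugate, so up to conjugation I may take
\[
\mathcal{HO}=\{(1,t,t^2)^{\mathrm{T}}:t\in\mathbb{F}_q\}\cup\{(0,0,1)^{\mathrm{T}},(0,1,0)^{\mathrm{T}}\},
\]
so that $\mathcal{C}:=\mathcal{HO}\setminus\{(0,1,0)^{\mathrm{T}}\}$ is the Veronese conic $Y^2=XZ$ with nucleus $N=(0,1,0)^{\mathrm{T}}$.

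First I verify that every matrix $A$ of the displayed form lies in the stabilizer. A direct computation yields
\[
A(1,t,t^2)^{\mathrm{T}}=(a+ct^2,\ \sqrt{ab}+t+\sqrt{cd}\,t^2,\ b+dt^2)^{\mathrm{T}}.
\]
Using the characteristic-$2$ identities $a+ct^2=(\sqrt{a}+\sqrt{c}\,t)^2$ and $b+dt^2=(\sqrt{b}+\sqrt{d}\,t)^2$, together with $ad+bc=1 \Leftrightarrow \sqrt{ad}+\sqrt{bc}=1$, I would check that this vector is proportional to $(1,s,s^2)^{\mathrm{T}}$ with $s=(\sqrt{b}+\sqrt{d}\,t)/(\sqrt{a}+\sqrt{c}\,t)$; the boundary cases $a+ct^2=0$ and $t=\infty$ send the corresponding points into $\{(0,0,1)^{\mathrm{T}},(0,1,0)^{\mathrm{T}}\}\subset\mathcal{HO}$, and $A(0,0,1)^{\mathrm{T}}$, $A(0,1,0)^{\mathrm{T}}$ are likewise in $\mathcal{HO}$. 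A quick count of $\{(a,b,c,d)\in\mathbb{F}_q^4:ad+bc=1\}$ yields $q(q^2-1)$ matrices, and the substitution $(\alpha,\beta,\gamma,\delta)=(\sqrt{a},\sqrt{b},\sqrt{c},\sqrt{d})$ rewrites the constraint as $\alpha\delta+\beta\gamma=1$, bijecting the parameter variety with $\mathrm{SL}(2,q)=\mathrm{PGL}(2,q)$ in characteristic $2$. The induced action on $t$ is the Möbius transformation $t\mapsto s$, so composition matches group multiplication in $\mathrm{PGL}(2,q)$ and we obtain an embedding $\mathrm{PGL}(2,q)\hookrightarrow\mathrm{Stab}_{\mathrm{PGL}(3,q)}(\mathcal{HO})$.

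The main obstacle is the matching upper bound. My strategy is to show that any homography $\theta$ stabilizing $\mathcal{HO}$ must also stabilize $\mathcal{C}$, equivalently fix the nucleus $N$; once this is done, $\theta$ lies in the stabilizer of the conic $\mathcal{C}$ in $\mathrm{PGL}(3,q)$, which via the Veronese correspondence $(u,v)\mapsto(u^2,uv,v^2)$ equals the image of $\mathrm{PGL}(2,q)$ and hence coincides with the subgroup produced above, giving equality. The crucial claim is that for $q>4$, $\mathcal{C}$ is the \emph{unique} conic contained in $\mathcal{HO}$. Indeed, if $\mathcal{C}':=\mathcal{HO}\setminus\{P\}$ were a conic for some $P=(1,t_0,t_0^2)^{\mathrm{T}}\in\mathcal{C}$ with equation $\alpha X^2+\beta Y^2+\gamma Z^2+\delta XY+\epsilon XZ+\zeta YZ=0$, then substituting $(0,1,0)^{\mathrm{T}}$ and $(0,0,1)^{\mathrm{T}}$ forces $\beta=\gamma=0$, and substituting the $q-1\ge 7$ parametric points $(1,t,t^2)^{\mathrm{T}}$ for $t\in\mathbb{F}_q\setminus\{t_0\}$ forces the cubic $\zeta T^3+\epsilon T^2+\delta T+\alpha$ to vanish identically, trivializing the form --- a contradiction. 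Therefore $\theta(\mathcal{C})=\mathcal{C}$ and $\theta(N)=N$, completing the argument.
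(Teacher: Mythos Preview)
The paper does not prove this lemma; it is quoted from Hirschfeld and Payne without argument, so there is no ``paper's proof'' to compare against.  Your direct computation is the standard route and is essentially correct.  Two small points deserve attention.

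First, in the uniqueness-of-conic step you only treat removed points of the form $P=(1,t_0,t_0^{2})^{\mathrm T}$, omitting $P=(0,0,1)^{\mathrm T}\in\mathcal C$.  The same method disposes of it: substituting $(0,1,0)^{\mathrm T}$ forces $\beta=0$, and substituting $(1,t,t^{2})^{\mathrm T}$ for all $t\in\mathbb F_q$ gives a quartic in $t$ (now $\gamma$ need not vanish a priori) with $q\ge 8>4$ roots, hence identically zero, trivialising the form.

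Second, the sentence ``the stabilizer of the conic $\mathcal C$ in $\mathrm{PGL}(3,q)$ \dots\ equals the image of $\mathrm{PGL}(2,q)$'' is the crux of the upper bound and should be justified rather than asserted.  One clean line: any element of $\mathrm{PGL}(3,q)$ stabilising $\mathcal C$ fixes its nucleus $N$, so an element fixing three points of $\mathcal C$ fixes the frame $\{P_1,P_2,P_3,N\}$ and is the identity; hence the stabiliser of $\mathcal C$ acts faithfully and (via your embedded $\mathrm{PGL}(2,q)$) sharply $3$-transitively on $\mathcal C$, giving $|\mathrm{Stab}(\mathcal C)|=(q+1)q(q-1)$ and therefore equality with the displayed group.
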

	\begin{thm}\label{re}
		Let $\mathscr{C}$ be the set of NMDS codes over $\mathbb{F}_{q}$ with  generator matrices of the form  \eqref{generator}. If $f(x)$ is the regular o-polynomial, i.e., $f(x)=x^2$, then there is only one equivalence class in $\mathscr{C}$.
	\end{thm}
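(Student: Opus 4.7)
The plan is to invoke Corollary~\ref{equivalent1}: since the codes in $\mathscr{C}$ have length $n=q+3$ (and $q+2<q+3<5(q+2)/4$ whenever $q>2$), the number of monomial equivalence classes in $\mathscr{C}$ equals the number of orbits of the homography stabilizer $M$ of the regular hyperoval $\mathcal{HO}$ on the set $X:=\mathrm{PG}(2,q)\setminus\mathcal{HO}$ of possible choices for $\mathbf{g}_{q+3}$. So it suffices to show $M$ is transitive on $X$. By Lemma~\ref{regular} we have $|M|=q(q^2-1)$, and $|X|=(q^2+q+1)-(q+2)=q^2-1$, so transitivity is tight and every point-stabilizer must have order $q$.

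I would set up the geometry by writing $\mathcal{HO}=\mathcal{O}\cup\{N\}$, where $\mathcal{O}$ is the conic $X_1X_3=X_2^2$ with nucleus $N=(0,1,0)^{\mathrm{T}}$. The identification $c\leftrightarrow(1,c,c^2)^{\mathrm{T}}$ and $\infty\leftrightarrow(0,0,1)^{\mathrm{T}}$ turns the $M$-action on $\mathcal{O}$ into the natural $\mathrm{PGL}(2,q)$-action on $\mathrm{PG}(1,q)$, which is sharply $3$-transitive. In particular $M$ is transitive on the secants of $\mathcal{O}$. Because $q$ is even, every tangent line to $\mathcal{O}$ passes through $N$, so for each $P\in X$ the unique tangent through $P$ is the line $PN$; the remaining $q$ lines through $P$ must partition the other $q$ points of $\mathcal{O}$ in pairs, giving $q/2$ secants through $P$. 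Hence every point of $X$ lies on some secant of $\mathcal{O}$.

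The crux is to show that the $M$-stabilizer of a secant $\ell$ is transitive on $\ell\setminus\mathcal{O}$. I would take $\ell$ to be the line joining $(1,0,0)^{\mathrm{T}}$ and $(0,0,1)^{\mathrm{T}}$, so $\ell\setminus\mathcal{O}=\{(1,0,z)^{\mathrm{T}}:z\in\mathbb{F}_q^{*}\}$ has $q-1$ points, and verify that the diagonal subgroup of $M$ given in Lemma~\ref{regular} by $(a,b,c,d)=(\alpha^{-1},0,0,\alpha)$ with $\alpha\in\mathbb{F}_q^{*}$ (which satisfies $ad+bc=1$) sends $(1,0,z)^{\mathrm{T}}$ to $(\alpha^{-1},0,\alpha z)^{\mathrm{T}}=(1,0,\alpha^{2}z)^{\mathrm{T}}$. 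Since squaring is a bijection of $\mathbb{F}_q^{*}$ in characteristic $2$, this single abelian subgroup already acts transitively on $\ell\setminus\mathcal{O}$.

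Transitivity on $X$ then follows by a routine two-step argument: given $P,P'\in X$, pick secants $\ell\ni P$ and $\ell'\ni P'$, use transitivity on secants to send $\ell$ to $\ell'$ by some $g\in M$ (so $gP\in\ell'\setminus\mathcal{O}$, since $M$ preserves $\mathcal{O}$), and then use the stabilizer of $\ell'$ to move $gP$ to $P'$. The only step that requires real computation is producing the diagonal subgroup above and checking its action on $\ell$; the remaining ingredients follow from standard features of the $\mathrm{PGL}(2,q)$-action on a conic together with the fact that in even characteristic the tangents of a conic concur at the nucleus.
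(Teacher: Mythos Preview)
Your argument is correct and yields the same conclusion, with essentially the same caveat as the paper for the smallest case: Corollary~\ref{equivalent1} requires $q>2$, and the paper simply defers $q=2$ and $q=4$ to a Magma check. (In fact your argument, unlike the paper's, covers $q=4$ as well, since you only use that the matrices of Lemma~\ref{regular} lie in the stabilizer, not that they exhaust it.)

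The route, however, is genuinely different. The paper's proof is a one-line orbit--stabilizer computation: it fixes the concrete point $\mathbf{g}_{q+3}=(0,1,1)^{\mathrm{T}}$, reads off from the matrix form in Lemma~\ref{regular} that $\theta\in M$ fixes it iff $c=0$, $d=1$, hence $a=1$ with $b\in\mathbb{F}_q$ free, giving $|M_{\mathbf{g}_{q+3}}|=q$ and thus $|\mathbf{g}_{q+3}^M|=q^2-1=|X|$. Your proof instead establishes transitivity on $X$ structurally, via the classical geometry of the conic $\mathcal{O}$ underlying the regular hyperoval: every point of $X$ lies on a secant of $\mathcal{O}$, $M$ is transitive on secants because it is $2$-transitive on $\mathcal{O}$, and the diagonal torus already sweeps out a fixed secant minus its two conic points. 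The paper's approach is quicker and purely computational; yours is longer but explains \emph{why} the action is transitive, does not rely on knowing $|M|$ exactly, and makes visible the role of the nucleus (all tangents concur at $N$, so the remaining lines through an external point are forced to be secants or external). Both rest on Corollary~\ref{equivalent1} and the explicit description of $M$ in Lemma~\ref{regular}.
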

	\begin{proof}
		We  provide details only for $q\geq8$, as the cases $q=2$ or $4$ can be verified using  Magma \cite{Bosma}. Note that the generator  matrices of all NMDS codes in $\mathscr{C}$ differ only in the last column vector $\mathbf{g}_{q+3}$.  To determine the number of equivalence classes in  $\mathscr{C}$,  it suffices to consider  the $q^2-1$ cases where $\mathbf{g}_{q+3}$ is different from the points in $\mathcal{HO}$. Let $\mathcal{C}$ be the NMDS code in $\mathscr{C}$ with $\mathbf{g}_{q+3}=(0,1,1)^{\mathrm{T}}$. Recall that $M$ is the homography stabilizer of $\mathcal{HO}$. We regard  $\mathbf{g}_{q+3}$ as a point in $\mathrm{PG}(2,q)$ and  compute the stabilizer $M_{\mathbf{g}_{q+3}}$ of $\mathbf{g}_{q+3}$ in $M$. By Lemma \ref{regular}, for any $\theta\in M$, we have 
		\[\mathbf{g}_{q+3}^{\theta}=\left(\begin{array}{ccc}a&0&c\\\sqrt{ab}&1&\sqrt{cd}\\b&0&d\end{array}\right)
		\begin{pmatrix}0\\1\\1\end{pmatrix}=\begin{pmatrix}c\\1+\sqrt{cd}\\d\end{pmatrix}\]
		for some $a,b,c,d\in\mathbb{F}_q$ satisfying $ad+bc=1$. Suppose that $\theta$ stabilizes $\mathbf{g}_{q+3}$. Then $c=0,d=1$.  Since $ad+bc=1$, we have $a=1$. It follows from $b\in \mathbb{F}_q$ that $|M_{\mathbf{g}_{q+3}}|=q$, and thus $|\mathbf{g}_{q+3}^M|=q^2-1$. By Corollary \ref{equivalent1}, we conclude  that all NMDS codes in  $\mathscr{C}$ are monomially equivalent. This completes the proof.
	\end{proof}
	\begin{remark}
		For  regular o-polynomial, by setting $\mathbf{g}_{q+3}$ as $(0,1,1)^{\mathrm{T}}$  and $(1,1,0)^{\mathrm{T}}$ in \eqref{generator}, we obtain two distinct matrices $G_1$ and $G_2$. Two $[q+3,3,q]$ NMDS codes $\mathcal{C}_1$ and $\mathcal{C}_2$  with generator matrices $G_1$ and $G_2$, were constructed in \cite[Theorem 8]{Wang2021} and \cite[Theorem 10]{Li2023}, respectively. By Theorem \ref{re}, we see that $\mathcal{C}_1$ and $\mathcal{C}_2$ are monomially equivalent.
	\end{remark}
	Two $[q+1,3,q-2]$ NMDS codes $\mathcal{C}_3$ and $\mathcal{C}_4$ with the same weight enumerator were provided in \cite[Theorem 10]{Wang2021}  and \cite[Theorem 13]{Li2023} respectively, where $q=2^m$ and $m\geq3$ is odd. The generator matrices $G_i$  of $\mathcal{C}_i$, for $i=3,4$, are as follows:
	\[ G_3=\begin{pmatrix}1&1&\cdots&1&1&1\\\alpha_1&\alpha_2&\cdots&\alpha_{q-1}&1&0\\\alpha^2_1&\alpha^2_2&\cdots&\alpha^2_{q-1}&0&1\end{pmatrix}, \; G_4=\begin{pmatrix}1&1&\cdots&1&1&0\\\alpha_1&\alpha_2&\cdots&\alpha_{q-1}&1&1\\\alpha^2_1&\alpha^2_2&\cdots&\alpha^2_{q-1}&0&1\end{pmatrix},\]
	where $\mathbb{F}^{*}_{q}=\mathbb{F}_{q}\setminus\{0\}=\{\alpha_1,\dots,\alpha_{q-1}\}$.
	\begin{thm}
		Linear codes $\mathcal{C}_3$ and $\mathcal{C}_4$ are  not monomially equivalent.
	\end{thm}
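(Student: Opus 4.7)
Identifying the columns of $G_3$ and $G_4$ with projective points in $\mathrm{PG}(2,q)$, both arcs $S_3$ and $S_4$ contain the common $q-1$ points $\{(1,\alpha,\alpha^2):\alpha\in\mathbb{F}_q^*\}$ lying on the conic $\mathcal{K}:X_1^2=X_0X_2$, and they differ only in the two extra points: $\{(1,1,0),(1,0,1)\}$ for $S_3$ versus $\{(1,1,0),(0,1,1)\}$ for $S_4$. By the standard translation of monomial equivalence into projective equivalence, it suffices to show that no homography $\theta\in\mathrm{PGL}(3,q)$ satisfies $\theta(S_3)=S_4$, under the hypothesis $q=2^m$ with $m\geq 3$ odd.

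The first key step is to show that any such $\theta$ must preserve $\mathcal{K}$. The image $\theta(\mathcal{K})$ is a conic meeting $S_4$ in at least the $q-1$ points $\theta(S_3\cap\mathcal{K})$. If $\theta(\mathcal{K})\neq\mathcal{K}$, then B\'ezout forces $|\mathcal{K}\cap\theta(\mathcal{K})|\leq 4$, and since $|S_4\setminus\mathcal{K}|=2$ we obtain $|S_4\cap\theta(\mathcal{K})|\leq 6$, contradicting $q-1\geq 7$. Hence $\theta(\mathcal{K})=\mathcal{K}$. Because $\theta$ then sends $S_3\cap\mathcal{K}=\mathcal{K}\setminus\{(1,0,0),(0,0,1)\}$ bijectively onto $S_4\cap\mathcal{K}=\mathcal{K}\setminus\{(1,0,0),(0,0,1)\}$, the pair $\{(1,0,0),(0,0,1)\}$ is fixed setwise by $\theta$.

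Under the standard isomorphism of the conic stabilizer in $\mathrm{PGL}(3,q)$ with $\mathrm{PGL}(2,q)$ acting on $\mathcal{K}$ via the parameterization $t\mapsto(1,t,t^2)$, $\infty\mapsto(0,0,1)$, the subgroup fixing $\{0,\infty\}$ as an unordered pair consists of the $2(q-1)$ maps $t\mapsto at$ and $t\mapsto a/t$ with $a\in\mathbb{F}_q^*$. Lifting to $\mathrm{PGL}(3,q)$ gives two families of candidate homographies, $\theta_a:(x,y,z)\mapsto(x,ay,a^2z)$ and $\theta'_a:(x,y,z)\mapsto(a^2z,ay,x)$. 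A direct check completes the argument: $\theta_a(1,1,0)=(1,a,0)$ lies in $\{(1,1,0),(0,1,1)\}$ only when $a=1$, but then $\theta_1$ is the identity and does not send $(1,0,1)$ to $(0,1,1)$; dually $\theta'_a(1,1,0)=(0,a,a^2)=(0,1,a)$ matches $(0,1,1)$ only for $a=1$, but $\theta'_1(1,0,1)=(1,0,1)$ again misses the target. Hence no such $\theta$ exists, proving $\mathcal{C}_3$ and $\mathcal{C}_4$ are not monomially equivalent. The main obstacle is establishing $\theta(\mathcal{K})=\mathcal{K}$; this relies on the slack $q-1>6$, which is guaranteed by the hypothesis $m\geq 3$.
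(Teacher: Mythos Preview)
Your proof is correct and takes a genuinely different---and in fact cleaner---route than the paper's. The paper works with the regular hyperoval $\mathcal{HO}=\mathcal{K}\cup\{(0,1,0)\}$ rather than the conic: it invokes Lemma~\ref{unique} (two hyperovals sharing more than $(q+2)/2$ points coincide) to force $\mathcal{HO}^\theta=\mathcal{HO}$, but the resulting inequality $q-5>(q+2)/2$ only covers $q\geq 32$, and $\theta$ is then pinned down via the explicit matrix description of the hyperoval stabilizer in Lemma~\ref{regular} together with the extra constraint that $\{(1,0,0),(0,0,1),(0,1,0)\}$ is preserved. The leftover case $q=8$ is dispatched by a Magma computation on extended codes. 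Your B\'ezout bound $|\mathcal{K}\cap\theta(\mathcal{K})|\leq 4$ is sharper and yields $\theta(\mathcal{K})=\mathcal{K}$ already for $q\geq 8$, so the argument is uniform across all odd $m\geq 3$ with no computer check required; the subsequent reduction to the $2(q-1)$ maps fixing $\{0,\infty\}$ on the conic is the same in spirit as the paper's endgame.

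One small slip worth fixing: with your stated formula $\theta'_a:(x,y,z)\mapsto(a^2z,ay,x)$ one actually gets $\theta'_a(1,1,0)=(0,a,1)$, not $(0,a,a^2)$. The map consistent with both your computation and the action $t\mapsto a/t$ is $(x,y,z)\mapsto(z,ay,a^2x)$. Either version forces $a=1$ and then $\theta'_1(1,0,1)=(1,0,1)\notin\{(1,1,0),(0,1,1)\}$, so the conclusion is unaffected.
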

	\begin{proof}
		Note that the first $q$ columns of $G_3$ and $G_4$ are  identical and  the first $q-1$ columns form a $(q-1)$-arc $\mathcal{A}=\mathcal{HO}\setminus\mathcal{S}$, where $\mathcal{HO}$ is a regular hyperoval and $\mathcal{S}=\{(1,0,0)^{\mathrm{T}},(0,0,1)^{\mathrm{T}},(0,1,0)^{\mathrm{T}}\}$.  Let $X$ and $Y$ denote the sets of points $\{(1,1,0)^{\mathrm{T}},(1,0,1)^{\mathrm{T}}\}$ and $\{(1,1,0)^{\mathrm{T}},(0,1,1)^{\mathrm{T}}\}$, respectively.  Suppose that $\mathcal{C}_3$ and $\mathcal{C}_4$ are monomially equivalent.
		Similarly  to the proof of Proposition \ref{equivalent}, there exists a homography $\theta$ such that $\mathcal{A}^{\theta}\cup X^{\theta}=\mathcal{A}\cup Y$.  Note that \[|\mathcal{HO}\cap\mathcal{HO}^{\theta}|=|(\mathcal{A}\cup\mathcal{S})\cap(\mathcal{A}^{\theta}\cup\mathcal{S}^{\theta})|\geq|\mathcal{A}\cap\mathcal{A}^{\theta}|\geq q-5>(q+2)/2\] for $q\geq32$. By Lemma \ref{unique}, this implies that   $\mathcal{HO}=\mathcal{HO}^{\theta}$, and consequently $\mathcal{A}=\mathcal{A}^{\theta}, \mathcal{S}=\mathcal{S}^{\theta}$. By Lemma \ref{regular}, the matrix representing $\theta$ 
		is of the form $\left(\begin{array}{ccc}a&0&c\\\sqrt{ab}&1&\sqrt{cd}\\b&0&d\end{array}\right)$, where $ad+bc=1$. Since $\mathcal{S}^{\theta}=\mathcal{S}$, we have \[\{(1,0,0)^{\mathrm{T}},(0,0,1)^{\mathrm{T}},(0,1,0)^{\mathrm{T}}\}=\{(a,\sqrt{ab},b)^{\mathrm{T}},(c,\sqrt{cd},d)^{\mathrm{T}},(0,1,0)^{\mathrm{T}}\}.\] It follows that  either $a=d=1,b=c=0$ or $a=d=0,b=c=1$. In either case, $\theta$ maps $(1,0,1)^{\mathrm{T}}$ to $(a+c,\sqrt{ab}+\sqrt{cd},b+d)=(1,0,1)^{\mathrm{T}}\notin Y$, which is a contradiction.
		
		Next, we consider the case $q=8$. Note that if $\mathcal{C}_3$ and $\mathcal{C}_4$ are  equivalent, then their extended codes generated by matrices  obtained by adding the same columns must also be equivalent. Let $\omega$ be a primitive element of $\mathbb{F}^{*}_8$. We added the column $(1,\omega,1)^{\mathrm{T}}$ to $G_3$ and $G_4$ to generate two linear codes $\mathcal{C}'_3$ and $\mathcal{C}'_4$, respectively. Using Magma \cite{Bosma}, we verify that $\mathcal{C}'_3$ is a $[10,3,7]$ linear code with weight enumerator
		\[A(z)=1+63z^7+126z^8+189z^9+133z^{10}\] and $\mathcal{C}'_4$ is a $[10,3,7]$ linear code with weight enumerator \[A(z)=1+56z^7+147z^8+168z^9+140z^{10}.\] Since the weight enumerators of $\mathcal{C}'_3$ and $\mathcal{C}'_4$  differ, $\mathcal{C}'_3$ and $\mathcal{C}'_4$ are not equivalent. This implies that $\mathcal{C}_3$ and $\mathcal{C}_4$ are also not equivalent. This completes the proof.   	
	\end{proof}
	Note that every hyperoval is regular for $q=2,4,8$.   From now on, we suppose that $q=2^m, m\geq4$.
	\begin{lemma}\label{irregular}
		\cite[Theorem 4.3]{O'Keefe1994}\cite[Theorem 4.11.6]{Payne} An irregular translation hyperoval with o-polynomial $f(x)=x^{\alpha}$ in $\mathrm{PG}(2,q)$, where $\alpha=2^h, q=2^m\geq16$, and $\gcd(h,m)=1$, has homography stabilizer of order $q(q-1)$. This  stabilizer is isomorphic to $\mathrm{AGL}(1,q)$ and consists of the maps:  $(x,y,z)^{\mathrm{T}}\mapsto A(x, y,z )^{\mathrm{T}}$, where 
		$$\left.A=\left(\begin{array}{ccc}1&0&0\\a&b&0\\a^{\alpha}&0&b^{\alpha}\end{array}\right.\right),\mathrm{~with~} a,b\in \mathbb{F}_q, b\neq0.$$
	\end{lemma}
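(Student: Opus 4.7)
My plan is to prove the lemma in two stages: a direct computation that every matrix of the listed form stabilizes $\mathcal{HO}(x^{\alpha})$, and a rigidity argument ruling out any other homographies.

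\textbf{Forward direction.} I would evaluate a generic matrix $A$ of the stated shape on the three kinds of points of $\mathcal{HO}$. Both ``infinite'' points are visibly fixed projectively, since $A(0,0,1)^{\mathrm{T}}=(0,0,b^{\alpha})^{\mathrm{T}}$ and $A(0,1,0)^{\mathrm{T}}=(0,b,0)^{\mathrm{T}}$. For an affine point $(1,c,c^{\alpha})^{\mathrm{T}}$, the image equals $(1,\,a+bc,\,a^{\alpha}+b^{\alpha}c^{\alpha})^{\mathrm{T}}$, and the Frobenius identity $(a+bc)^{\alpha}=a^{\alpha}+b^{\alpha}c^{\alpha}$, valid in characteristic $2$ with $\alpha=2^{h}$, shows the image still lies in $\mathcal{HO}$. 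Counting parameters yields $q(q-1)$ such matrices, and composing two of them gives $(a_{1},b_{1})\cdot(a_{2},b_{2})=(a_{1}+b_{1}a_{2},\,b_{1}b_{2})$, which is exactly the group law of $\mathrm{AGL}(1,q)=\mathbb{F}_{q}\rtimes\mathbb{F}_{q}^{\ast}$.

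\textbf{Converse (the main obstacle).} Let $\theta$ be an arbitrary homography stabilizing $\mathcal{HO}$. The crux is to show that $\theta$ fixes each of $(0,0,1)^{\mathrm{T}}$ and $(0,1,0)^{\mathrm{T}}$ individually; once this is established, the fixation constraints force the $(1,2), (1,3), (2,3), (3,2)$ entries of the matrix to vanish, and after normalizing the top-left entry to $1$, imposing that a single affine point such as $(1,1,1)^{\mathrm{T}}$ (and hence the whole affine orbit) maps back into $\mathcal{HO}$ pins down the $(3,1)$- and $(3,3)$-entries as the $\alpha$-th powers of the $(2,1)$- and $(2,2)$-entries, yielding exactly the stated form. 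My plan for the crux is: assume $\theta$ sent $(0,0,1)^{\mathrm{T}}$ to some affine point; post-compose with an element of the $\mathrm{AGL}(1,q)$-subgroup built in the forward direction (which fixes the two special points and acts transitively on the affine locus) to reduce to a concrete target like $(1,0,0)^{\mathrm{T}}$; then write out the resulting matrix together with a candidate image for $(0,1,0)^{\mathrm{T}}$ and demand that every remaining affine point also maps into $\mathcal{HO}$. Comparing second and third coordinates leads to the functional identity $x^{\alpha^{2}}=x$ for all $x\in\mathbb{F}_{q}$, i.e.\ $2h\equiv 0\pmod{m}$, which together with $\gcd(h,m)=1$ forces $m\mid 2$, contradicting $m\geq 4$. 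The same obstruction rules out $\theta$ swapping the two special points, and a symmetric argument forbids $\theta$ from sending $(0,1,0)^{\mathrm{T}}$ into the affine part.

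I expect the forward direction to be routine bookkeeping. The genuine difficulty is the irregularity calculation above: the assumption $\gcd(h,m)=1$ with $h>1$ is precisely what makes the two infinite points projectively rigid and prevents the stabilizer from being any larger than $q(q-1)$, in contrast to the regular case of Lemma \ref{regular} where the extra conic structure enlarges the stabilizer to $\mathrm{PGL}(2,q)$.
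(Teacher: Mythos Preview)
The paper does not prove this lemma at all; it is quoted verbatim from O'Keefe--Penttila (1994) and Payne's lecture notes, so there is no in-paper argument to compare against. Your proposal is therefore an independent proof, and the forward direction together with the reduction ``once both special points are individually fixed, the matrix must have the displayed shape'' is correct and cleanly argued.

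The converse plan is sound in outline but the specific obstruction you name is misplaced. The identity $x^{\alpha^{2}}=x$ (equivalently $2h\equiv 0\pmod m$, forcing $m\mid 2$) is exactly what comes out of the \emph{swap} sub-case $(0,0,1)\leftrightarrow(0,1,0)$: writing $A$ with second and third columns proportional to $(0,0,1)^{\mathrm T}$ and $(0,1,0)^{\mathrm T}$ and imposing that every $(1,c,c^{\alpha})$ lands back on $\mathcal{HO}$ yields $t+uc=(r+sc^{\alpha})^{\alpha}=r^{\alpha}+s^{\alpha}c^{\alpha^{2}}$, whence $c^{\alpha^{2}}$ must be linear in $c$. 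By contrast, in the case you actually start with---$(0,0,1)$ sent to an affine point, normalised to $(1,0,0)$, with (say) $(0,1,0)$ fixed---the matrix has the shape $A=\begin{pmatrix}p&0&\lambda\\ r&\mu&0\\ t&0&0\end{pmatrix}$ with $t,\lambda,\mu\neq 0$, and the membership condition becomes $t\,(p+\lambda u)^{\alpha-1}=r^{\alpha}+\mu^{\alpha}u$ for all $u=c^{\alpha}$ (except one), a polynomial identity of degree $\alpha-1\ge 3$ on the left versus degree~$1$ on the right. So the contradiction here is a degree count, not $x^{\alpha^{2}}=x$. The remaining sub-cases (e.g.\ both special points sent to affine points) are handled by the same kind of degree comparison after using your residual $(q-1)$-element stabiliser in $\mathrm{AGL}(1,q)$ to normalise the second image as well. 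With this correction the case analysis does go through, but be prepared for several sub-cases rather than a single recurring identity.
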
 
	\begin{thm}
		Let $\mathscr{C}$ be the set of NMDS codes over $\mathbb{F}_{q}$ with generator matrices of the form  \eqref{generator}. If $f(x)=x^{\alpha}$ is an irregular translation o-polynomial, where $\alpha=2^h, q=2^m\geq16$, and  $\gcd(h,m)=1$, then there are two equivalence classes in $\mathscr{C}$.
	\end{thm}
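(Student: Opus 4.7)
The plan is to apply Corollary \ref{equivalent1}: since $q+2 < q+3 < 5(q+2)/4$ for $q \geq 16$, the number of monomial equivalence classes in $\mathscr{C}$ equals the number of orbits of the homography stabilizer $M$ of $\mathcal{HO}$ acting on $\mathrm{PG}(2,q)\setminus\mathcal{HO}$, a set of $q^2+q+1-(q+2)=q^2-1$ points. By Lemma \ref{irregular}, every element of $M$ is represented by a matrix
\[
A=\begin{pmatrix}1 & 0 & 0\\ a & b & 0\\ a^{\alpha} & 0 & b^{\alpha}\end{pmatrix},\qquad a\in\mathbb{F}_q,\ b\in\mathbb{F}_q^*.
\]
Because the first row of $A$ is $(1,0,0)$, the action preserves the vanishing of the first coordinate, so $\mathrm{PG}(2,q)\setminus\mathcal{HO}$ splits into two $M$-invariant subsets:
\[
\cU=\{(1,y,z)^{\mathrm{T}}: z\neq y^{\alpha}\}\quad\text{of size }q^2-q,\qquad
\cV=\{(0,1,z)^{\mathrm{T}}: z\neq 0\}\quad\text{of size }q-1.
\]

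Next I would show each of $\cU$ and $\cV$ is a single $M$-orbit. For $\cU$, I would take the base point $(1,0,1)^{\mathrm{T}}$ and compute
\[
A(1,0,1)^{\mathrm{T}}=(1,\,a,\,a^{\alpha}+b^{\alpha})^{\mathrm{T}}.
\]
As $a$ ranges over $\mathbb{F}_q$, the second coordinate is arbitrary; for fixed $a$, the third coordinate is $a^{\alpha}+b^{\alpha}$ with $b\in\mathbb{F}_q^*$. Using $\gcd(\alpha,q-1)=\gcd(2^h,2^m-1)=1$ (as $q$ is even), $b^{\alpha}$ ranges over all of $\mathbb{F}_q^*$, so the third coordinate hits precisely $\mathbb{F}_q\setminus\{a^{\alpha}\}$. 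This exhausts $\cU$, and a quick stabilizer check (setting $A(1,0,1)^{\mathrm{T}}=\lambda(1,0,1)^{\mathrm{T}}$ forces $a=0$ and $b^{\alpha}=1$, hence $b=1$) confirms the orbit has size $|M|=q(q-1)$. For $\cV$, the base point $(0,1,1)^{\mathrm{T}}$ satisfies
\[
A(0,1,1)^{\mathrm{T}}=(0,\,b,\,b^{\alpha})^{\mathrm{T}}\sim (0,1,b^{\alpha-1})^{\mathrm{T}}.
\]
Transitivity on $\cV$ reduces to showing $b\mapsto b^{\alpha-1}$ permutes $\mathbb{F}_q^*$.

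The main (if small) obstacle is this last coprimality fact: one must observe $\gcd(\alpha-1,q-1)=\gcd(2^h-1,2^m-1)=2^{\gcd(h,m)}-1=1$, which uses the hypothesis $\gcd(h,m)=1$ from Lemma \ref{irregular} in a way that did not appear in the regular case. With both orbits confirmed to be full, $\cU$ and $\cV$ give exactly two $M$-orbits on $\mathrm{PG}(2,q)\setminus\mathcal{HO}$; by Corollary \ref{equivalent1}, $\mathscr{C}$ splits into exactly two monomial equivalence classes, as claimed.
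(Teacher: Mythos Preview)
Your proof is correct and follows essentially the same approach as the paper: both arguments use Corollary~\ref{equivalent1} and the description of $M$ from Lemma~\ref{irregular}, split the $q^2-1$ points outside $\mathcal{HO}$ into the same two $M$-invariant pieces of sizes $q^2-q$ and $q-1$, and rely on the key arithmetic fact $\gcd(2^h-1,2^m-1)=2^{\gcd(h,m)}-1=1$ to handle the second piece. The only cosmetic difference is that you verify transitivity directly by computing the images of a base point, whereas the paper computes stabilizers and invokes orbit--stabilizer together with the count $s_1+s_2=q^2-1$.
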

	\begin{proof}
		Similarly  to the proof of Theorem \ref{re}, let   $\mathcal{C}_{x,k}$ be the NMDS code in $\mathscr{C}$ with $\mathbf{g}_{q+3}=(1,x,x^\alpha+k)^{\mathrm{T}}$, where $x\in \mathbb{F}_q$ and $k\in \mathbb{F}^{*}_q$. 
		By Lemma \ref{irregular}, for any $\theta\in M$, we have 
		\[\mathbf{g}_{q+3}^{\theta}=\left(\begin{array}{ccc}1&0&0\\a&b&0\\a^{\alpha}&0&b^{\alpha}\end{array}\right)
		\begin{pmatrix}1\\x\\x^\alpha+k\end{pmatrix}=\begin{pmatrix}1\\a+bx\\(a+bx)^{\alpha}+kb^{\alpha}\end{pmatrix}\]
		for some $a\in \mathbb{F}_q$ and  $b\in \mathbb{F}^{*}_q$.	  Suppose that $\theta$ stabilizes $\mathbf{g}_{q+3}$. Then $a+bx=x$ and $b^{\alpha}=1$.  Since $f(x)=x^{\alpha}$ is a permutation polynomial over $\mathbb{F}_q$, it follows  that $a=0,b=1$.  Consequently, $|M_{\mathbf{g}_{q+3}}|=1$ and so $s_1:=|\mathbf{g}_{q+3}^M|=q^2-q$. 
		
		Let   $\mathcal{C}^{u}$ be the NMDS code in $\mathscr{C}$ with $\mathbf{g}_{q+3}=(0,1,u)^{\mathrm{T}}$, where $u\in \mathbb{F}^{*}_q$. 
		By Lemma \ref{irregular}, for any $\theta\in M$, we have 
		\[\mathbf{g}_{q+3}^{\theta}=\left(\begin{array}{ccc}1&0&0\\a&b&0\\a^{\alpha}&0&b^{\alpha}\end{array}\right)
		\begin{pmatrix}0\\1\\u\end{pmatrix}=\begin{pmatrix}0\\b\\ub^{\alpha}\end{pmatrix}\]  for some $a\in \mathbb{F}_q$ and  $b\in \mathbb{F}^{*}_q$.
		Suppose that $\theta$ stabilizes $\mathbf{g}_{q+3}$. Then $b^{2^h-1}=b^{\alpha-1}=1$. Since $b^{q-1}=b^{2^m-1}=1$, we have $b^{\gcd(2^h-1,2^m-1)}=1$ and so $b=1$. It follows  from $a\in\mathbb{F}_q$ that $|M_{\mathbf{g}_{q+3}}|=q$ and so $s_2:=|\mathbf{g}_{q+3}^M|=q-1$. Since $s_1+s_2=q^2-1$, by Corollary \ref{equivalent1}, we conclude that there are two equivalence classes in $\mathscr{C}$.  This completes the proof.
	\end{proof}
	
	\begin{lemma}\label{monomial}
		\cite[Theorem 4.4]{O'Keefe1994}\cite[Theorem 4.12.8]{Payne} Let $\mathcal{HO}$ be a  hyperoval with  o-polynomial $f(x)=x^{n}$, but  not a translation hyperoval. If $n^2-n+1 \not\equiv 0 \pmod{q-1}$, then $\mathcal{HO}$
		has homography stabilizer $H$ of order $(q-1)$  which consists of the maps:  $(x,y,z)^{\mathrm{T}}\mapsto A(x, y,z )^{\mathrm{T}}$, where 
		$$\left.A=\left(\begin{array}{ccc}1&0&0\\0&a&0\\0&0&a^n\end{array}\right.\right),\mathrm{~with~} a\in \mathbb{F}^{*}_q. $$
		If $n^2-n+1 \equiv 0 \pmod{q-1}$, then $H$ contains an additional homography $\sigma$ such that $|H|=3(q-1)$ and  the matrix representing $\sigma$ 
		is of the form 
		$\left(\begin{array}{ccc}0&1&0\\0&0&1\\1&0&0\end{array}\right)$. 
	\end{lemma}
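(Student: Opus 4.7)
The plan is to analyse $H$ through its induced action on the ``coordinate triangle'' $T=\{P_1,P_2,P_3\}$, where $P_1=(1,0,0)$ is the point of $\mathcal{HO}$ corresponding to $c=0$ under the parametrization $c\mapsto(1,c,c^n)$, while $P_2=(0,1,0)$ and $P_3=(0,0,1)$ are the two ``infinite'' points of that parametrization. The key structural claim is that every $\theta\in H$ setwise stabilises $T$; granting this, one obtains a homomorphism $\phi:H\to S_3$ acting on $T$, and it suffices to identify $\ker\phi$ and determine the image.

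If $\theta$ fixes every vertex, its matrix is diagonal $A=\mathrm{diag}(\lambda_1,\lambda_2,\lambda_3)$; imposing $A(1,c,c^n)^T\in\mathcal{HO}$ for every $c\in\mathbb{F}_q$ and normalising $\lambda_1=1$, $\lambda_2=a$ forces $\lambda_3=a^n$, so $\ker\phi$ is precisely the diagonal group $D=\{\mathrm{diag}(1,a,a^n):a\in\mathbb{F}_q^*\}$ of order $q-1$. For a non-trivial $\pi\in S_3$, associate the most general matrix $A_\pi$ realising $\pi$ whose $i$th column is a scalar $\lambda_i$ times $P_{\pi(i)}$, and impose $A_\pi(1,c,c^n)^T\in\mathcal{HO}$ for every $c\in\mathbb{F}_q^*$. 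This translates to a polynomial identity of the form $\lambda_i^n=\lambda_j\lambda_k^{n-1}c^{e(\pi)}$ which must hold for every $c\in\mathbb{F}_q^*$, forcing $e(\pi)\equiv 0\pmod{q-1}$. A direct case computation yields $e(\pi)=n^2-n+1$ for each of the two $3$-cycles, and $e(\pi)\in\{n(n-2),\,2n-1,\,n^2-1\}$ for the three transpositions fixing $P_2$, $P_3$, $P_1$ respectively. When $n^2-n+1\equiv 0\pmod{q-1}$, the explicit matrix $\sigma$ (with $\lambda_1=\lambda_2=\lambda_3=1$) satisfies $\sigma(1,c,c^n)^T=(c,c^n,1)^T=(1,c^{n-1},c^{-1})^T$, and $(c^{n-1})^n=c^{n^2-n}=c^{-1}$ by hypothesis, so $\sigma\in H$.

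The three transposition conditions are ruled out using the assumption that $\mathcal{HO}$ is not a translation hyperoval. Combined with $\gcd(n,q-1)=1$ (necessary for $x^n$ to be a permutation), $n(n-2)\equiv 0\pmod{q-1}$ forces $n\equiv 2$ and thus $x^n=x^2$; $2n-1\equiv 0\pmod{q-1}$ forces $n\equiv 2^{m-1}$ and thus $x^n=x^{2^{m-1}}$ with $\gcd(m-1,m)=1$; and $n^2\equiv 1\pmod{q-1}$ leaves, after discarding $n\equiv 1$, only even involutions in $(\mathbb{Z}/(q-1))^*$, each of which either is already a power of $2$ modulo $q-1$ (and so $x^n$ is directly a translation hyperoval) or becomes one after the explicit coordinate swap $(X_1,X_2,X_3)\mapsto(X_3,X_2,X_1)$, which conjugates $\mathcal{HO}(x^{q-2})$ to $\mathcal{HO}(x^{2^{m-1}})$. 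In every case $\mathcal{HO}$ is projectively equivalent to a translation hyperoval, contradicting the hypothesis. Hence the image of $\phi$ contains only the identity and possibly the two $3$-cycles, giving $|H|=q-1$ when $n^2-n+1\not\equiv 0\pmod{q-1}$ and $|H|=3(q-1)$ otherwise. The main obstacle is the preliminary claim that $T$ is $H$-invariant setwise; the idea there is that if some $\theta\in H$ sent a vertex to an affine non-vertex point $(1,c_0,c_0^n)$ with $c_0\in\mathbb{F}_q^*$, then $\langle D,\theta\rangle$ would act transitively on the $q$ affine points of $\mathcal{HO}$, endowing $\mathcal{HO}$ with a ``translation''-type automorphism group in the sense of \cite[Theorem 4.3]{O'Keefe1994} and forcing $f(x)=x^{2^h}$ with $\gcd(h,m)=1$, again contradicting the non-translation hypothesis.
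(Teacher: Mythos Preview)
The paper does not prove this lemma; it is quoted from \cite{O'Keefe1994} and \cite{Payne}, so there is no in-paper argument to compare against. Your overall architecture---restrict $H$ to the coordinate triangle $T$, identify $\ker\phi$ with the diagonal torus $D$, and analyse which elements of $S_3$ lift---is exactly the natural one and matches how these results are usually derived. The computations of the exponents $e(\pi)$ for the six permutations are correct, and the verification that $\sigma\in H$ when $n^2-n+1\equiv 0\pmod{q-1}$ is fine.

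There is, however, a real gap in your exclusion of the transposition $(P_2\,P_3)$. You need that whenever $x^n$ is an o-polynomial with $n^2\equiv 1\pmod{q-1}$, the hyperoval $\mathcal{HO}(x^n)$ is projectively a translation hyperoval. Your argument disposes of $n\equiv 2$, $n\equiv 2^{m-1}$, and $n\equiv -1$, but then asserts that any remaining involution in $(\mathbb{Z}/(q-1))^{*}$ ``either is already a power of $2$ \dots\ or becomes one after the explicit coordinate swap''. This is not a general fact about $(\mathbb{Z}/(2^m-1))^{*}$: already for $m=10$ there are eight square roots of $1$ modulo $1023$, of which only $1$, $32$ and $1022$ fall under the cases you handle. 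What is actually needed is an argument using the hypothesis that $x^n$ is an o-polynomial (equivalently, a geometric argument about the elation you have produced), not merely arithmetic in $\mathbb{Z}/(q-1)$. In \cite{O'Keefe1994} this step is done via the structure of involutions in $\mathrm{PGL}(3,q)$ for $q$ even (every involution is an elation) together with a characterisation of translation hyperovals by their elation groups; your write-up gestures at this for the $T$-invariance step but does not carry it through for the transposition case. As written, the proof is incomplete at precisely this point.
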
 
	
	\begin{thm}
		Let $\mathscr{C}$ be the set of NMDS codes over $\mathbb{F}_{q}$ with  generator matrices of the form  \eqref{generator}. If $f(x)=x^{n}$ is an o-polynomial  but not a translation o-polynomial, then there are $(q+1)/3$ equivalence classes in $\mathscr{C}$ if  $n^2-n+1 \equiv 0 \pmod{q-1}$. Otherwise, there are $q+r$ equivalence classes in $\mathscr{C}$, where $r:=\#\{a\in \mathbb{F}^{*}_q: a^{n-1}=1\}$.
	\end{thm}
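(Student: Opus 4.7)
The plan is to apply Corollary \ref{equivalent1}, which reduces the problem to counting the orbits of the homography stabilizer $H$ of $\mathcal{HO}$ on the $q^2-1$ points of $\mathrm{PG}(2,q)\setminus\mathcal{HO}$. By Lemma \ref{monomial}, $H$ contains the diagonal subgroup $H_0=\{\mathrm{diag}(1,a,a^n):a\in\mathbb{F}_q^*\}$ of order $q-1$, and, when $n^2-n+1\equiv 0\pmod{q-1}$, the additional order-three map $\sigma:(x,y,z)^{\mathrm{T}}\mapsto(y,z,x)^{\mathrm{T}}$.

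First I would partition $\mathrm{PG}(2,q)\setminus\mathcal{HO}$ into three $H_0$-invariant types: (I) $(0,1,u)$ with $u\in\mathbb{F}_q^*$; (II) $(1,0,y)$ with $y\in\mathbb{F}_q^*$; and (III) $(1,x,y)$ with $x\in\mathbb{F}_q^*$ and $y\neq x^n$. The action of $\mathrm{diag}(1,a,a^n)$ is $u\mapsto a^{n-1}u$ on Type I (producing $r=\#\{a\in\mathbb{F}_q^*:a^{n-1}=1\}$ orbits), $y\mapsto a^n y$ on Type II (one orbit, since $\gcd(n,q-1)=1$ because $x^n$ is a permutation), and it preserves the invariant $y/x^n$ on Type III (producing $q-1$ orbits parameterized by $y/x^n\in\mathbb{F}_q\setminus\{1\}$). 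Hence $H_0$ has $q+r$ orbits. When $n^2-n+1\not\equiv 0\pmod{q-1}$, we have $H=H_0$, which proves the ``otherwise'' case.

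Now suppose $n^2-n+1\equiv 0\pmod{q-1}$. Since $n(n-1)\equiv-1\pmod{q-1}$ forces $\gcd(n-1,q-1)$ to divide $n(n-1)+1=1$, we have $r=1$, so there are $q+1$ $H_0$-orbits, and the $H$-orbits correspond bijectively to the $\langle\sigma\rangle$-orbits on them. Checking on representatives, $\sigma(0,1,u)=(1,u,0)$ lies in Type III with invariant $0$, $\sigma(1,1,0)=(1,0,1)$ lies in Type II, and $\sigma(1,0,1)=(0,1,1)$ lies in Type I, so the Type I, Type III-with-invariant-$0$, and Type II orbits are cyclically permuted and fuse into a single $H$-orbit. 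On the Type III orbits with invariant $u_0\in\mathbb{F}_q^*\setminus\{1\}$, $\sigma$ sends the orbit labelled $u_0$ to the one labelled $1/u_0^n=u_0^{-n}$, so fixed points correspond to solutions of $u_0^{n+1}=1$ in $\mathbb{F}_q^*\setminus\{1\}$.

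The final step is to rule out such fixed points. The identity $n^2-n+1=(n+1)(n-2)+3$ together with $(q-1)\mid(n^2-n+1)$ yields $\gcd(n+1,q-1)\in\{1,3\}$; inspection of Table \ref{hyperovals} shows that every non-translation monomial o-polynomial $x^n$ over $\mathbb{F}_{2^m}$ requires $m$ odd, so $q=2^m\equiv 2\pmod 3$, whence $3\nmid q-1$ and $\gcd(n+1,q-1)=1$. The only solution of $u^{n+1}=1$ is therefore $u=1$, which is excluded, so the remaining $q-2$ Type III orbits split into $(q-2)/3$ three-cycles under $\langle\sigma\rangle$, giving the announced count $1+(q-2)/3=(q+1)/3$. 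The main obstacle is this last divisibility step: without using the classification of monomial o-polynomials one would land on $(q+5)/3$ classes instead.
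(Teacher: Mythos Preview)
Your argument is correct and follows the same overall strategy as the paper: both apply Corollary \ref{equivalent1}, first count the $q+r$ orbits of the diagonal subgroup $H_0=\{\mathrm{diag}(1,a,a^n)\}$ on $\mathrm{PG}(2,q)\setminus\mathcal{HO}$, and then treat the extra $\sigma$ when $n^2-n+1\equiv 0\pmod{q-1}$.

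The treatment of the $\sigma$-extension differs in style. The paper works pointwise: it shows directly that no element of the cosets $\sigma H_0$ or $\sigma^2 H_0$ fixes any point off $\mathcal{HO}$, so every point stabilizer in $H$ lies in $H_0$ and each $H$-orbit is a union of exactly three $H_0$-orbits. You instead pass to the quotient action of $\langle\sigma\rangle\cong H/H_0$ on the set of $H_0$-orbits, compute this permutation via the invariant $y/x^n$, and show it is fixed-point-free. The two arguments are logically equivalent (an $H_0$-orbit is $\sigma$-fixed iff some coset element fixes one of its points), but your invariant viewpoint makes the cyclic fusion of the three ``degenerate'' orbits (Types I, II, and III with invariant $0$) and the map $u_0\mapsto u_0^{-n}$ on the remaining $q-2$ invariants completely transparent.

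One point worth noting: both arguments ultimately need $\gcd(3,q-1)=1$. The paper invokes this without comment in its stabilizer computation, whereas you derive it explicitly by observing that every non-translation monomial o-polynomial in Table~\ref{hyperovals} (Segre, Glynn~I, Glynn~II) requires $m$ odd, whence $q=2^m\equiv 2\pmod 3$. This is a genuine clarification of a hidden hypothesis in the paper's proof, and your remark that the count would become $(q+5)/3$ otherwise pinpoints exactly where this enters.
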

	\begin{proof}
		We first consider the case where $n^2-n+1 \not\equiv 0 \pmod{q-1}$. 
		Similarly  to the proof of Theorem \ref{re}, let   $\mathcal{C}_{x,k}$ be the NMDS code in $\mathscr{C}$ with $\mathbf{g}_{q+3}=(1,x,x^n+k)^{\mathrm{T}}$, where  $x\in \mathbb{F}_q$ and $k\in \mathbb{F}^{*}_q$. 
		By Lemma \ref{monomial}, for any $\theta_a\in M$, we have 
		\[\mathbf{g}_{q+3}^{\theta_a}=\left(\begin{array}{ccc}1&0&0\\0&a&0\\0&0&a^n\end{array}\right)
		\begin{pmatrix}1\\x\\x^n+k\end{pmatrix}=\begin{pmatrix}1\\ax\\(ax)^n+a^nk\end{pmatrix}\]
		for some $a\in \mathbb{F}^{*}_q$. Suppose that $\theta_a$ stabilizes $\mathbf{g}_{q+3}$. Then $a=1$. It follows that $|M_{\mathbf{g}_{q+3}}|=1$, and thus $|\mathbf{g}_{q+3}^M|=q-1$. Hence, by Corollary \ref{equivalent1}, there are $q$ equivalence classes of NMDS codes in $\mathscr{C}$ such that  $\mathbf{g}_{q+3}$ is of the form  $(1,x,x^n+k)^{\mathrm{T}}$,  where  $x\in \mathbb{F}_q$ and $k\in \mathbb{F}^{*}_q$.
		Let   $\mathcal{C}^{u}$ be the NMDS code in $\mathscr{C}$ with $\mathbf{g}_{q+3}=(0,1,u)^{\mathrm{T}}$, where $u\in \mathbb{F}^{*}_q$. 
		By Lemma \ref{monomial}, for any $\theta_a\in M$, we have 
		\[\mathbf{g}_{q+3}^{\theta}=\left(\begin{array}{ccc}1&0&0\\0&a&0\\0&0&a^n\end{array}\right)
		\begin{pmatrix}0\\1\\u\end{pmatrix}=\begin{pmatrix}0\\a\\ua^{n}\end{pmatrix}\]
		for some $a\in \mathbb{F}^{*}_q$. Suppose that $\theta_a$ stabilizes $\mathbf{g}_{q+3}$. Then $a^{n-1}=1$. Let $r:=\#\{a\in \mathbb{F}^{*}_q: a^{n-1}=1\}$. It follows that $|M_{\mathbf{g}_{q+3}}|=r$ and so $|\mathbf{g}_{q+3}^M|=(q-1)/r$. By Corollary \ref{equivalent1}, there are $r$ equivalence classes of NMDS codes in $\mathscr{C}$ such that  $\mathbf{g}_{q+3}$ is of the form  $(0,1,u)^{\mathrm{T}}$, where $u\in \mathbb{F}^{*}_q$.   Hence, there are $q+r$ equivalence classes in $\mathscr{C}$.
		
		Next, we  consider the case where $n^2-n+1  \equiv 0 \pmod{q-1}$. By Lemma \ref{monomial}, there exists a homography $\sigma$ represented by the matrix $\left(\begin{array}{ccc}0&1&0\\0&0&1\\1&0&0\end{array}\right)$. We claim that $\sigma,\sigma\theta_a$ and $\sigma^2\theta_a$ cannot stabilize  any point not in  $\mathcal{HO}$ for any $a\in\mathbb{F}^{*}_q$. We only provide the details for  $\sigma\theta_a$ and the other cases are similar. It is obvious that the points of the form $(0,1,u)^{\mathrm{T}}$ cannot stabilized by $\sigma\theta_a$ for any $a\in\mathbb{F}^{*}_q$, where $u\in \mathbb{F}^{*}_q$. 
		Suppose that $\mathbf{g}_{q+3}=(1,x,x^n+k)^{\mathrm{T}}$ is stabilized by $\sigma\theta_a$ for some $a\in\mathbb{F}^{*}_q$, where $x$, $k$ and $x^n+k$ are all nonzero. Since \[\mathbf{g}_{q+3}^{\sigma\theta_a}=\left(\begin{array}{ccc}0&1&0\\0&0&1\\1&0&0\end{array}\right)\left(\begin{array}{ccc}1&0&0\\0&a&0\\0&0&a^n\end{array}\right)
		\begin{pmatrix}1\\x\\x^n+k\end{pmatrix}=\begin{pmatrix}ax\\a^n(x^n+k)\\1\end{pmatrix},\] we have $\frac{ax}{1}=\frac{a^n(x^n+k)}{x}=\frac{1}{x^n+k}$ and so $x^n+k=(ax)^{-1}$ and  $x^3=a^{n-2}$. From $n^2-n+1  \equiv 0 \pmod{q-1}$, it follows that $x^{3(n+1)}=a^{(n-2)(n+1)}=a^{-3}$. Since $\gcd(3,q-1)=1$, we obtain $a=x^{-(n+1)}$. Thus $x^n+k=(ax)^{-1}=x^n$ and so $k=0$, which is a contradiction. On the other hand, we have $\gcd(n-1,q-1)=1$ by $n^2-n+1  \equiv 0 \pmod{q-1}$. Thus $r:=\#\{a\in \mathbb{F}^{*}_q: a^{n-1}=1\}=1$.  Combining this with   the proof in the first paragraph, we conclude that there are   $(q+1)/3$ equivalence classes in $\mathscr{C}$.  This completes the proof.\end{proof}
	\begin{thm}
		Let $\mathscr{C}$ be the set of NMDS codes over $\mathbb{F}_{q}$ with  generator matrices of the form  \eqref{generator}. If $f(x)$ is the Cherowitzo o-polynomial, then there  are $q^2-1$ equivalence classes in $\mathscr{C}$.
	\end{thm}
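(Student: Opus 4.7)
The plan is to apply Corollary~\ref{equivalent1}: the number of monomial equivalence classes in $\mathscr{C}$ equals the number of orbits of the homography stabilizer $M$ of the Cherowitzo hyperoval $\mathcal{HO}$ acting on $\mathrm{PG}(2,q)\setminus\mathcal{HO}$. Since $|\mathrm{PG}(2,q)|=q^2+q+1$ and $|\mathcal{HO}|=q+2$, this external set has exactly $q^2-1$ elements, matching the claimed number of equivalence classes. Consequently, every $M$-orbit outside $\mathcal{HO}$ must be a singleton, and because $\mathrm{PG}(2,q)\setminus\mathcal{HO}$ contains four points in general position, this is equivalent (by the fundamental theorem of projective geometry) to saying $|M|=1$.

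The proof would therefore proceed via a preliminary lemma, analogous in spirit to Lemmas~\ref{regular},~\ref{irregular}, and~\ref{monomial}, asserting that the homography stabilizer of the Cherowitzo hyperoval in $\mathrm{PGL}(3,q)$ is trivial; the theorem then follows immediately from Corollary~\ref{equivalent1}. To prove this triviality directly, one writes $\theta\in M$ as a nonsingular $3\times 3$ matrix $A$ (up to scalar), distinguishes cases by the image of the secant line $X_1=0$ (which meets $\mathcal{HO}$ in the two ``extra'' points $(0,0,1)^{\mathrm{T}}$ and $(0,1,0)^{\mathrm{T}}$), and derives a functional equation on $f(x)=x^{2^e}+x^{2^e+2}+x^{3\cdot 2^e+4}$ forced by the requirement that the affine piece $\{(1,c,f(c))^{\mathrm{T}}:c\in\mathbb{F}_q\}$ be permuted among itself. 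Matching exponents modulo $q-1$ and exploiting the three distinct degrees $2^e$, $2^e+2$, and $3\cdot 2^e+4$ should rule out every non-scalar $A$. A cleaner alternative is to invoke the known determination of the full collineation stabilizer of the Cherowitzo hyperoval in $\mathrm{P}\Gamma\mathrm{L}(3,q)$ due to O'Keefe and Penttila, which consists of field automorphisms only, giving $|M|=1$ at once.

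The main obstacle is precisely the verification that $|M|=1$. Unlike the monomial o-polynomials in Lemmas~\ref{irregular} and~\ref{monomial}, where comparing a single leading term nails down the matrix entries, the trinomial structure of the Cherowitzo o-polynomial produces a highly over-determined but tangled system of exponent equations, and one cannot conclude from any single identity alone. Carefully unwinding these equations using the arithmetic of $q-1=2^m-1$ with $m$ odd (in particular $\gcd(3,q-1)=1$ together with the behavior of $\gcd(2^e-1,q-1)$) is the delicate part; citing the structural literature on the automorphism group of the Cherowitzo hyperoval is by far the most efficient route, and I would take that route.
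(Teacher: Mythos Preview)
Your approach is correct and coincides with the paper's: the paper simply cites the known fact that the Cherowitzo hyperoval has trivial homography stabilizer and then applies Corollary~\ref{equivalent1}. The only minor difference is the attribution---the paper invokes \cite[Corollary~4.4]{Bayens2007} (Bayens--Cherowitzo--Penttila) rather than O'Keefe--Penttila for the triviality of $M$.
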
  
	\begin{proof}
		By \cite[Corollary 4.4]{Bayens2007}, the Cherowitzo hyperoval has trivial homography stabilizer. The claim  then follows from Corollary \ref{equivalent1}.
	\end{proof}
	\begin{thm}\label{Payneoval}
		Let $\mathscr{C}$ be the set of NMDS codes over $\mathbb{F}_{q}$ with  generator matrices of the form  \eqref{generator}. If $f(x)$ is a Payne o-polynomial or  Adelaide o-polynomial, then there are $(q^2+q-2)/2$ equivalence classes in $\mathscr{C}$.
	\end{thm}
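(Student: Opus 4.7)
The plan is to continue the strategy of the preceding theorems: apply Corollary \ref{equivalent1} and reduce the problem to counting the orbits of the homography stabilizer $M$ of the Payne (resp.\ Adelaide) hyperoval $\mathcal{HO}$ on the complement $\mathrm{PG}(2,q)\setminus\mathcal{HO}$, a set of $q^2-1$ points (corresponding to the allowable choices of the last column $\mathbf{g}_{q+3}$ in \eqref{generator} once the hyperoval is fixed).

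The first step is to invoke the known description of the homography stabilizers: for both the Payne hyperoval and the Adelaide hyperoval, $|M|=2$ (citing the classical references on their collineation groups). Write $M=\{1,\tau\}$ and fix a matrix representative $A$ of $\tau$. Because $\tau^{2}=1$ in $\mathrm{PGL}(3,q)$ and every element of $\mathbb{F}_q^{\ast}$ is a square (as $q$ is even), $A$ can be rescaled so that $A^{2}=I$, whence $(A-I)^{2}=A^{2}+I=0$ in characteristic $2$. A short rank argument (rank $2$ would force the image of $A-I$ to be contained in its $1$-dimensional kernel) forces $A-I$ to have rank exactly $1$, so the fixed-point set $\mathrm{Fix}(\tau)$ on $\mathrm{PG}(2,q)$ is a projective line $\ell$ of $q+1$ points.

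The second, and main, step is to prove that $\ell$ is a secant of $\mathcal{HO}$, i.e.\ $|\ell\cap\mathcal{HO}|=2$. Since any line of $\mathrm{PG}(2,q)$ meets a hyperoval in $0$ or $2$ points, it suffices to exclude the external case $|\ell\cap\mathcal{HO}|=0$, which is done by exhibiting a single explicit fixed point of $\tau$ inside $\mathcal{HO}$, equivalently a column of \eqref{generator} mapped to itself. This is where the explicit form of the Payne polynomial $x^{1/6}+x^{3/6}+x^{5/6}$ (respectively the Adelaide polynomial $F_{2}(x)$) enters, and is the only place in the argument where case-by-case work with the specific o-polynomial is unavoidable. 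Granted this, $\tau$ fixes exactly $(q+1)-2=q-1$ points outside $\mathcal{HO}$.

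The remaining $q^{2}-1-(q-1)=q(q-1)$ external points are moved by $\tau$ and split into $q(q-1)/2$ orbits of size $2$; together with the $q-1$ singleton orbits this yields
\[
(q-1)+\frac{q(q-1)}{2}=\frac{(q-1)(q+2)}{2}=\frac{q^{2}+q-2}{2}
\]
orbits, which by Corollary \ref{equivalent1} is the claimed number of monomial equivalence classes. As a sanity check, had $\ell$ been external to $\mathcal{HO}$ the count would instead have been $(q^{2}+q)/2$, so the secant property is precisely what pins the answer down; this is the main obstacle and the one genuinely computational part of the proof.
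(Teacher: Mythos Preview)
Your plan is correct and yields the same count, but the route differs from the paper's. The paper simply quotes the explicit generator of the homography stabilizer from \cite{Thas1988} (Payne) and \cite{Payne2005} (Adelaide), namely
\[
\begin{pmatrix}0&1&0\\1&0&0\\0&0&1\end{pmatrix}
\quad\text{and}\quad
\begin{pmatrix}1&0&0\\\beta+\beta^{-1}&1&0\\0&0&1\end{pmatrix},
\]
reads off the fixed points directly (the lines $\{(1,1,m)\}\cup\{(0,0,1)\}$ and $\{(0,y,z)\}$ respectively), observes that exactly two of them lie in $\mathcal{HO}$, and counts. Your argument instead proves abstractly that any involution in $\mathrm{PGL}(3,q)$ with $q$ even is an elation whose fixed set is a single line, and then reduces everything to the dichotomy secant/external for that line. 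This is a nice conceptual packaging: it explains \emph{why} the only two possible answers are $(q^{2}+q-2)/2$ and $(q^{2}+q)/2$, and isolates the single bit of information that decides between them.

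One small correction to your description of the remaining computation: you say the secant check ``is where the explicit form of the Payne polynomial \dots\ enters'', but in fact the o-polynomial plays no role there. What you actually need is the explicit matrix for $\tau$ (which you must cite anyway to know $|M|=2$), and then the check is immediate: both matrices above fix $(0,0,1)^{\mathrm T}$, which lies in \emph{every} hyperoval $\mathcal{HO}(f)$ by Theorem~\ref{hyperoval}. So the ``genuinely computational'' step you flag is in practice a one-line observation, and your proof becomes complete (and arguably cleaner than the paper's) once you insert it.
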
  
	\begin{proof}
		By \cite{Thas1988}, the Payne hyperoval has homography stabilizer of order $2$, generated by the following matrix   	$\left(\begin{array}{ccc}0&1&0\\1&0&0\\0&0&1\end{array}\right)$. This stabilizer fixes  points of the form $(1,1,m)^{\mathrm{T}}$, where $m\in\mathbb{F}_{q}\setminus\{1\}$, and exchanges the remaining $q^2-q$ points not in the hyperoval  pairwise. The claim then follows from Corollary \ref{equivalent1}.
		
		By \cite{Payne2005}, the Adelaide hyperoval has homography stabilizer of order $2$, generated by the following matrix  	$\left(\begin{array}{ccc}1&0&0\\\beta+\beta^{-1}&1&0\\0&0&1\end{array}\right)$, where $\beta\in\mathbb{F}_{q^2}\setminus\{1\}$ satisfies $\beta^{q+1}=1$. This stabilizer fixes   points of the form $(0,1,m)^{\mathrm{T}}$, where $m\in\mathbb{F}^{*}_{q}$ and exchanges the remaining $q^2-q$ points not in the hyperoval  pairwise. The claim then follows from Corollary \ref{equivalent1}.
	\end{proof}
	For small $q$, each Subiaco hyperoval  falls into  one of the previously known classes of hyperovals. If $q=16$,  a Subiaco hyperoval is a Lunelli-Sce hyperoval \cite{Lunelli1958}. Its homography stabilizer has order $36$ and is isomorphic to $\mathrm{C}^{3}_{2}\rtimes \mathrm{C}_4$ \cite{Payne1978}. 	If $q=32$,  a Subiaco hyperoval is a Payne hyperoval \cite{Payne1985}. If $q=64$,  the two projectively distinct Subiaco hyperovals are the two Penttila-Pinneri irregular hyperovals \cite{Penttila1994}. The homography stabilizer is either $\mathrm{C}_5$ or $\mathrm{D}_{10}$. If $q=2^m>64$ with $m\not\equiv 2 \pmod{4}$,   the homography
	stabilizer of a Subiaco hyperoval is  a cyclic group of order $2$, similar to the case $q=32$ \cite{O'Keefe1996}. If $q=2^m>64$ with $m\equiv 2 \pmod{4}$,  the homography
	stabilizer of a Subiaco hyperoval is either $\mathrm{C}_5$ or $\mathrm{D}_{10}$, similar to the case $q=64$ \cite{Payne1995}. We  provide details for only  one case.
	\begin{thm}
		Let $\mathscr{C}$ be the set of NMDS codes over $\mathbb{F}_{q}$ with  generator matrices of the form  \eqref{generator}, where $q=2^m>64$ with $m\equiv 2 \pmod{4}$. If $f(x)=\frac{\delta^2(x^4+x)}{\left(x^2+\delta x+1\right)^2}+x^{1/2}$ is a Subiaco o-polynomial with $\delta^2+\delta+1=0$, then there are $(q^2-1)/5$ equivalence classes in $\mathscr{C}$.
	\end{thm}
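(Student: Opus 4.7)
The plan is to invoke Corollary \ref{equivalent1}, which reduces the problem to counting the orbits of the homography stabilizer $M$ of the Subiaco hyperoval $\mathcal{HO}(f)$ on the $q^{2}-1$ points of $\mathrm{PG}(2,q)\setminus\mathcal{HO}(f)$. From the paragraph immediately preceding the theorem, for $q=2^m>64$ with $m\equiv 2\pmod 4$ one has $M\cong \mathrm{C}_5$ or $M\cong \mathrm{D}_{10}$. My first step is therefore to pin down, for the specific choice $\delta^{2}+\delta+1=0$, that $M\cong \mathrm{C}_5$ and to exhibit an explicit generator $\theta\in\mathrm{PGL}(3,q)$ of order $5$ stabilizing $\mathcal{HO}(f)$. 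Such a generator may be read off from the Payne--Penttila--Pinneri description of the Subiaco automorphism group referenced above, and its action on $\mathcal{HO}(f)$ can be verified by direct substitution into the rational form of $f$.

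With $\theta$ in hand, the counting is purely arithmetic. Since $m\equiv 2\pmod 4$ one has $q\equiv 4\pmod 5$, so $\gcd(5,q-1)=1$; we may therefore lift $\theta$ to $\mathrm{GL}(3,q)$ satisfying $\theta^{5}=I$. The characteristic polynomial of this lift divides $x^{5}-1=(x-1)(x^{4}+x^{3}+x^{2}+x+1)$, and because the Frobenius map $x\mapsto x^{q}$ sends each primitive fifth root of unity to its inverse, the quartic factor splits over $\mathbb{F}_q$ into two irreducible quadratics. As $\theta\neq I$, its characteristic polynomial is forced to be $(x-1)g(x)$ with $g$ irreducible of degree $2$; in particular, $\theta$ has exactly one fixed point $P\in\mathrm{PG}(2,q)$.

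The hyperoval $\mathcal{HO}(f)$ is $M$-invariant, so its $q+2$ points are partitioned into orbits of size $1$ or $5$. Since $q+2\equiv 1\pmod 5$, exactly one point of $\mathcal{HO}(f)$ is fixed, and this point must coincide with $P$. Consequently $M$ acts semi-regularly on $\mathrm{PG}(2,q)\setminus\mathcal{HO}(f)$, producing exactly $(q^{2}-1)/5$ orbits, and Corollary \ref{equivalent1} yields the claimed number of monomial equivalence classes in $\mathscr{C}$.

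The main obstacle is step one: distinguishing the $\mathrm{C}_5$ case from the $\mathrm{D}_{10}$ case for the choice $\delta^{2}+\delta+1=0$, and writing down a workable order-$5$ matrix representative. If $M$ turned out to be $\mathrm{D}_{10}$, the extra involution would fuse pairs of size-$5$ orbits outside the hyperoval and halve the answer to $(q^{2}-1)/10$, so this identification cannot be skipped. Once the generator is secured, the remainder of the argument is the short divisibility and fixed-point count sketched above.
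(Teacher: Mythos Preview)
Your overall strategy matches the paper's: invoke Corollary \ref{equivalent1}, identify the homography stabilizer as $\mathrm{C}_5$ (the paper simply asserts this for the choice $\delta^2+\delta+1=0$ and writes down the explicit generator $\theta:(x,y,z)^{\mathrm{T}}\mapsto(y,x+\delta y,z+\delta^2 y)^{\mathrm{T}}$, citing \cite{Payne1995}), and then show that $\mathrm{C}_5$ acts without fixed points on $\mathrm{PG}(2,q)\setminus\mathcal{HO}(f)$.

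Where you diverge is in that last step. The paper's proof is a bare ``we can verify that the points not in the Subiaco hyperoval are not stabilized by $\theta$'', i.e.\ a direct computation with the given matrix. Your argument is cleaner and computation-free: lift $\theta$ to $\mathrm{GL}(3,q)$ with $\theta^5=I$ (possible since $q\equiv 4\pmod 5$ forces $\gcd(5,q-1)=1$), observe that the quartic factor of $x^5-1$ splits into two irreducible quadratics over $\mathbb{F}_q$ because $q\equiv -1\pmod 5$, deduce the characteristic polynomial is $(x-1)g(x)$ with $g$ irreducible (the case $(x-1)^3$ is excluded since $x^5-1$ is separable, forcing $\theta=I$), and conclude $\theta$ fixes a unique point of $\mathrm{PG}(2,q)$. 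The congruence $q+2\equiv 1\pmod 5$ then pins that fixed point inside $\mathcal{HO}(f)$. This is a nice structural argument that would work for any order-$5$ homography in this setting, whereas the paper's verification is tied to the specific matrix.

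Your flagged ``main obstacle'' (ruling out $\mathrm{D}_{10}$) is handled identically in the paper: it is taken from the cited literature rather than re-proved, so you are not missing anything the paper supplies.
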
  
	\begin{proof}
		In this case, the  Subiaco hyperoval has homography stabilizer $\mathrm{C}_5$, generated by $\theta: \begin{aligned}(x,y,z)^{\mathrm{T}}\mapsto(y,x+\delta y,z+\delta^2 y)^{\mathrm{T}}\end{aligned}$.  We can verify that  the points not in the Subiaco hyperoval are not stabilized by $\theta$. The claim then  follows from Corollary \ref{equivalent1}. 
	\end{proof}
	
	\begin{thm}
		Let $\mathscr{C}$ be the set of NMDS codes over $\mathbb{F}_{q}$ with  generator matrices of the form  \eqref{generator}. If $f(x)$ is the  O'Keefe-Penttila o-polynomial, then there are $341$ equivalence classes in $\mathscr{C}$.
	\end{thm}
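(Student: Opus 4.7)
The approach is to apply Corollary \ref{equivalent1}: the number of monomial equivalence classes in $\mathscr{C}$ equals the number of orbits of the homography stabilizer $M$ of the O'Keefe–Penttila hyperoval $\mathcal{HO}$ acting on the set of $q^2-1$ projective points of $\mathrm{PG}(2,q)\setminus\mathcal{HO}$, where $q=2^5=32$. Since $q^2-1=1023=3\cdot 341$, the target count strongly suggests that $|M|=3$ and that $M$ acts semi-regularly (fixed-point-freely) on the $1023$ points outside $\mathcal{HO}$.

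The first step is to recall from the literature on the O'Keefe–Penttila hyperoval in $\mathrm{PG}(2,32)$ the explicit description of its homography stabilizer. By the results of O'Keefe and Penttila \cite{O'Keefe1992}, the full collineation stabilizer of $\mathcal{HO}$ has order $15$, and intersecting with $\mathrm{PGL}(3,32)$ (the kernel of the map to $\mathrm{Aut}(\mathbb{F}_{32})\cong \mathrm{C}_5$) yields a cyclic homography stabilizer $M\cong \mathrm{C}_3$, generated by a concrete matrix $\theta$ expressible in terms of the element $\omega$ with $\omega^5=\omega^2+1$. I would record this generator explicitly.

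The second step is to verify that $\theta$ (and hence every nontrivial element of $M$) fixes no point of $\mathrm{PG}(2,32)\setminus\mathcal{HO}$. A point $P=(x,y,z)^{\mathrm{T}}$ is fixed by $\theta$ precisely when $(x,y,z)^{\mathrm{T}}$ is an eigenvector of the matrix representing $\theta$, so the fixed point set of $\theta$ consists of the projective points corresponding to its eigenspaces over $\mathbb{F}_{32}$. Since $\theta$ has order $3$ and $\gcd(3,q-1)=\gcd(3,31)=1$, the only eigenvalue of $\theta$ in $\mathbb{F}_{32}$ is $1$, and one checks directly that the $1$-eigenspace of $\theta$ in $\mathbb{F}_{32}^{3}$ is at most one-dimensional, contributing at most one fixed projective point. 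I would then verify, by direct computation (or in Magma \cite{Bosma}) using the explicit matrix for $\theta$, that this unique candidate fixed point (if any) actually lies on $\mathcal{HO}$, so that $\theta$ has no fixed point on $\mathrm{PG}(2,32)\setminus\mathcal{HO}$.

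Given this, $M$ acts semi-regularly on the $1023$ points of $\mathrm{PG}(2,32)\setminus\mathcal{HO}$, so every $M$-orbit has size exactly $3$, yielding $1023/3=341$ orbits. Corollary \ref{equivalent1} then immediately gives $341$ monomial equivalence classes in $\mathscr{C}$. The main obstacle is the semi-regularity check: pinning down the precise matrix form of the order-$3$ generator of $M$ and confirming that its unique $\mathbb{F}_{32}$-eigenline lies on $\mathcal{HO}$; this is a finite verification but must be done carefully, and is most cleanly handled by a short Magma computation using the explicit o-polynomial $F_3(x)$ from Table \ref{hyperovals}.
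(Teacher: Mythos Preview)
Your approach is correct and essentially identical to the paper's: invoke Corollary \ref{equivalent1}, use that the homography stabilizer $M$ is cyclic of order $3$, and check that its nontrivial elements fix no point outside $\mathcal{HO}$, giving $1023/3=341$ orbits. The paper records the explicit generator $\theta=\left(\begin{smallmatrix}0&1&0\\1&1&1\\0&0&1\end{smallmatrix}\right)$, whose characteristic polynomial is $(\lambda+1)(\lambda^2+\lambda+1)$ with the quadratic factor irreducible over $\mathbb{F}_{32}$ (as $\gcd(3,31)=1$) and whose unique $\mathbb{F}_{32}$-eigenline is spanned by $(1,1,1)^{\mathrm{T}}\in\mathcal{HO}$ (since every o-polynomial satisfies $f(1)=1$), so your eigenvector argument goes through by hand without Magma; one small correction is that the full collineation stabilizer in $\mathrm{P}\Gamma\mathrm{L}(3,32)$ already has order $3$, not $15$.
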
  
	\begin{proof}
		By \cite[Theroem 3.2]{O'Keefe1992}, the  O'Keefe-Penttila hyperoval has homography stabilizer of order $3$, generated by the  matrix 	$\left(\begin{array}{ccc}0&1&0\\1&1&1\\0&0&1\end{array}\right)$. It cannot fix any point not in  the  hyperoval. The  claim then follows from Corollary \ref{equivalent1}.
	\end{proof}
	\section{NMDS codes of dimension $4$}\label{s4}
	Recall that the point-set $$\mathcal{S}_h=\{P_x=(1,x,x^{2^h},x^{2^h+1})\mid x\in\mathbb{F}_q\}\cup\{P_{\infty}=(0,0,0,1)\}$$
	is  a $(q+1)$-arc in $\mathrm{PG}(3,q)$ for some prime power $q=p^m$, where $\gcd(m,h)=1$.
	Let the columns of matrix $G$ be constructed by the $q+1$ points of  $\mathcal{S}_h$ and one point not in $\mathcal{S}_h$. Then $G$ has the form:
	\begin{equation}\label{generator1}
		\begin{aligned}G=(\mathbf{g}_1,\ldots,\mathbf{g}_{q+1},\mathbf{g}_{q+2})=\begin{pmatrix}1&1&\cdots&1&1&0&a\\\alpha_1&\alpha_2&\cdots&\alpha_{q-1}&0&0&b\\\alpha_1^{2^h}&\alpha_2^{2^h}&\cdots&\alpha_{q-1}^{2^h}&0&0&c\\\alpha_1^{2^h+1}&\alpha_2^{2^h+1}&\cdots&\alpha_{q-1}^{2^h+1}&0&1&d\end{pmatrix},\end{aligned}
	\end{equation}
	where $\mathbb{F}^{*}_{q}=\mathbb{F}_{q}\setminus\{0\}=\{\alpha_1,\dots,\alpha_{q-1}\}$ and   $\mathbf{g}_{q+2}=(a,b,c,d)^{\mathrm{T}}\neq\mathbf{0}$. Let $\mathcal{S}_{G}$ denote the set of points $\{\mathbf{g}_1,\ldots,\mathbf{g}_{q+1},\mathbf{g}_{q+2}\}$.
	\subsection{ $q$ even}
	In this subsection, we suppose that $q=2^m\geq8$  such that $\gcd(h,m)=1$. Let $K_h$ be the stabilizer of $\mathcal{S}_h$ in  $\mathrm{PGL}(4,q)$. Then $K_h$ is isomorphic to $\mathrm{PGL}(2,q)$, and its elements are represented by the matrices 
	$$
	M_{a, b, c, d}=\left(\begin{array}{cccc}
		a^{2^h+1} & a^{2^h}c & a c^{2^h}&c^{2^h+1}   \\
		a^{2^h}b & a^{2^h} d &b c^{2^h}   &c^{2^h} d \\
		ab^{2^h} & b^{2^h} c& a d^{2^h} & c d^{2^h} \\
		b^{2^h+1}& b^{2^h} d  & b d^{2^h} & d^{2^h+1}
	\end{array}\right) \text {, }
	$$
	where $a,b,c,d\in\mathbb{F}_q$ satisfy $ad+bc=1$, see \cite[Lemma 21.3.13, Theorem 21.3.17]{Hirschfeld1985}. We now present  the  results on the orbits of points and planes under the action of $K_h$.
	\begin{lemma}\label{points_orbits}
		\cite[Proposition 3.2]{Ceria2023} The group $K_h$ has five orbits on the points of $\mathrm{PG}(3,q)$:
		\begin{enumerate}
			\item[(1)] $\mathcal{M}_1:= \mathcal{S}_h$ of size $q+1$;
			\item[(2)] $\mathcal{M}_2$ of size $q^2+q$, consisting of points on exactly two osculating planes;
			\item[(3)] $\mathcal{M}_3$ of size $(q^3-q)/6$, consisting of points on exactly three osculating planes;
			\item[(4)]  $\mathcal{M}_4$ of size $(q^3-q)/2$, consisting of points on exactly one osculating planes;
			\item[(5)]  $\mathcal{M}_5$ of size $(q^3-q)/3$, consisting of points not contained in any osculating planes.
		\end{enumerate}
		Let	$R$  be the set of points  on real chords but not in $\mathcal{S}_h$. If $q \equiv -1 \pmod{3}$, then $R=\mathcal{M}_4$; if $q \equiv 1 \pmod{3}$, then $R=\mathcal{M}_3\cup\mathcal{M}_5$. 
	\end{lemma}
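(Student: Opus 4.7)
The plan is to separate the $K_h$-orbits by the invariant
\[ n(P) := \#\{a \in \mathbb{F}_q \cup \{\infty\} : P \in \pi(P_a)\}, \]
then count the sizes of the level sets by double counting and verify that each level set is a single orbit using the fact that $K_h \cong \mathrm{PGL}(2,q)$ acts sharply $3$-transitively on $\mathcal{S}_h$ and hence on the set of osculating planes.

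First I would bound $n(P) \leq 3$ for $P \notin \mathcal{S}_h$, and in fact show that any triple of osculating planes meets in a unique point which lies on no fourth osculating plane. By $3$-transitivity, this reduces to the triple $\{\pi(P_0), \pi(P_1), \pi(P_\infty)\}$: direct elimination yields the single intersection point $(0,1,1,0)$, and the condition $\pi(P_x) \ni (0,1,1,0)$ reduces to $x^{2^h} + x = 0$, whose only solutions in $\mathbb{F}_q \cup \{\infty\}$ are $\{0, 1, \infty\}$, after one invokes $\gcd(h,m) = 1 \Rightarrow \gcd(2^h - 1, q - 1) = 1$. Combined with the easy observation $n(P) = 1$ for $P \in \mathcal{S}_h$, this gives $|\mathcal{M}_1| = q+1$ and $|\mathcal{M}_3| = \binom{q+1}{3} = (q^3 - q)/6$. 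Double counting pairs of osculating planes against points on their intersection lines yields $|\mathcal{M}_2| + 3|\mathcal{M}_3| = \binom{q+1}{2}(q+1)$, so $|\mathcal{M}_2| = q^2 + q$. Double counting plane–point incidences yields $(q+1) + |\mathcal{M}_4| + 2|\mathcal{M}_2| + 3|\mathcal{M}_3| = (q+1)(q^2+q+1)$, giving $|\mathcal{M}_4| = (q^3 - q)/2$, and $|\mathcal{M}_5| = (q^3 - q)/3$ by subtraction from $(q+1)(q^2+1)$.

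For transitivity of $K_h$ on each level set, $\mathcal{M}_3$ is handled directly by $3$-transitivity on triples of osculating planes, and $\mathcal{M}_2$ follows once one checks that the stabilizer of the pair $\{\pi(P_a), \pi(P_b)\}$ contains an involution exchanging the two $\mathcal{M}_2$-points of $\pi(P_a) \cap \pi(P_b)$ (explicit verification using $M_{0,1,1,0}$ in the matrix formula for $K_h$). For $\mathcal{M}_4$, fix the unique osculating plane $\pi(P_a)$ through $P$; its $K_h$-stabilizer is isomorphic to $\mathrm{AGL}(1,q)$ of order $q(q-1)$ and acts on the $q(q-1)/2$ points of $\mathcal{M}_4 \cap \pi(P_a)$, with point stabilizers realized by the fixed-point-free involutions of $\mathrm{AGL}(1,q)$. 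Transitivity on $\mathcal{M}_5$ follows from an analogous orbit–stabilizer argument, working either with the unique real chord through $P$ when $q \equiv 1 \pmod 3$ or with a suitable secant plane otherwise.

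Finally, for the identification of $R$: by $2$-transitivity of $K_h$ on ordered pairs in $\mathcal{S}_h$ it suffices to examine the chord $\overline{P_0 P_\infty}$, whose affine points are $(1,0,0,t)$ for $t \in \mathbb{F}_q$. The condition $(1,0,0,t) \in \pi(P_x)$ reduces to $x^{2^h+1} = t$, and the key arithmetic input is that $\gcd(2^h+1, q-1) = 2^{\gcd(2h,m)} - 1$ equals $1$ if $m$ is odd (so $q \equiv -1 \pmod 3$) and $3$ if $m$ is even (so $q \equiv 1 \pmod 3$). In the first case every affine point of the chord satisfies $n(P) = 1$, giving $R \subseteq \mathcal{M}_4$, hence $R = \mathcal{M}_4$ by matching cardinalities; in the second case exactly one third of the affine points sit on three osculating planes and two thirds on none, giving $R \subseteq \mathcal{M}_3 \cup \mathcal{M}_5$ with equality by cardinalities. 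The main obstacle will be the transitivity on $\mathcal{M}_4$ and $\mathcal{M}_5$, because the invariant $n(P)$ does not a priori prevent further splitting; handling this requires an explicit stabilizer computation inside $\mathrm{AGL}(1,q)$ and, for $\mathcal{M}_5$, some additional geometric bookkeeping to rule out smaller sub-orbits.
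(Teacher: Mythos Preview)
The paper does not give its own proof of this lemma; it is quoted verbatim from Ceria--Pavese \cite{Ceria2023} and used as a black box. So there is no in-paper argument to compare your proposal against.

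On its own merits, your outline is sound and the counting portion is fully correct: the computation $n(P)\le 3$ via the explicit triple $\{\pi(P_0),\pi(P_1),\pi(P_\infty)\}$, the three double counts giving the sizes of the level sets, and the identification of $R$ via the chord $\overline{P_0P_\infty}$ all go through exactly as you describe. One small correction: the identity you quote should read $\gcd(2^h+1,2^m-1)=(2^{\gcd(2h,m)}-1)/(2^{\gcd(h,m)}-1)$, which collapses to $2^{\gcd(2h,m)}-1$ only because $\gcd(h,m)=1$; your conclusion (value $1$ for $m$ odd, $3$ for $m$ even) is unaffected. Note also that the $R$ identification does not actually require transitivity on $\mathcal{M}_4$ or $\mathcal{M}_5$: once you know $R$ is $K_h$-invariant and contained in the relevant level set(s), equality follows from the cardinality match alone.

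The genuine gap, which you correctly flag, is transitivity on $\mathcal{M}_4$ and $\mathcal{M}_5$. Your $\mathrm{AGL}(1,q)$ approach on an osculating plane will work for $\mathcal{M}_4$ but is awkward; a cleaner uniform argument is to use chords. When the orbit in question lies in $R$, the dihedral stabilizer of a real chord has order $2(q-1)$ and an explicit computation with $M_{a,0,0,d}$ and $M_{0,b,c,0}$ on the points $(1,0,0,t)$ shows the point stabilizer has the required order ($2$, respectively $3$), giving transitivity. When the orbit is disjoint from $R$ (namely $\mathcal{M}_5$ for $q\equiv -1\pmod 3$ and $\mathcal{M}_4$ for $q\equiv 1\pmod 3$), the same calculation works with the \emph{imaginary} chords, i.e.\ the lines $\overline{P_\alpha P_{\alpha^q}}$ for $\alpha\in\mathbb{F}_{q^2}\setminus\mathbb{F}_q$, whose $K_h$-stabilizer is cyclic of order $2(q+1)$. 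Filling in these two stabilizer checks is routine and completes your proof.
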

	Recall that the osculating planes of $\mathcal{S}_h$  at $P_x$ and $P_{\infty}$ are defined by the equations $x^{2^h+1}X_1+x^{2^h}X_2+xX_3+X_4=0$ and $X_1=0$, respectively. Let $\pi(P)$ denote the osculating planes of $\mathcal{S}_h$ at $P$ and $\pi(\mathcal{M}_i)=\{\pi(P)| P\in\mathcal{M}_i\}$, where  $i=1,\dots,5$. The $K_h$-orbits on the  planes  of $\mathrm{PG}(3,q)$ can be described by the following lemma.
	\begin{lemma}\label{planes_orbits}
		\cite[Proposition 3.4]{Ceria2023} The group $K_h$ has five orbits on the  planes of $\mathrm{PG}(3,q)$:
		\begin{enumerate}
			\item[(1)] $\pi(\mathcal{M}_1)$ consisting of the $q+1$ osculating planes;
			\item[(2)] $\pi(\mathcal{M}_2)$ of size $q^2+q$, consisting of  planes containing   exactly two points of $\mathcal{S}_h$;
			\item[(3)] $\pi(\mathcal{M}_3)$ of size $(q^3-q)/6$, consisting of  planes containing  exactly three points of $\mathcal{S}_h$;
			\item[(4)]  $\pi(\mathcal{M}_4)$ of size $(q^3-q)/2$, consisting of  planes containing  exactly one point of $\mathcal{S}_h$;
			\item[(5)]  $\pi(\mathcal{M}_5)$ of size $(q^3-q)/3$, consisting of  planes containing  no point of $\mathcal{S}_h$.
		\end{enumerate}
	\end{lemma}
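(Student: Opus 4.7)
The plan is to exploit a duality between points and planes of $\mathrm{PG}(3,q)$ in order to transfer the point-orbit classification of Lemma \ref{points_orbits} to the plane setting. Define the correlation $\phi \colon \mathrm{PG}(3,q) \to \mathrm{PG}(3,q)^{*}$ sending a point $(a,b,c,d)$ to the plane $dX_{1}+cX_{2}+bX_{3}+aX_{4}=0$. A direct computation gives $\phi(P_{x}) = \pi(P_{x})$ for every $x \in \mathbb{F}_{q}$ and $\phi(P_{\infty}) = \pi(P_{\infty})$, so $\phi$ restricts to a bijection from $\mathcal{S}_{h}$ onto the set of osculating planes.

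First I would show that $\phi$ intertwines the $K_{h}$-action on points with the $K_{h}$-action on planes, up to an involutive automorphism $\sigma$ of $K_{h}$. Writing $\phi(P)$ in row form as $P^{T}J$ with $J$ the $4\times 4$ reversal matrix, the required equivariance $\phi(MP) = \sigma(M)\cdot \phi(P)$ reduces to the identity $\sigma(M) = (JM^{T}J)^{-1}$; this is checked on the explicit matrices $M_{a,b,c,d}$ preceding Lemma \ref{points_orbits}, whose anti-diagonal symmetry (mirroring the involution $(a,b,c,d) \leftrightarrow (d,c,b,a)$) guarantees $\sigma(K_{h}) = K_{h}$. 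Consequently $\phi$ sends each $K_{h}$-orbit on points to a $K_{h}$-orbit on planes of the same cardinality.

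Next I would translate the geometric characterizations. Since $P \in \pi(Q) \Longleftrightarrow Q \in \phi(P)$ for every $Q \in \mathcal{S}_{h}$, the number of osculating planes through $P$ equals $|\phi(P) \cap \mathcal{S}_{h}|$. Combining this with Lemma \ref{points_orbits} converts the five point-orbits $\mathcal{M}_{1},\ldots,\mathcal{M}_{5}$ into five plane-orbits of matching sizes: $\mathcal{M}_{1}$ corresponds to the set of osculating planes, $\mathcal{M}_{2}$ and $\mathcal{M}_{3}$ to planes meeting $\mathcal{S}_{h}$ in exactly two and three points respectively, $\mathcal{M}_{4}$ to non-osculating planes meeting $\mathcal{S}_{h}$ in exactly one point, and $\mathcal{M}_{5}$ to planes disjoint from $\mathcal{S}_{h}$. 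Since no four points of $\mathcal{S}_{h}$ are coplanar and the five sizes total $q^{3}+q^{2}+q+1$, these orbits exhaust the set of planes.

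The main obstacle is the equivariance check: $\phi$ is not induced from a $K_{h}$-invariant polarity, so one must verify explicitly that $\sigma(M_{a,b,c,d}) \in K_{h}$ for every admissible $(a,b,c,d)$ with $ad+bc=1$. As a fallback bypassing duality altogether, one can argue directly: the $3$-transitive action of $K_{h} \cong \mathrm{PGL}(2,q)$ on $\mathcal{S}_{h}$ immediately yields the orbits of $3$-point planes and of osculating planes; for $2$-point planes the stabilizer of a chord acts transitively both on the $q-1$ three-point planes through it and on the two residual planes; and the non-osculating $1$-point and $0$-point cases are then pinned down by showing the stabilizer of a representative has order $2$ and $3$ respectively, matching the required orbit sizes $(q^{3}-q)/2$ and $(q^{3}-q)/3$.
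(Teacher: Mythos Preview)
The paper does not supply its own proof of this lemma: it is quoted directly from \cite{Ceria2023} and used as a black box. Your duality argument via the correlation $\phi$ is correct and is in fact the approach implicit in the paper's notation $\pi(\mathcal{M}_i)$, which is nothing other than your $\phi$ applied to the point orbits $\mathcal{M}_i$.

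Two remarks on the details. First, the equivariance step is cleaner than you fear. A direct check shows that the anti-transpose of $M_{a,b,c,d}$ (that is, $JM_{a,b,c,d}^{T}J$) equals $M_{d,b,c,a}$, and in characteristic~$2$ one has $M_{d,b,c,a}^{-1}=M_{a,b,c,d}$; hence $\sigma(M_{a,b,c,d})=(JM_{a,b,c,d}^{T}J)^{-1}=M_{a,b,c,d}$, so $\sigma$ is the identity on $K_h$ and $\phi$ is genuinely $K_h$-equivariant. No case analysis is needed. Second, in your fallback sketch the only assertion that is not immediate from $2$- or $3$-transitivity is that the stabilizer of a chord acts transitively on the two residual two-point planes through it; this is true, but it is the one place where a short explicit verification (or an appeal to the involution in the chord stabilizer swapping the two planes) is required.
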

	Recall that  a $[q+2,4,q-2]$ NMDS code can be viewed as a point-set $\mathcal{S}$ of size $q+2$ in $\mathrm{PG}(3,q)$  satisfying the following conditions:  
	\begin{enumerate}
		\item[(1)] any three points of $\mathcal{S}$ generate a plane in $\mathrm{PG}(3,q)$;
		\item[(2)] there exist four points in $\mathcal{S}$ that lie on a plane in $\mathrm{PG}(3,q)$;
		\item[(3)] every five points of $\mathcal{S}$ generate $\mathrm{PG}(3,q)$.
	\end{enumerate}
	Let $\mathcal{C}$ be the linear code over $\mathbb{F}_{q}$ with the generator matrix $G$ defined in \eqref{generator1}. Suppose that the point $\mathbf{g}_{q+2}=(a,b,c,d)^{\mathrm{T}}$ does not lie on any   real chord. Then any three columns of matrix $G$ are linearly independent,  ensuring that condition (1) holds. Since the   first $q+1$ columns of $G$ constitute the $(q+1)$-arc $\mathcal{S}_h$,  we have $|\mathcal{H}_{u}\cap\mathcal{S}_h|\leq3$ for any plane $\mathcal{H}_{u}$. It follows that $|\mathcal{H}_{u}\cap \mathcal{S}_{G}|\leq4$, which implies that  condition (3) holds. We now verify condition (2) for different cases and provide some equivalence classes of NMDS codes.
	\begin{thm}\label{4_even}
		Let $\mathscr{C}$ be the set of linear codes over $\mathbb{F}_{q}$ with  generator matrices of the form  \eqref{generator1}. Then there are either  two or three  equivalence classes of NMDS codes in $\mathscr{C}$ depending on  whether $q \equiv 1 \pmod{3}$ or $q \equiv -1 \pmod{3}$. Moreover, the weight distributions of these codes are provided in Table \ref{enumerator1}.
	\end{thm}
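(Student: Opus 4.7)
The plan is to reduce the monomial-equivalence classification in $\mathscr{C}$ to a $K_h$-orbit problem on the position of the extra column $\mathbf{g}_{q+2}$, and then to use Lemma \ref{points_orbits} together with the NMDS axioms (N1)--(N3) to count the classes. First I would show that the number of equivalence classes in $\mathscr{C}$ equals the number of $K_h$-orbits meeting the admissible set of extra points. Given $\mathcal{C},\mathcal{C}'\in\mathscr{C}$ with extra points $\mathbf{g}_{q+2}$ and $\mathbf{g}'_{q+2}$, any monomial equivalence is realised by a homography $\theta\in\mathrm{PGL}(4,q)$ with $\theta(\mathcal{S}_G)=\mathcal{S}_{G'}$. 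If $\theta(\mathcal{S}_h)\neq\mathcal{S}_h$, the two $(q+1)$-arcs $\mathcal{S}_h$ and $\theta(\mathcal{S}_h)$ still share a $q$-arc (all of $\mathcal{S}_h$ except the preimage of $\mathbf{g}'_{q+2}$). Lemma \ref{unique2} then forces $\mathcal{S}_h=\theta(\mathcal{S}_h)$ for $q\geq 32$, and the remaining small cases $q=8,16$ can be checked directly with Magma. Thus $\theta\in K_h$ and $\theta(\mathbf{g}_{q+2})=\mathbf{g}'_{q+2}$.

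Once this reduction is in place, I would sort out which $K_h$-orbits yield NMDS codes. Lemma \ref{points_orbits} gives four orbits outside $\mathcal{S}_h$, namely $\mathcal{M}_2,\mathcal{M}_3,\mathcal{M}_4,\mathcal{M}_5$, and condition (N1) forces $\mathbf{g}_{q+2}\notin R$. By the last clause of Lemma \ref{points_orbits}, this excludes $\mathcal{M}_4$ when $q\equiv -1\pmod 3$ and excludes $\mathcal{M}_3\cup\mathcal{M}_5$ when $q\equiv 1\pmod 3$, so the candidate orbits are $\{\mathcal{M}_2,\mathcal{M}_3,\mathcal{M}_5\}$ and $\{\mathcal{M}_2,\mathcal{M}_4\}$ respectively. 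Condition (N3) is automatic since $\mathcal{S}_h$ is a $(q+1)$-arc. Condition (N2) becomes the statement that some plane in $\pi(\mathcal{M}_3)$ passes through $\mathbf{g}_{q+2}$; by $K_h$-transitivity on each orbit, it suffices to exhibit one witness per candidate orbit, and this is the main obstacle.

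To verify (N2) and compute the weight distribution, I would analyse a representative plane $\pi\in\pi(\mathcal{M}_3)$. Such a $\pi$ contains $3$ points of $\mathcal{S}_h$ together with the three real chords they span, hence $3(q-1)$ further points of $R$; the remaining $(q-1)^2$ points of $\pi$ lie in $\mathrm{PG}(3,q)\setminus(\mathcal{S}_h\cup R)$. Since $K_h$ acts transitively on $\pi(\mathcal{M}_3)$, the multiplicities $b_i:=|\pi\cap\mathcal{M}_i|$ depend only on $i$, and the double count
\begin{equation*}
|\mathcal{M}_i|\cdot n_i = |\pi(\mathcal{M}_3)|\cdot b_i
\end{equation*}
determines the number $n_i$ of planes of $\pi(\mathcal{M}_3)$ through any fixed point of $\mathcal{M}_i$. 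Taking the plane spanned by $P_0,P_1,P_\infty$ and testing each of its non-arc, non-$R$ points against the osculating-plane equation $x^{2^h+1}X_1+x^{2^h}X_2+xX_3+X_4=0$ would pin down the $b_i$ explicitly, hence $n_i$, confirming $n_i>0$ for every admissible orbit. By \eqref{A_{n-k}} one then has $A_{q-2}=(q-1)n_i$, and Lemma \ref{enumerators} converts this into the remaining entries of Table \ref{enumerator1}.

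Counting the admissible orbits thus gives three monomial-equivalence classes of NMDS codes when $q\equiv -1\pmod 3$ and two when $q\equiv 1\pmod 3$, matching the statement. The step I expect to be hardest is the explicit orbit-partition of the $(q-1)^2$ generic points on a $\pi(\mathcal{M}_3)$-plane, since this requires parametrising the plane, solving the osculating-plane equations for each candidate point, and tracking how the count of solutions varies with $q\bmod 3$; everything else reduces to routine incidence counting once Lemmas \ref{points_orbits}--\ref{planes_orbits} and the reduction to $K_h$-orbits are in hand.
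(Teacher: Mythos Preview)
Your proposal is correct and reaches the same conclusion, but the argument is organised differently from the paper's in two notable places.

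First, the separation of equivalence classes. You reduce monomial equivalence in $\mathscr{C}$ to the $K_h$-orbit of $\mathbf{g}_{q+2}$ via Lemma~\ref{unique2} (plus Magma for $q=8,16$), so that distinct orbits automatically give inequivalent codes. The paper never invokes Lemma~\ref{unique2}; instead it observes that codes with $\mathbf{g}_{q+2}$ in the same orbit are trivially equivalent, and then separates the orbits by the invariant $A_{q-2}$, which happens to take different values on the admissible orbits. Your route is more structural and would still work if two orbits accidentally shared the same weight distribution (exactly the situation the paper has to handle separately in Theorem~\ref{4_odd} for $q\equiv 0\pmod 3$), at the cost of the extra arc-uniqueness argument and the small-$q$ computer check.

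Second, the incidence numbers $n_i$. You propose to compute the $b_i=|\pi\cap\mathcal{M}_i|$ on a concrete $\pi\in\pi(\mathcal{M}_3)$ and recover $n_i$ by the double count $|\mathcal{M}_i|\,n_i=|\pi(\mathcal{M}_3)|\,b_i$; this is sound and is essentially how the reference \cite{Ceria2023} obtains its Table~1. The paper simply cites that table, reads off that each $n_i>0$ (settling (N2)) and that the $n_i$ depend only on the orbit, and then sets $A_{q-2}=(q-1)n_i$. So your ``hardest step'' is replaced in the paper by a one-line citation; if you wish to be self-contained your plan works, but it is substantially more laborious than what the paper actually does.
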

	\begin{proof}
		By Lemma \ref{points_orbits}, if $q \equiv -1 \pmod{3}$ and  $\mathbf{g}_{q+2}\in\mathcal{M}_4$  or    $q \equiv 1 \pmod{3}$ and $\mathbf{g}_{q+2}\in\mathcal{M}_3\cup\mathcal{M}_5$,  then the corresponding linear code is not an NMDS code. For the remaining cases,	we only need consider the number of nonzero vectors $u\in\mathbb{F}^4_q$ such that  $|\mathcal{H}_{u}\cap \mathcal{S}_{G}|=4$. Suppose that a nonzero vector $u\in\mathbb{F}^4_q$ satisfies this condition. Then $\mathbf{g}_{q+2}\in \mathcal{H}_{u}$ and so $|\mathcal{H}_{u}\cap\mathcal{S}_h|=3$. By Lemma \ref{planes_orbits}, all planes containing exactly three points of $\mathcal{S}_h$ form a $K_h$-orbit  $\pi(\mathcal{M}_3)$. This means the number of nonzero vectors $u\in\mathbb{F}^4_q$ such that  $|\mathcal{H}_{u}\cap\mathcal{S}_G|=4$ is equal to $q-1$ times the number of planes in $\pi(\mathcal{M}_3)$ through the point $\mathbf{g}_{q+2}$. By  \cite[Table 1]{Ceria2023}, these numbers are all great than 0 and only depend  on the $K_h$-orbit on points to which $\mathbf{g}_{q+2}$ belongs.   Thus, the remaining linear codes are all NMDS codes. Obviously, the codes corresponding to $\mathbf{g}_{q+2}$  in the same $K_h$-orbit are monomially equivalent, while the codes corresponding to $\mathbf{g}_{q+2}$ in different $K_h$-orbits are not monomially equivalent as they have different numbers of minimum weight codes.  Therefore, there are two or three equivalence classes of NMDS codes in $\mathscr{C}$ depending on whether $q \equiv 1 \pmod{3}$ or $q \equiv -1 \pmod{3}$. Finally, by Lemma \ref{enumerators}, the weight distributions of these codes can be fully determined and we provide them in Table \ref{enumerator1}.
	\end{proof}
	\begin{table}
		\begin{center}
			\caption{The weight distributions of NMDS codes in Theorem \ref{4_even} when $\mathbf{g}_{q+2}\in \mathcal{M}_i$.  }\label{enumerator1}
			\resizebox{\textwidth}{!}{
				\begin{tabular}{ccccc}
					\hline  & $\mathcal{M}_2$ & $\mathcal{M}_3$ &$\mathcal{M}_4$ &$\mathcal{M}_5$\\
					\hline
					$A_{q-2}$ & $(q-1)^2(q-2)/6$ &$(q-1)(q^2-q+4)/6$  &  $q(q-1)^2/6$& $(q^2-1)(q-2)/6$ \\
					$A_{q-1}$ & $(q-1)(q^3-q^2+14q-8)/6$ &$(q-1)(q-2)(q^2+q+8)/6$ &$q(q-1)(q^2-q+6)/6$  &$(q^2-1)(q^2-2q+8)/6$  \\
					$A_{q}$ & $3(q-1)(q^3-q+2)/2$ & $(q-1)(3q^2+q+10)/2$ & $(q-1)(3q^2+q+2)/2$&  $(q^2-1)(3q-2)/2$\\  
					$A_{q+1}$ & $(q-1)^2(3q^2+2q+8)/6$ &$(q-1)(q-2)(3q^2+5q+8)/6$  &$q(q-1)^2(3q+2)/6$ & $(q-1)(3q^3-q^2-2q+8)/6$ \\  
					$A_{q+2}$ & $(q-1)^3(q+1)/3$ & $(q^2-1)(q^2-2q+2)/3$ &$q^2(q-1)^2/3$ &$(q-1)(q^3-q^2-1)/3$  \\
					Condition & $q \equiv \pm1\pmod{3}$  & $q \equiv -1\pmod{3}$ &$q \equiv 1\pmod{3}$ & $q \equiv -1\pmod{3}$ \\
					\hline
			\end{tabular}}
		\end{center}
	\end{table}
	\begin{remark}
		Let $\mathbf{g}^{\mathrm{T}}_{q+2}=(0,0,1,0)\in\mathcal{M}_2$ in \eqref{generator1}. Then we can directly obtain the corresponding result from \cite[Theorem 15]{Ding2024} using Theorem \ref{4_even}.
	\end{remark}
	
	\begin{example}
		Let $q=8,h=1$,  and take $\mathbf{g}^{\mathrm{T}}_{q+2}$ as $(0,1,0,0),(1,0,1,1),(1,\zeta,1,1)$ respectively, where $\zeta$ is  a generator of $\mathbb{F}^{*}_8$. Using Magma \cite{Bosma}, we obtain three $[10,4,6]$ NMDS codes with the following weight enumerators:
		\[
		A(z)=1+49z^6+644z^7+609z^8+1764z^9+1029z^{10},
		\]
		\[
		A(z)=1+70z^6+560z^7+735z^8+1680z^9+1050z^{10},
		\] 
		\[
		A(z)=1+63z^6+588z^7+693z^8+1708z^9+1043z^{10},
		\]
		respectively, which coincides with the conclusion of Theorem \ref{4_even}.
		
		Let $q=16,h=1$, and take $\mathbf{g}^{\mathrm{T}}_{q+2}$ as $(0,1,0,0),(1,\zeta^3,1,1)$ respectively, where $\zeta$ is a generator of $\mathbb{F}^{*}_{16}$. Using Magma \cite{Bosma}, we obtain two $[18,4,14]$ NMDS codes with the following weight enumerators:
		\[
		A(z)=1+525z^{14}+10140z^{15}+5445z^{16}+30300z^{17}+19125z^{18}, 
		\]
		\[
		A(z)=1+600z^{14}+9840z^{15}+5895z^{16}+30000z^{17}+19200z^{18},
		\] 
		respectively, which coincides with the conclusion of Theorem \ref{4_even}.
	\end{example}

	\subsection{ $q$ odd}
	In this subsection, we consider the case with odd $q$. Recall that if $q$ is odd every $(q+1)$-arc is a twisted cubic. Thus, we suppose that $(q+1)$-arc is the point-set
	$$\mathcal{S}_1=\{(1,x,x^{2},x^{3})\mid x\in\mathbb{F}_q\}\cup\{(0,0,0,1)\}.$$
	Similarly to the even case, let $K_q$ be the stabilizer of $\mathcal{S}_1$ in  $\mathrm{PGL}(4,q)$. Then for $q>4$,  $K_q$ is also isomorphic to $\mathrm{PGL}(2,q)$, 
	and its elements are represented by the matrices 
	$$
	M_{a, b, c, d}=\left(\begin{array}{cccc}
		a^{3} & 3a^{2}c & 3ac^{2}&c^{3}   \\
		a^{2}b & a^{2}d+2abc & bc^{2}+2acd   &c^{2}d \\
		ab^{2} & b^{2}c+2abd & ad^{2}+2bcd & cd^{2} \\
		b^{2}& 3b^{2}d  & 3bd^{2} & d^{3}
	\end{array}\right) \text {, }
	$$
	where $a,b,c,d\in\mathbb{F}_q$ satisfy $ad-bc\neq0$, 
	as shown in \cite[Lemma 21.1.3]{Hirschfeld1985}. Under $K_q$, there are five orbits of points and planes. The results on point-orbits (when $q\not\equiv 0\pmod{3}$) and plane-orbits  coincide with those in  the case of even $q$ (see \cite[Chapter 21]{Hirschfeld1985} or \cite[Theorem 2.2]{Bartoli2020}). In the case $q\equiv 0\pmod{3}$, we have the following result.
	\begin{lemma}\label{points_orbits3}
		\cite[Lemma 21.1.3]{Hirschfeld1985}\cite[Theorem 2.2]{Bartoli2020} Let $q\equiv 0\pmod{3}$. The group $K_q$ has five orbits on the points of $\mathrm{PG}(3,q)$:
		\begin{enumerate}
			\item[(1)] $\mathcal{M}_1:= \mathcal{S}_1$ of size $q+1$;
			\item[(2)] $\mathcal{M}_2$ of size $q+1$, consisting of points on all osculating planes;
			\item[(3)] $\mathcal{M}_3$ of size $q^2-1$, consisting of points on exactly one osculating plane;
			\item[(4)]  $\mathcal{M}_4$ of size $(q^3-q)/2$, consisting of points on  a real chord;
			\item[(5)]  $\mathcal{M}_5$ of size $(q^3-q)/2$, consisting of points on an imaginary chord.
		\end{enumerate}
	\end{lemma}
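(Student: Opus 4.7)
The characteristic-$3$ phenomenon I would exploit is that the $q+1$ osculating planes of $\mathcal{S}_1$ degenerate onto a common line. The classical osculating-plane equation at $P_t=(1,t,t^2,t^3)$ is $-t^3X_1+3t^2X_2-3tX_3+X_4=0$; in characteristic $3$ the middle terms vanish, leaving $X_4=t^3X_1$ (and $X_1=0$ at $P_\infty$). Every such plane therefore contains the ``strange line'' $\ell:X_1=X_4=0$, and one checks directly that two distinct osculating planes meet exactly in $\ell$. This already identifies $\mathcal{M}_2$ with the $q+1$ points of $\ell$: since $K_q$ permutes $\mathcal{S}_1$ it permutes the osculating planes, hence preserves $\ell$, and the induced action on $\ell$ factors through the faithful transitive $\mathrm{PGL}(2,q)$-action on $\mathrm{PG}(1,q)$ (easily verified from the matrix form $M_{a,b,c,d}$).

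The orbit $\mathcal{M}_1=\mathcal{S}_1$ is immediate. For the chord orbits I would use that $K_q\cong\mathrm{PGL}(2,q)$ is sharply $3$-transitive on $\mathcal{S}_1$ and commutes with Frobenius, so it is transitive on real chords (unordered pairs of $\mathbb{F}_q$-points of $\mathcal{S}_1$) and on imaginary chords (Galois-stable pairs of $\mathbb{F}_{q^2}\setminus\mathbb{F}_q$-points). Taking the real chord through $P_0$ and $P_\infty$, its setwise stabilizer contains the diagonal subgroup $\{z\mapsto az:a\in\mathbb{F}_q^*\}$, which acts transitively on the $q-1$ chord-points off $\mathcal{S}_1$; an analogous argument over $\mathbb{F}_{q^2}$, then descended to $\mathbb{F}_q$, handles imaginary chords. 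Since no four $\mathcal{S}_1$-points are coplanar, distinct chords share no off-$\mathcal{S}_1$ point, so a direct count yields $|\mathcal{M}_4|=|\mathcal{M}_5|=(q^3-q)/2$, each a single $K_q$-orbit.

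The final, subtlest orbit is $\mathcal{M}_3$. A residue count gives $|\mathrm{PG}(3,q)|-2(q+1)-(q^3-q)=q^2-1$, matching the claim; geometrically I would identify these remaining points as the interior tangent points, i.e., the union of tangent lines to $\mathcal{S}_1$ with $\mathcal{S}_1$ and $\ell$ removed. The point is that the tangent at $P_t$ lies inside the osculating plane at $P_t$, so two tangents at distinct cubic points can meet only in the intersection of the two corresponding osculating planes, which is $\ell$; each tangent meets $\ell$ in precisely one point (namely $(0,1,-t,0)$ for finite $t$, and $(0,0,1,0)$ for $P_\infty$), leaving $q-1$ interior points per tangent and $(q+1)(q-1)=q^2-1$ overall, with no overlaps. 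Transitivity of $K_q$ on $\mathcal{M}_3$ then follows in two steps: $K_q$ is transitive on tangent lines (from transitivity on $\mathcal{S}_1$), and the stabilizer of the tangent at $P_\infty$, namely $\mathrm{Stab}_{K_q}(P_\infty)$ (the affine subgroup of $\mathrm{PGL}(2,q)$), acts on $\{(0,0,Y_3,Y_4)\}$ by scaling $Y_3$ by an arbitrary nonzero constant, hence transitively on the $q-1$ interior points. The main obstacle I anticipate is the case-split for imaginary chords (working over $\mathbb{F}_{q^2}$ and descending) and the verification of the tangent-interior description of $\mathcal{M}_3$; both hinge delicately on the characteristic-$3$ collapse of the osculating planes onto $\ell$.
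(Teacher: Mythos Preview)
The paper does not prove this lemma; it is quoted without argument from \cite{Hirschfeld1985} and \cite{Bartoli2020}. Your sketch reconstructs the classical argument and is correct. The characteristic-$3$ observation that every osculating plane $X_4=t^3X_1$ (together with $X_1=0$ at $P_\infty$) contains the axis $\ell:X_1=X_4=0$ is exactly what produces the extra short orbit $\mathcal{M}_2=\ell$, and your identification of $\mathcal{M}_3$ with the interior tangent points is the standard one.

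Two places deserve an extra line. First, your residue count shows that the complement of $\mathcal{M}_1\cup\mathcal{M}_2\cup\mathcal{M}_4\cup\mathcal{M}_5$ and the set of interior tangent points both have cardinality $q^2-1$, but equal size alone does not give equality of sets. The cleanest fix is to observe that once each of the five sets has been shown to be a single $K_q$-orbit, they are pairwise distinct (by size, together with the trivial checks $\mathcal{S}_1\cap\ell=\emptyset$ and $\mathcal{M}_4\cap\mathcal{M}_5=\emptyset$, the latter because a common point would force four cubic points over $\mathbb{F}_{q^2}$ to be coplanar); then the size sum equalling $|\mathrm{PG}(3,q)|$ forces a partition without separately verifying that tangent-interior points avoid chords. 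Second, for the imaginary-chord transitivity you flagged: if a torus element $g=M_{a,b,c,d}$ fixed the chord $P_sP_{s^q}$ pointwise, the eigenvalues of $g$ at the two ends would satisfy $(a+cs)^3=(a+cs^q)^3$, hence $c(s-s^q)=0$ by injectivity of cubing in characteristic $3$, forcing $c=0$ and then $g=1$. Thus the non-split torus of order $q+1$ acts freely, hence regularly, on the $q+1$ rational points of the chord, giving the transitivity you need.
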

	\begin{thm}\label{4_odd}
		Let $\mathscr{C}$  be the set of linear codes over $\mathbb{F}_{q}$ with  generator matrices of the form  \eqref{generator1}. If $q\not\equiv 0\pmod{3}$, then there are two or three equivalence classes of NMDS codes in $\mathscr{C}$, depending on whether $q \equiv 1 \pmod{3}$ or $q \equiv -1 \pmod{3}$. Moreover, the weight distributions of these codes  are the same as those in Theorem \ref{4_even}. If $q\equiv 0\pmod{3}$, then there are three equivalence classes of NMDS codes in $\mathscr{C}$, and their  weight distributions   are given in Table \ref{enumerator2}.
	\end{thm}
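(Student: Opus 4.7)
The plan is to mirror the geometric framework of Theorem \ref{4_even}, with the only essential change occurring in characteristic~$3$. Fix $\mathcal{S}_1$ as the twisted cubic and let $K_q$ denote its stabilizer in $\mathrm{PGL}(4,q)$. As before, two codes of the form \eqref{generator1} are monomially equivalent exactly when the additional columns $\mathbf{g}_{q+2}$ lie in a common $K_q$-orbit on $\mathrm{PG}(3,q)$, and the whole weight distribution is determined by $A_{q-2}$ via Lemma \ref{enumerators}; more precisely, $A_{q-2}=(q-1)\cdot\#\{\text{planes through }\mathbf{g}_{q+2}\text{ meeting }\mathcal{S}_1\text{ in exactly three points}\}$.

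For $q\not\equiv 0\pmod 3$, the five point-orbits and five plane-orbits of $K_q$ on $\mathrm{PG}(3,q)$, together with the incidence numbers between them, coincide with those recorded in Lemmas \ref{points_orbits} and \ref{planes_orbits} (as indicated in the paragraph preceding Lemma \ref{points_orbits3}). Hence the counting in the proof of Theorem \ref{4_even} applies verbatim: orbits producing non-arcs are ruled out as there, and the remaining orbits give two equivalence classes when $q\equiv 1\pmod 3$ and three when $q\equiv -1\pmod 3$, with exactly the weight distributions listed in Table \ref{enumerator1}.

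For $q\equiv 0\pmod 3$ I would invoke Lemma \ref{points_orbits3}. Placing $\mathbf{g}_{q+2}$ in $\mathcal{M}_1$ repeats a column of $G$, while placing it in $\mathcal{M}_4$ (points on a real chord) makes three columns of $G$ linearly dependent and so violates condition (N1); these orbits are therefore excluded. The remaining orbits $\mathcal{M}_2$, $\mathcal{M}_3$, $\mathcal{M}_5$ are the three candidates. For each $P\in\mathcal{M}_i$ with $i\in\{2,3,5\}$ I would compute, using the matrix description of $K_q$ together with orbit--stabilizer applied to the flag $(P,\pi)$ with $\pi$ a trisecant plane, the number $N_i$ of planes through $P$ cutting $\mathcal{S}_1$ in exactly three points. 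Verifying $N_i>0$ secures condition (N2); showing that $N_2,N_3,N_5$ are pairwise distinct separates the three monomial classes through their values of $A_{q-2}$; and Lemma \ref{enumerators} then populates the remaining entries $A_{q-1},A_q,A_{q+1},A_{q+2}$ of Table \ref{enumerator2} mechanically.

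The main technical obstacle is the last step: unlike Lemmas \ref{points_orbits} and \ref{planes_orbits}, the plane-orbit structure of $K_q$ in characteristic $3$ is not directly available from \cite{Ceria2023}, and in that characteristic the geometry is genuinely different (every osculating plane passes through a common point, the ``nucleus'' orbit $\mathcal{M}_2$). I would therefore determine the $K_q$-orbits on planes from scratch---separating osculating planes, trisecants, unisecants and external planes---and then carry out the flag counts for each $\mathcal{M}_i$ by analyzing the pointwise stabilizer of a generic $P\in\mathcal{M}_i$ inside $K_q$ and its action on the $(q+1)q(q-1)/6$ trisecants of $\mathcal{S}_1$. Once the three values $N_2,N_3,N_5$ are pinned down, the remainder of the theorem is an immediate consequence of Lemma \ref{enumerators}.
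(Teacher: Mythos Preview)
Your outline for $q\not\equiv 0\pmod 3$ matches the paper, and the incidence numbers you need in characteristic~$3$ are in fact already available in \cite{Bartoli2020}, so no from-scratch computation of plane orbits is required.

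There is, however, a genuine gap in your plan for $q\equiv 0\pmod 3$. You propose to separate the three classes $\mathcal{M}_2,\mathcal{M}_3,\mathcal{M}_5$ by showing that the trisecant counts $N_2,N_3,N_5$ are pairwise distinct. They are not: one finds $N_2=N_5=q(q-1)/6$ while $N_3=q(q-3)/6$, so the codes coming from $\mathcal{M}_2$ and $\mathcal{M}_5$ have \emph{identical} weight distributions (this is exactly what Table~\ref{enumerator2} records). Hence weight enumerators cannot distinguish these two orbits, and your argument as written would collapse the count to two classes rather than three. Relatedly, the assertion in your first paragraph that ``two codes are monomially equivalent exactly when the additional columns lie in a common $K_q$-orbit'' is precisely the point at issue: the forward implication is clear, but the converse is what must be established in the $\mathcal{M}_2$ versus $\mathcal{M}_5$ case, and it does not follow from orbit considerations alone.

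The paper fills this gap with a short geometric argument specific to characteristic~$3$. Suppose $\theta\in\mathrm{PGL}(4,q)$ realises a monomial equivalence between the code with $\mathbf{e}\in\mathcal{M}_2$ and the code with $\mathbf{f}\in\mathcal{M}_5$. If $\theta\in K_q$ then $\mathbf{e}^{\theta}=\mathbf{f}$, contradicting that $\mathcal{M}_2$ and $\mathcal{M}_5$ are distinct $K_q$-orbits; hence $\mathbf{e}^{\theta}$ must be one of the arc points $\mathbf{g}_i$. But by Lemma~\ref{points_orbits3} the point $\mathbf{e}$ lies on \emph{every} osculating plane of $\mathcal{S}_1$, and this property is carried by $\theta$, forcing $\mathbf{g}_i$ to lie on all osculating planes of the image arc---an impossibility. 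This extra step, exploiting the nucleus-type behaviour of $\mathcal{M}_2$ that is peculiar to $p=3$, is what your proposal is missing.
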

	\begin{proof}
		Note that the number of   planes containing exactly three points of $\mathcal{S}_1$  through a point of  $\mathcal{M}_i$ can be found in \cite[Table 1, 2]{Bartoli2020}. To prove the theorem, it suffices to show that in the case $q\equiv 0\pmod{3}$, the  linear codes corresponding to  $\mathbf{g}_{q+2}$ in the $K_q$-orbits $\mathcal{M}_2$ and  $\mathcal{M}_5$ are not monomially equivalent.  The remaining proofs  follow similarly to those in Theorem \ref{4_even}. Let $\mathcal{C}_1$ and $\mathcal{C}_{2}$ be linear codes with generator matrices $G_1=(\mathbf{g}_1,\ldots,\mathbf{g}_{q+1},\mathbf{e})$ and $G_{2}=(\mathbf{g}_1,\ldots,\mathbf{g}_{q+1},\mathbf{f})$ respectively, where $\{\mathbf{g}_1,\ldots,\mathbf{g}_{q+1}\}$ forms the $(q+1)$-arc $\mathcal{S}_1$ and $\mathbf{e}\in\mathcal{M}_2, \mathbf{f}\in\mathcal{M}_5$. Suppose that $\mathcal{C}_1$ and $\mathcal{C}_{2}$  are monomially equivalent. Then there  is a  map $\theta\in\mathrm{PGL}(4,q)$ such that the sets of projective points $\{\mathbf{g}^{\theta}_1,\ldots,\mathbf{g}^{\theta}_{q+1},\mathbf{e}^{\theta}\}=\{\mathbf{g}_1,\ldots,\mathbf{g}_{q+1},\mathbf{f}\}$. Obviously, $\theta\notin K_q$   since $\mathbf{e}$ and $\mathbf{f}$ belongs to different $K_q$-orbits. Therefore, we must have $\mathbf{e}^{\theta}=\mathbf{g}_{i}$ for some $i\in\{1,\ldots,q+1\}$. By Lemma \ref{points_orbits3}, the point $\mathbf{e}$ lies on all osculating planes and so does $\mathbf{e}^{\theta}=\mathbf{g}_{i}$, which is impossible. Thus $\mathcal{C}_1$ and $\mathcal{C}_{2}$  are  not monomially equivalent. This completes the proof.
	\end{proof}
	\begin{table}
		\begin{center}
			\caption{The weight distributions of NMDS codes in Theorem \ref{4_odd} when $\mathbf{g}_{q+2}\in \mathcal{M}_i$ and $q\equiv 0\pmod{3}$.  }\label{enumerator2}
			\resizebox{\textwidth}{!}{
				\begin{tabular}{ccccc}
					\hline  & $\mathcal{M}_2$ & $\mathcal{M}_3$  &$\mathcal{M}_5$\\
					\hline
					$A_{q-2}$ & $q(q-1)^2/6$ &$q(q-1)(q-3)/6$  & $q(q-1)^2/6$ \\
					$A_{q-1}$ & $q(q-1)(q^2-q+6)/6$ &$q(q-1)(q^2-q+14)/6$& $q(q-1)(q^2-q+6)/6$  \\
					$A_{q}$ & $(q-1)(3q^2+q+2)/2$ & $(q-1)(3q^2-3q+2)/2$ & $(q-1)(3q^2+q+2)/2$\\  
					$A_{q+1}$ & $q(q-1)^2(3q+2)/6$ &$q(q-1)(3q^2-q+6)/6$  &$q(q-1)^2(3q+2)/6$ \\  
					$A_{q+2}$ & $q^2(q-1)^2/3$ & $q(q-1)(q^2-q-1)/3$ & $q^2(q-1)^2/3$  \\
					\hline
			\end{tabular}}
		\end{center}
	\end{table}
	\begin{remark}
		In \cite{Ceria}, the authors referred to NMDS codes as \emph{NMDS-sets}. An NMDS-set is said to be \emph{complete} if it is maximal with respect to set inclusion. They showed that extending $\mathcal{S}_1$ by adding a point on a tangent line or an imaginary chord yields an NMDS-set, which can be completed by adding at most one or three additional points, respectively.
	\end{remark}
	\subsection{NMDS codes with parameters $[q+3,4,q-1]$}
	Let $q=p^m$ be a prime power  and $h$ be a positive integer.
	Define $4\times(q+3)$ matrix $G_3$ by
	\begin{equation}\label{g3}
		\begin{aligned}G_3=(\mathbf{g}_1,\ldots,\mathbf{g}_{q+1},\mathbf{g}_{q+2},\mathbf{g}_{q+3})=\begin{pmatrix}1&1&\cdots&1&1&0&a_1&a_2\\\alpha_1&\alpha_2&\cdots&\alpha_{q-1}&0&0&b_1&b_2\\\alpha_1^{2^h}&\alpha_2^{2^h}&\cdots&\alpha_{q-1}^{2^h}&0&0&c_1&c_2\\\alpha_1^{2^h+1}&\alpha_2^{2^h+1}&\cdots&\alpha_{q-1}^{2^h+1}&0&1&d_1&d_2\end{pmatrix},\end{aligned}
	\end{equation}
	where $\mathbb{F}^{*}_{q}=\mathbb{F}_{q}\setminus\{0\}=\{\alpha_1,\dots,\alpha_{q-1}\}$ and  $\mathbf{g}_{q+2},\mathbf{g}_{q+3}\neq\mathbf{0}$.
	By selecting specific $\mathbf{g}_{q+2}$ and $\mathbf{g}_{q+3}$, we can construct certain NMDS codes. For example, let $p=2$ and $h$ be a positive integer satisfying $\gcd(m,h)=1$, where $m\geq3$ is odd. Take $\mathbf{g}_{q+2}=(0,0,1,0)^{\mathrm{T}}\in\mathcal{M}_2$ and $\mathbf{g}_{q+3}=(0,1,0,0)^{\mathrm{T}}\in\mathcal{M}_2$ or $\mathbf{g}_{q+3}=(0,1,1,0)^{\mathrm{T}}\in\mathcal{M}_3$, where $\mathcal{M}_2$ and $\mathcal{M}_3$ are the point-orbits of $K_h$ defined in Lemma \ref{points_orbits}. Then the linear code generated by $G_3$ is a $[q+3,4,q-1]$ NMDS code (see \cite[Theorems 18 and 19]{Ding2024}). A simpler proof of this result can be provided via Theorem \ref{4_even}.
	\begin{thm}\label{NMDS}
		Let $q=2^m$ and $h$ be a positive integer satisfying $\gcd(m,h)=1$. Take $\mathbf{g}_{q+2}=(0,0,1,0)^{\mathrm{T}}\in\mathcal{M}_2$ and $\mathbf{g}_{q+3}=(0,1,0,0)^{\mathrm{T}}\in\mathcal{M}_2$ in \eqref{g3}. Let $\mathcal{C}_3$  be the linear code generated by $G_3$. Then $\mathcal{C}_3$ is a $[q+3,4,q-1]$ NMDS code with $A_{q-1}=(q-1)^2(q-2)/3$ if and only if $m\geq3$ is odd.
	\end{thm}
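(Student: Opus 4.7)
The plan is to verify the geometric conditions (N1)--(N3) for $\mathcal{S}_G$ in $\mathrm{PG}(3,q)$ and count the $4$-point planes in parallel, with the whole equivalence coming from the multiplicative structure of $x^{2^h+1}$ on $\mathbb{F}_q$. First I would observe that both $\mathbf{g}_{q+2}$ and $\mathbf{g}_{q+3}$ lie in the $K_h$-orbit $\mathcal{M}_2$ of Lemma~\ref{points_orbits}, which is disjoint from the set $R$ of points on real chords of $\mathcal{S}_h$; hence no extra point is collinear with two arc points. Since the line $\ell:=\langle\mathbf{g}_{q+2},\mathbf{g}_{q+3}\rangle=\{(0,y,z,0)^{\mathrm{T}}\}$ manifestly misses $\mathcal{S}_h$, no arc point is collinear with the two extra points either, and condition (N1) is immediate from the fact that $\mathcal{S}_h$ is a $(q+1)$-arc.

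To determine $d$, I would analyze planes $\mathcal{H}_u$ with $|\mathcal{H}_u\cap\mathcal{S}_G|\geq 5$. Every plane meets $\mathcal{S}_h$ in at most three points, so such a plane must contain $\ell$ and therefore has equation $\lambda X_1+\mu X_4=0$. Direct substitution shows that when $\mu=0$ or $\lambda=0$ the plane meets $\mathcal{S}_h$ in exactly one point, and otherwise in $\#\{x\in\mathbb{F}_q:x^{2^h+1}=\lambda/\mu\}$ points. Using $\gcd(h,m)=1$ together with the standard identity $\gcd(2^a-1,2^b-1)=2^{\gcd(a,b)}-1$, one finds $\gcd(2^h+1,2^m-1)=1$ if $m$ is odd and $=3$ if $m$ is even; by Lemma~\ref{root} the monomial $x^{2^h+1}$ is therefore a permutation of $\mathbb{F}_q$ in the odd case and $3$-to-$1$ on its image in the even case.

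For the only-if direction, when $m$ is even (so $m\geq 2$) some $\lambda/\mu\in\mathbb{F}_q^*$ has three preimages, producing a plane through $\ell$ with $3+2=5$ columns of $\mathcal{S}_G$, which violates (N3); direct inspection handles the trivial case $m=1$, where $\mathcal{C}_3$ turns out to be a $[5,4,2]$ MDS code. For the if direction, when $m\geq 3$ is odd every plane through $\ell$ contains exactly one arc point and hence exactly three columns of $\mathcal{S}_G$, while a plane off $\ell$ contains at most one of $\mathbf{g}_{q+2},\mathbf{g}_{q+3}$ and so at most $3+1=4$ columns. This gives $|\mathcal{H}_u\cap\mathcal{S}_G|\leq 4$ for every plane, whence $d\geq q-1$ and (N3). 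A plane contributing to $A_{q-1}$ is then a $3$-secant of $\mathcal{S}_h$ passing through exactly one of the extra points; since $\mathbf{g}_{q+2},\mathbf{g}_{q+3}\in\mathcal{M}_2$, applying Theorem~\ref{4_even} to each of the two $[q+2,4,q-2]$ subcodes obtained by deleting one extra point gives $A_{q-2}/(q-1)=(q-1)(q-2)/6$ such $3$-secants through each of them. The two families are disjoint because the preceding paragraph shows no plane through $\ell$ is a $3$-secant of $\mathcal{S}_h$, so there are $(q-1)(q-2)/3$ four-column planes, yielding $A_{q-1}=(q-1)^2(q-2)/3>0$; this establishes (N2) and pins down $d=q-1$.

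The genuine technical step is the gcd identity $\gcd(2^h+1,2^m-1)\in\{1,3\}$ and its translation via Lemma~\ref{root} into the permutation-versus-$3$-to-$1$ dichotomy for $x^{2^h+1}$; once that is in hand, the rest of the argument is routine plane-intersection counting supported by the orbit structure of Lemmas~\ref{points_orbits}--\ref{planes_orbits} and by the weight distributions already recorded in Theorem~\ref{4_even}.
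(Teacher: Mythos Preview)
Your proposal is correct and follows essentially the same approach as the paper: both arguments split planes according to how many of the extra points $\mathbf{g}_{q+2},\mathbf{g}_{q+3}$ they contain, invoke Theorem~\ref{4_even} (Table~\ref{enumerator1}) for the counts $(q-1)^2(q-2)/6$ when exactly one extra point lies in the plane, and handle the case of both extra points via the equation $u_1+u_4x^{2^h+1}=0$ together with the dichotomy $\gcd(2^h+1,2^m-1)\in\{1,3\}$ from Lemma~\ref{root}. Your write-up is slightly more explicit than the paper's in verifying (N1) via the line $\ell$ and in treating $m=1$ separately, but the substance is identical.
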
	 
	\begin{proof}
		First, we can verify  that any three columns of $G_3$ are linearly independent, i.e. $\mathcal{C}_3$ satisfy condition (1). Then, for any nonzero vector  $u=(u_1,u_2,u_3,u_4)\in\mathbb{F}^{4}_q$, we calculate  $|\mathcal{H}_{u}\cap \mathcal{S}_{G_3}|$, where $\mathcal{S}_{G_3}$ denotes the set of points $\{\mathbf{g}_1,\ldots,\mathbf{g}_{q+1},\mathbf{g}_{q+2},\mathbf{g}_{q+3}\}$. Suppose that $\mathbf{g}_{q+2}\in\mathcal{H}_{u}$ and $\mathbf{g}_{q+3}\notin\mathcal{H}_{u}$. By the proof of Theorem \ref{4_even}, we have   $|\mathcal{H}_{u}\cap \mathcal{S}_{G_3}|\leq4$ and the number of nonzero vectors $u$ satisfying this condition is $(q-1)^2(q-2)/6$ from Table \ref{enumerator1}. The case where $\mathbf{g}_{q+2}\notin\mathcal{H}_{u}$ and $\mathbf{g}_{q+3}\in\mathcal{H}_{u}$ also yields the same results.
		
		Finally, suppose that  $\mathbf{g}_{q+2},\mathbf{g}_{q+3}\in\mathcal{H}_{u}$. Then $u_2=u_3=0$. If $(1,x,x^{2^h},x^{2^h+1})\in\mathcal{H}_{u}$, then we have $u_1+u_4x^{2^h+1}=0$. If $m$ is odd, then $\gcd(2^h+1,q-1)=\gcd(2^h+1,2^m-1)=1$ since $\gcd(h,m)=1$. By Lemma \ref{root}, the equation $u_1+u_4x^{2^h+1}=0$ has exactly one solution, implying  $|\mathcal{H}_{u}\cap \mathcal{S}_{G_3}|=3$. Therefore,
		$|\mathcal{H}_{u}\cap \mathcal{S}_{G_3}|\leq4$ holds for any nonzero vector $u\in\mathbb{F}^{4}_q$, and the number of  vectors $u\in\mathbb{F}^{4}_q$ such that $|\mathcal{H}_{u}\cap \mathcal{S}_{G_3}|=4$ is $(q-1)^2(q-2)/6+(q-1)^2(q-2)/6=(q-1)^2(q-2)/3$. Thus,
		$\mathcal{C}_3$ is a   $[q+3,4,q-1]$ NMDS code with $A_{q-1}=(q-1)^2(q-2)/3$.  If $m$ is even, then $\gcd(2^h+1,q-1)=\gcd(2^h+1,2^m-1)=2^{\gcd(h,m)}+1=3$. By Lemma \ref{root}, the equation $u_1+u_4x^{2^h+1}=0$ has exactly three solutions, leading to   $|\mathcal{H}_{u}\cap \mathcal{S}_{G_3}|=5$. Hence, $\mathcal{C}_3$ is not an NMDS code.  This completes the proof.
	\end{proof}
	
	\begin{thm}
		Let $q=2^m$ and $h$ be a positive integer satisfying $\gcd(m,h)=1$. Take $\mathbf{g}_{q+2}=(0,0,1,0)^{\mathrm{T}}\in\mathcal{M}_2$ and $\mathbf{g}_{q+3}=(0,1,1,0)^{\mathrm{T}}\in\mathcal{M}_3$ in \eqref{g3}. Let $\mathcal{C}_4$ be the linear code generated by $G_3$. Then $\mathcal{C}_4$ is a $[q+3,4,q-1]$ NMDS code with $A_{q-1}=(q-1)(q^2-2q+3)/3$ if and only if $m\geq3$ is odd.
	\end{thm}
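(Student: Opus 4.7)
The plan is to mirror the strategy used for $\mathcal{C}_3$ in Theorem \ref{NMDS}: verify that $\mathcal{S}_{G_3}$ satisfies the axioms (N1)--(N3) of Section \ref{introduction}, then compute $A_{q-1}$ by bookkeeping the hyperplanes $\mathcal{H}_u$ that meet $\mathcal{S}_{G_3}$ in exactly four points. The $(q+1)$-arc structure of $\mathcal{S}_h$ already guarantees $|\mathcal{H}_u\cap\mathcal{S}_h|\leq 3$, so $|\mathcal{H}_u\cap\mathcal{S}_{G_3}|\leq 5$ to begin with; the core task is to control when equality $5$ can occur and to tally when equality $4$ is achieved.

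First, I would check that every three columns of $G_3$ are linearly independent. For triples inside $\mathcal{S}_h$ this is immediate. Since $m$ odd forces $q\equiv -1\pmod 3$, Lemma \ref{points_orbits} says neither $\mathcal{M}_2$ nor $\mathcal{M}_3$ lies in $R$; hence $\mathbf{g}_{q+2}$ and $\mathbf{g}_{q+3}$ lie on no real chord of $\mathcal{S}_h$, so any triple involving just one of them is independent. A direct $3\times 3$-minor argument on $\{\mathbf{g}_i,\mathbf{g}_{q+2},\mathbf{g}_{q+3}\}$ (and on $\{\mathbf{g}_q,\mathbf{g}_{q+2},\mathbf{g}_{q+3}\}$, $\{\mathbf{g}_{q+1},\mathbf{g}_{q+2},\mathbf{g}_{q+3}\}$) closes the remaining cases.

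Next, I would split the counting of $u=(u_1,u_2,u_3,u_4)\in\mathbb{F}_q^4\setminus\{\mathbf{0}\}$ according to which of $\mathbf{g}_{q+2},\mathbf{g}_{q+3}$ lies in $\mathcal{H}_u$. If neither does, $|\mathcal{H}_u\cap\mathcal{S}_{G_3}|\leq 3$. If exactly one does, the count of $u$ with $|\mathcal{H}_u\cap\mathcal{S}_{G_3}|=4$ equals the corresponding $A_{q-2}$ entry from Table \ref{enumerator1}: $(q-1)^2(q-2)/6$ for $\mathbf{g}_{q+2}\in\mathcal{M}_2$, and $(q-1)(q^2-q+4)/6$ for $\mathbf{g}_{q+3}\in\mathcal{M}_3$. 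If both lie in $\mathcal{H}_u$, the constraints $u_3=0$ and $u_2+u_3=0$ force $u=(u_1,0,0,u_4)$, so the incidences with $\mathcal{S}_h$ are governed by $u_1+u_4 x^{2^h+1}=0$, plus $\mathbf{g}_q$ (iff $u_1=0$) and $\mathbf{g}_{q+1}$ (iff $u_4=0$). Here Lemma \ref{root} decides the outcome: for $m$ odd with $\gcd(h,m)=1$ one has $\gcd(2^h+1,q-1)=1$, so $x^{2^h+1}$ is a permutation and the equation has a single solution, giving $|\mathcal{H}_u\cap\mathcal{S}_{G_3}|=3$; for $m$ even, $\gcd(2^h+1,q-1)=3$ produces hyperplanes carrying three points of $\mathcal{S}_h$ alongside $\mathbf{g}_{q+2},\mathbf{g}_{q+3}$, hence five collinear points and a codeword of weight $q-2$.

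This last dichotomy simultaneously yields both directions of the ``if and only if''. For the ``only if'' direction, $m$ even produces weight $q-2$ codewords, so $\mathcal{C}_4$ cannot have minimum distance $q-1$. For the ``if'' direction, the case-C analysis shows no $u$ is double-counted between the two subcases of case B, so
\[
A_{q-1}=\frac{(q-1)^2(q-2)}{6}+\frac{(q-1)(q^2-q+4)}{6}=\frac{(q-1)(q^2-2q+3)}{3},
\]
which is positive, securing (N2). The main obstacle is the case-C bookkeeping: verifying that for $m$ odd there is no hyperplane containing both $\mathbf{g}_{q+2},\mathbf{g}_{q+3}$ and three points of $\mathcal{S}_h$, which is precisely what the $\gcd(2^h+1,q-1)=1$ identity secures, and dually that for $m$ even such hyperplanes must exist.
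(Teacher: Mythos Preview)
Your proposal is correct and follows essentially the same route as the paper's own proof, which simply points back to Theorem~\ref{NMDS} and reads off $A_{q-1}$ as the sum of the $\mathcal{M}_2$ and $\mathcal{M}_3$ entries in the $A_{q-2}$ row of Table~\ref{enumerator1}. Your more detailed case split (neither / exactly one / both of $\mathbf{g}_{q+2},\mathbf{g}_{q+3}$ in $\mathcal{H}_u$) and your explicit check that the ``both'' case contributes nothing to the count when $m$ is odd---thereby preventing double-counting between the two Table~\ref{enumerator1} entries---are exactly what the paper's terse proof leaves to the reader. One cosmetic slip: in the $m$ even case you write ``five collinear points'' where you mean five \emph{coplanar} points (five points of $\mathcal{S}_{G_3}$ in a single hyperplane, giving a codeword of weight $q-2$).
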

	\begin{proof}
		The proof follows the approach of Theorem \ref{NMDS}. The value of $A_{q-1}$ is obtained by summing the entries in the second column $(q-1)^2(q-2)/6$ and the third column $(q-1)(q^2-q+4)/6$ of the second row in Table \ref{enumerator1}. 
	\end{proof}	 
	
	Let $h=1$ and $\gcd(3,q-1)=1$. Taking $\mathbf{g}_{q+2}=(0,0,1,0)^{\mathrm{T}}\in\mathcal{M}_2$ and $\mathbf{g}_{q+3}=(0,1,0,0)^{\mathrm{T}}\in\mathcal{M}_2$ in \eqref{g3}, the linear code generated by $G_3$ is a $[q+3,4,q-1]$ NMDS code (see   \cite[Theorem 9]{Xu2023}).  
	Similarly  to the proof of Theorem \ref{NMDS}, we further have the following theorem from Tables \ref{enumerator1} and \ref{enumerator2}.
	
	\begin{thm}\label{NMDS2}
		Let  $h=1$ and $q=p^m$. Take $\mathbf{g}_{q+2}=(0,0,1,0)^{\mathrm{T}}\in\mathcal{M}_2$ and $\mathbf{g}_{q+3}=(0,1,1,0)^{\mathrm{T}}\in\mathcal{M}_2$ in \eqref{g3}. Let $\mathcal{C}_5$ be the linear code generated by $G_3$. Then $\mathcal{C}_5$ is a $[q+3,4,q-1]$ NMDS code if and only if  $\gcd(3,q-1)=1$. Moreover, $A_{q-1}=q(q-1)^2/3$ if $p=3$ and $A_{q-1}=(q-1)^2(q-2)/3$ if $p\neq3$.
	\end{thm}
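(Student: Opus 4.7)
The plan is to follow the template of Theorem \ref{NMDS}: first reduce the NMDS property to a pencil condition on the planes through both added points, and then extract $A_{q-1}$ from the single-augmentation counts recorded in Theorems \ref{4_even} and \ref{4_odd}.

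The crux of the NMDS check is a pencil analysis. Since $\mathbf{g}_{q+2}=(0,0,1,0)^{\mathrm{T}}$ and $\mathbf{g}_{q+3}=(0,1,1,0)^{\mathrm{T}}$ span the line $X_1=X_4=0$, every plane through both has a normal of the form $u=(u_1,0,0,u_4)$ and equation $u_1X_1+u_4X_4=0$. A point $P_x\in\mathcal{S}_1$ lies on such a plane iff $u_1+u_4x^3=0$, and $P_\infty$ iff $u_4=0$. By Lemma \ref{root}, $x\mapsto x^3$ is a bijection on $\mathbb{F}_q$ when $\gcd(3,q-1)=1$ and is $3$-to-$1$ on $\mathbb{F}_q^*$ otherwise. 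So in the first case every plane through both added points meets $\mathcal{S}_1$ in at most one point and thus contributes $|\mathcal{H}_u\cap\mathcal{S}_{G_3}|\leq 3$; in the second case, a suitable $u_1/u_4$ yields a plane with three arc-points, so $|\mathcal{H}_u\cap\mathcal{S}_{G_3}|=5$ and $\mathcal{C}_5$ cannot be NMDS. The three-column independence is also governed by the same condition: the standard chord calculation for $\mathbf{g}_{q+3}$ on $P_xP_y$ reduces to $x^3=y^3$ with $x\neq y$, which is impossible exactly when $\gcd(3,q-1)=1$, and a direct inspection shows that the line $\mathbf{g}_{q+2}\mathbf{g}_{q+3}$ misses $\mathcal{S}_1$.

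Assuming $\gcd(3,q-1)=1$, I would then compute $A_{q-1}$ by classifying the nonzero $u$ with $|\mathcal{H}_u\cap\mathcal{S}_{G_3}|=4$ by which of the two added points lies on $\mathcal{H}_u$. The pencil analysis rules out the case where both added points and two arc-points share a plane, and the arc property rules out four arc-points with no added point. Hence $A_{q-1}$ is the sum over $i=2,3$ of the number of nonzero $u$ such that $\mathcal{H}_u$ contains $\mathbf{g}_{q+i}$ together with exactly three points of $\mathcal{S}_1$, and each summand is precisely the $A_{q-2}$ of the $[q+2,4,q-2]$ NMDS code obtained by adjoining $\mathbf{g}_{q+i}$ alone to $\mathcal{S}_1$. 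Since both added points lie in the $K_q$-orbit $\mathcal{M}_2$, the $\mathcal{M}_2$ entries of Tables \ref{enumerator1} and \ref{enumerator2} give $A_{q-2}=(q-1)^2(q-2)/6$ when $p\neq 3$ and $A_{q-2}=q(q-1)^2/6$ when $p=3$, and doubling yields the announced values of $A_{q-1}$.

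The main obstacle I expect is verifying the orbit membership uniformly across characteristics, because $\mathcal{M}_2$ is defined differently in Lemmas \ref{points_orbits} and \ref{points_orbits3}. Concretely, for each of $\mathbf{g}_{q+2}$ and $\mathbf{g}_{q+3}$ one substitutes into the generic osculating plane $-t^3X_1+3t^2X_2-3tX_3+X_4=0$ and into $\pi(P_\infty):X_1=0$ to record the set of parameters $t\in\mathbb{F}_q\cup\{\infty\}$ for which the point lies on the plane, and checks that in every characteristic this set matches the characterization of $\mathcal{M}_2$ given by the appropriate lemma (in characteristic $3$, lying on every osculating plane; otherwise, lying on exactly two osculating planes). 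Once this identification is in hand, the appeal to Tables \ref{enumerator1} and \ref{enumerator2} closes the argument without further enumeration.
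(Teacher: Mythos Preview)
Your approach is exactly the paper's: the paper's proof of this theorem is the single sentence ``Similarly to the proof of Theorem~\ref{NMDS}, we further have the following Theorem from Tables~\ref{enumerator1} and~\ref{enumerator2}'', and you have written out in detail precisely that template (pencil of planes through both added points for the NMDS check, then two table look-ups for the single-point augmentations to assemble $A_{q-1}$).

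One concrete warning about the verification step you flag as the main obstacle. When you substitute $\mathbf{g}_{q+3}=(0,1,1,0)^{\mathrm{T}}$ into the osculating-plane equation you will find, for $p\neq 3$, that it lies on exactly \emph{three} osculating planes (the equation $3t^2-3t=0$, respectively $t^2+t=0$ in characteristic $2$, gives $t=0,1$, and $\pi(P_\infty)$ also contains it), so this point is in $\mathcal{M}_3$, not $\mathcal{M}_2$ as asserted in the hypothesis. Feeding the $\mathcal{M}_3$ column of Table~\ref{enumerator1} into your sum would then give $A_{q-1}=(q-1)(q^2-2q+3)/3$, contradicting the stated value. The paragraph immediately preceding the theorem and the claimed $A_{q-1}$ are both consistent instead with $\mathbf{g}_{q+3}=(0,1,0,0)^{\mathrm{T}}$; with that choice the line through the two added points is still $X_1=X_4=0$, both points genuinely lie in $\mathcal{M}_2$ in every characteristic (for $p=3$ by Lemma~\ref{points_orbits3}, otherwise by the two-osculating-plane count $t=0$ and $\pi(P_\infty)$), and your argument goes through verbatim. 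So your plan is sound; it is the printed statement, not your method, that needs the correction.
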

	When $q=2^m$ with $m$ odd, Theorems \ref{NMDS} and \ref{NMDS2} show that $\mathcal{C}_3$ and $\mathcal{C}_5$ have identical weight enumerators by Lemma \ref{enumerators}. However, as noted in \cite[Remark 22]{Ding2024}, $\mathcal{C}_3$ and $\mathcal{C}_5$ are not monomially equivalent $[q+3,4,q-1]$ NMDS codes for $q=2^3,2^5,2^7$ and some possible values of $h$. Finally, we have the following proposition.
	
	\begin{prop}
		If $q=2^m$ with $m\geq7$ and $h\geq2$, then $\mathcal{C}_3$ and $\mathcal{C}_5$ are not monomially equivalent.
	\end{prop}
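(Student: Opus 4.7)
The plan is to suppose for contradiction that $\mathcal{C}_3$ and $\mathcal{C}_5$ are monomially equivalent and then exhibit a projective invariant of the underlying point-sets that disagrees. By the same translation of Proposition~\ref{equivalent} to dimension four, a monomial equivalence produces a homography $\theta \in \mathrm{PGL}(4,q)$ satisfying $\theta(\mathcal{S}_{G_3}) = \mathcal{S}_{G_5}$ as sets of points in $\mathrm{PG}(3,q)$.

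My first step would be to recover the underlying $(q+1)$-arc inside each $(q+3)$-point set. If $T \subseteq \mathcal{S}_{G_3}$ is any $(q+1)$-arc, then $|T \cap \mathcal{S}_h| \geq q-1$; because $m \geq 7$ yields $q \geq 128$ and hence $q-1 > q - \sqrt{q}/2 + 9/4$, Lemma~\ref{unique2} applied to the $(q-1)$-subarc $T \cap \mathcal{S}_h$ forces $T = \mathcal{S}_h$. The same reasoning identifies $\mathcal{S}_1$ as the unique $(q+1)$-arc in $\mathcal{S}_{G_5}$. Therefore $\theta(\mathcal{S}_h) = \mathcal{S}_1$, and $\theta$ sends the extra points $\{(0,0,1,0),(0,1,0,0)\}$ of $\mathcal{C}_3$ bijectively to the extra points $\{(0,0,1,0),(0,1,1,0)\}$ of $\mathcal{C}_5$. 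Moreover, by Lemma~\ref{planes_orbits} each $(q+1)$-arc $\mathcal{K}$ has exactly $q+1$ osculating planes, one through each point of $\mathcal{K}$, and this characterization depends only on the arc itself; hence $\theta$ carries osculating planes of $\mathcal{S}_h$ to osculating planes of $\mathcal{S}_1$, so the multiset
\[
\bigl\{\,\#\{\,P' \in \mathcal{K} : P \in \pi(P')\,\} \;:\; P \in \mathcal{S}_G \setminus \mathcal{K}\,\bigr\}
\]
is a projective invariant of the pair $(\mathcal{S}_G, \mathcal{K})$.

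The final step is a direct computation of this multiset on both sides. For $\mathcal{C}_3$ the osculating plane at $P_x$ is $x^{2^h+1}X_1 + x^{2^h}X_2 + xX_3 + X_4 = 0$ and at $P_\infty$ it is $X_1 = 0$; substituting $(0,0,1,0)$ forces $x=0$, and together with $P_\infty$ yields exactly two osculating planes, while substituting $(0,1,0,0)$ forces $x^{2^h}=0$, i.e.\ $x=0$, again yielding two. Thus the invariant multiset for $\mathcal{C}_3$ is $\{2,2\}$. For $\mathcal{C}_5$ the point $(0,0,1,0)$ again gives two osculating planes, but substituting $(0,1,1,0)$ into $x^3 X_1 + x^2 X_2 + x X_3 + X_4 = 0$ gives $x^2 + x = x(x+1) = 0$, contributing the osculating planes at $P_0$ and $P_1$; together with $X_1 = 0$ at $P_\infty$, this yields three osculating planes. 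So the multiset for $\mathcal{C}_5$ is $\{2,3\}$, contradicting its preservation under $\theta$. The principal obstacle is the first step, pinning down the $(q+1)$-arc uniquely inside the $(q+3)$-point configuration so that any purported homography is forced to respect the arcs; once Lemma~\ref{unique2} delivers this, the projective invariance of osculating-plane incidence is automatic and the contradiction reduces to the elementary substitution above.
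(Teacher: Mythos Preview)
Your argument and the paper's share the first half: both reduce a putative monomial equivalence to a homography $\theta$ with $\theta(\mathcal{S}_h)=\mathcal{S}_1$, via Lemma~\ref{unique2}. At that point the paper simply stops: by the classification of $(q+1)$-arcs in $\mathrm{PG}(3,q)$ for $q$ even (cf.\ \cite[Theorem~21.3.15]{Hirschfeld1985}), $\mathcal{S}_h$ is projectively inequivalent to the twisted cubic $\mathcal{S}_1$ once $h\ge 2$, so this is already the contradiction. Your route instead avoids invoking that classification and derives the contradiction from an explicit invariant, the multiset of osculating-plane counts at the two extra points.

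That idea works, but the sentence ``this characterisation depends only on the arc itself; hence $\theta$ carries osculating planes of $\mathcal{S}_h$ to osculating planes of $\mathcal{S}_1$'' is the weak link. Lemma~\ref{planes_orbits} does not give a point-set-intrinsic description of the osculating planes: both $\pi(\mathcal{M}_1)$ and $\pi(\mathcal{M}_4)$ consist of planes meeting the arc in exactly one point, so incidence with the arc alone cannot single them out. What you actually need is this: since $\theta(\mathcal{S}_h)=\mathcal{S}_1$, conjugation by $\theta$ sends the stabilizer $K_h$ onto $K_1$, so $\theta$ maps each $K_h$-orbit of points (or planes) bijectively to some $K_1$-orbit. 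By Lemma~\ref{points_orbits} the five point-orbit sizes $q{+}1$, $q^2{+}q$, $(q^3{-}q)/6$, $(q^3{-}q)/2$, $(q^3{-}q)/3$ are pairwise distinct, forcing $\theta(\mathcal{M}_i^{(h)})=\mathcal{M}_i^{(1)}$ for every $i$; in particular the ``number of osculating planes through a point'' is preserved. Your computation then places both extra points of $\mathcal{C}_3$ in $\mathcal{M}_2^{(h)}$, while the extra points of $\mathcal{C}_5$ lie in $\mathcal{M}_2^{(1)}$ and $\mathcal{M}_3^{(1)}$, yielding the contradiction. With this patch your proof is complete and, unlike the paper's, is self-contained in that it does not appeal to the projective classification of $(q+1)$-arcs.
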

	\begin{proof}
		Suppose that $\mathcal{C}_3$ and $\mathcal{C}_5$ are  monomially equivalent. Let $\mathcal{S}_{h}$ and $\mathcal{S}_{1}$ be the  $(q+1)$-arcs corresponding to the first $q+1$ columns of  generator matrices of $\mathcal{C}_3$ and $\mathcal{C}_5$, respectively. By analogy to Lemma    \ref{equivalent}, there exists a  map $\theta\in\mathrm{PGL}(4,q)$ such that  $\mathcal{S}^{\theta}_{h}\cup\{\mathbf{g}^{\theta}_{q+2},\mathbf{g}^{\theta}_{q+3}\}=\mathcal{S}_{1}\cup\{\mathbf{g}_{q+2},\mathbf{g}_{q+3}\}$. This implies that $|\mathcal{S}^{\theta}_{h}\cap \mathcal{S}_{1}|\geq q-3>q-\sqrt{q}/2+9/4$.  By Lemma \ref{unique2}, it follows  that $\mathcal{S}^{\theta}_{h}=\mathcal{S}_{1}$,  i.e., $\mathcal{S}_{h}$ is projectively equivalent to  $\mathcal{S}_{1}$, which is a contradiction. This completes the proof.
	\end{proof}
	
	\section{Conclusion}\label{s5}
	
	In this paper, we  investigated the monomial equivalence and construction of NMDS codes with dimensions 3 and 4 using group theory and geometric methods. In dimension 3, we proposed a criterion for the monomial equivalence of NMDS codes derived from hyperovals in $\mathrm{PG}(2,q)$. By analyzing the actions of the homography stabilizers of distinct hyperovals on the points not in hyperovals, we determined the equivalence classes of NMDS codes derived from augmenting hyperovals with a single point. 
	In dimension 4,  we also determined  the equivalence classes of NMDS codes derived from arcs in $\mathrm{PG}(3,q)$  with additional points, which included  new constructions of NMDS codes of length $q+2$. By analyzing the orbits of points and planes  of $\mathrm{PG}(3,q)$ under the stabilizers of arcs,  we provided a unified geometric perspective for prior constructions of NMDS codes (e.g., \cite{Ding2024,Xu2023}) and simplified the proofs through geometric invariants instead of algebraic computations. Finally, while the stabilizer-based approach is highly effective for codes derived from   arcs or hyperovals, its applicability to longer, less structured NMDS codes remains a challenging and interesting open problem.
	\section{Acknowledgements}
	
This work was supported by the National Natural Science Foundation of China under Grant No. 12501467 and the National Key Research and Development Program of China under Grant No. 2022YFA1004900.
	\vspace*{10pt}

	\begin{center}
		\scriptsize
		\setlength{\bibsep}{0.5ex}  
		\linespread{0.5}
		\bibliographystyle{plain}

	\end{center}
	
\end{document}